\newcommand{\image}[1]{\includegraphics[scale=0.55]{#1}}
\newcommand{\bigimage}[1]{\includegraphics[scale=0.85]{#1}}
\newcommand{\widefigure}[2]{
\begin{figure}[t]
\begin{center}
\makebox[\linewidth][c]{
#1
}
#2
\end{center}
\end{figure}
}
\definecolor{dark-red}{rgb}{0.4,0.15,0.15}
\definecolor{dark-blue}{rgb}{0.15,0.15,0.4}
\definecolor{medium-blue}{rgb}{0,0,0.5}
\definecolor{gray}{rgb}{0.5,0.5,0.5}
\theoremstyle{plain}
\newtheorem{proposition}{Proposition}
\newtheorem{claim}{Claim}
\newtheorem{lemma}{Lemma}
\newtheorem{corollary}{Corollary}
\newtheorem{theorem}{Theorem}
\newtheorem*{theorem*}{Theorem}
\newtheorem*{lemma*}{Lemma}
\theoremstyle{definition}
\newtheorem{observation}{Observation}
\newtheorem{definition}{Definition}
\newtheorem{reduction}{Reduction}
\newtheorem{template}{Reduction template}
\newenvironment{claimproof}{\begin{proof}\renewcommand{\qedsymbol}{\claimqed}}{\end{proof}\renewcommand{\qedsymbol}{\plainqed}}
\let\plainqed\qedsymbol
\newcommand{\C}{\mathcal{C}}
\newcommand{\Oh}{\mathcal{O}}
\newcommand{\Ohtilde}{\widetilde{\mathcal{O}}}
\newcommand{\Gc}{G^\circ}
\newcommand{\Ptd}{\ensuremath{\Pi}}
\newcommand{\MPtd}{\ensuremath{\Ptd_T}}
\newcommand{\binv}{\ensuremath{\beta^{-1}}\xspace}
\renewcommand{\root}{\ensuremath{\mathrm{\textsc{root}}}\xspace}
\renewcommand{\succ}{\ensuremath{\mathrm{\textsc{succ}}}\xspace}
\newcommand{\pred}{\ensuremath{\mathrm{\textsc{pred}}}\xspace}
\newcommand{\adh}{\ensuremath{\mathop{\mathrm{\textsc{adh}}}\xspace}}
\renewcommand{\top}{\ensuremath{\mathop{\mathrm{\textsc{top}}}\xspace}}
\newcommand{\lca}{\ensuremath{\mathop{\mathrm{\textsc{lca}}}\xspace}}
\newcommand{\cliquebound}{\ensuremath{\omega_{k, |M|}}\xspace}
\newcommand{\containment}{\ensuremath{\mathsf{NP \subseteq coNP/poly}}\xspace}
\newcommand{\opt}{\textsc{opt}\xspace}
\newcommand{\ChVDlong}{\textsc{Chordal Vertex Deletion}\xspace}
\newcommand{\ChVD}{\textsc{ChVD}\xspace}
\newcommand{\AChVDlong}{\textsc{Annotated} \ChVDlong{}\xspace}
\newcommand{\AChVD}{\textsc{A}-\ChVD{}\xspace}
\newcommand{\poly}{\mathbf{\mathrm{poly}}}
\newcommand{\LPsol}{\mathbf{x}^\ast}
\newcommand{\multicut}{\textsc{Multicut}\xspace}
\newcommand{\terms}{\mathcal{T}}
\newcommand{\Gdown}{G^{\downarrow}}
\newcommand{\dist}{\mathrm{dist}}
\title{Approximation and Kernelization for \\ Chordal Vertex Deletion%
\footnote{The majority of this work was done while the authors were participating in the ``Fine-Grained Complexity and Algorithm Design'' program at the Simons Institute for the Theory of Computing in Berkeley. Bart Jansen is supported by the Netherlands Organisation for Scientific Research, Veni grant ``Frontiers in Parameterized Preprocessing'' and Gravitation grant ``Networks''.
Marcin Pilipczuk is supported by Polish National Science Centre grant DEC-2012/05/D/ST6/03214.
}%
}
\author{Bart M.\ P.\ Jansen\thanks{Eindhoven University of Technology, Eindhoven, The Netherlands, \texttt{b.m.p.jansen@tue.nl}.}
\and Marcin Pilipczuk\thanks{University of Warsaw, Warsaw, Poland, \texttt{marcin.pilipczuk@mimuw.edu.pl}.}}
\begin{document}

\maketitle

\begin{abstract}
The \ChVDlong (\ChVD) problem asks to delete a minimum number of vertices from an input graph to obtain a chordal graph.
In this paper we develop a polynomial kernel for \ChVD under the parameterization by the solution size, as well
as $\poly(\opt)$ approximation algorithm.
The first result answers an open problem of Marx from 2006 [WG 2006, LNCS 4271, 37--48].
\end{abstract}

\section{Introduction}
%\paragraph{Background}
\newcommand{\intrographclass}{\mathcal{G}}

Many important combinatorial problems can be phrased in terms of vertex deletions: given an input graph~$G$ and an integer~$k$, is it possible to delete~$k$ vertices from~$G$ to ensure the resulting graph belongs to a graph class~$\intrographclass$? For example, \textsc{Vertex Cover} asks to delete~$k$ vertices to obtain an edgeless graph, while in (undirected) \textsc{Feedback Vertex Set} one deletes vertices to obtain an acyclic graph. By a classic result of Lewis and Yannakakis~\cite{LewisY80}, the \textsc{$\intrographclass$ Vertex Deletion} problem is NP-hard for all nontrivial hereditary graph classes~$\intrographclass$. This motivated intensive research into alternative algorithmic approaches to deal with these problems, such as polynomial-time approximation algorithms and fixed-parameter tractable (exact) algorithms, which test for a solution of size~$k$ in time~$f(k) \cdot n^{\Oh(1)}$ for some arbitrary function~$f$. The notion of kernelization, which formalizes efficient and effective preprocessing, has also been used to cope with the NP-hardness of vertex deletion problems. A \emph{kernelization} (or \emph{kernel}) for a parameterized graph problem is a polynomial-time preprocessing algorithm that reduces any instance~$(G,k)$ to an equivalent instance~$(G',k')$ whose total size is bounded by~$g(k)$ for some function~$g$, which is the \emph{size} of the kernel. While all fixed-parameter tractable problems have a kernel, a central challenge is to identify problems for which the size bound~$g(k)$ can be made polynomial, a so-called \emph{polynomial kernel}; we refer to surveys~\cite{SurveyKratsch14,SurveyLokshtanov11,SurveyLokshtanovMS12} for recent examples.

The paradigm of parameterized complexity fits graph modification problems particularly well. Assume that an input graph $G$ represents results of an experiment or measurement. The experimental setup could result in the fact that~$G$ has certain properties, and therefore belongs to a specific graph class~$\intrographclass$, if the measurements were completely error-free. Due to measurement errors, graph $G$ might not be in $\intrographclass$, but close to a member of $\intrographclass$. Graph modification problems then capture the task of detecting and filtering out the errors. If the number of measurement errors is small, the solution size of the corresponding modification problem is small; an ideal target for parameterized algorithms.

The complexity of \textsc{Vertex Cover} and \textsc{Feedback Vertex Set} is well-understood with respect to all three aforementioned algorithmic paradigms: approximation algorithms, fixed-parameter algorithms, and polynomial kernelization. In this paper we analyze a generalization that has remained much more elusive: deleting vertices to make a graph \emph{chordal}, i.e., to ensure the resulting graph contains no induced cycles of length at least four (\emph{holes}). While \textsc{Vertex Cover} and \textsc{Feedback Vertex Set} have long been known to be fixed-parameter tractable~\cite{Bodlaender93,BussG93}, to have constant-factor approximation algorithms~\cite{BafnaBF99,BeckerG96,NemhauserT75}, and to have kernels of polynomial size~\cite{BodlaenderD10,BurrageEFLMR06,NemhauserT75,Thomasse10}, our understanding of \ChVDlong (\ChVD) is lacking behind greatly. The problem is known to be fixed-parameter tractable by a result of Marx~\cite{Marx10} and the current-best runtime is~$2^{\Oh(k \log k)} \cdot n^{\Oh(1)}$~\cite{CaoM15}. The kernelization and approximation complexity of \ChVD has remained unresolved, however. The question whether \ChVD has a polynomial kernel or not has been open for 10 years~\cite{Marx06d}, and has been posed by several groups of authors~\cite{CaoM15,FominSV13,HeggernesHJKV13} and during the 2013 Workshop on Kernelization~\cite{worker2013-opl}. The approximability of \ChVD is similarly open. To the best of our knowledge, nothing was known prior to this work.

% For which vertex deletion problems that have poly kernels do these properties fail? Vertex deletion to:
% Unit / proper interval graphs (the claw allows you to rule out a lot)
% OCT
% Deletion to Konig graphs? Is that covered by a recent rep-set paper?

\paragraph{Results}

We resolve the kernelization complexity of \ChVD and show that the problem has a kernel of polynomial size. Our main data reduction procedure is summarized by the following theorem.

\newcommand{\chvdkerneltheorem}{There is a polynomial-time algorithm that, given an instance~$(G,k)$ of \ChVDlong and a modulator~$M_0 \subseteq V(G)$ such that~$G - M_0$ is chordal, outputs an equivalent instance~$(G',k')$ with~$\Oh(k^{45} \cdot |M_0|^{29})$ vertices and~$k' \leq k$.}

\begin{theorem} \label{thm:chvd:kernel}
\chvdkerneltheorem
\end{theorem}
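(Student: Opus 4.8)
The plan is to design a bounded number of safe reduction rules, apply them exhaustively, and show that an instance to which none applies already has the claimed size; the rules will only ever delete a vertex that provably lies on no hole of~$G$, delete a vertex that provably belongs to every solution of size at most~$k$ (decreasing~$k$ accordingly), or replace a long induced path of~$G - M_0$ by a short one with the same endpoints, so $k' \le k$ is maintained automatically. Throughout write $H := G - M_0$ for the chordal part and fix a clique tree~$T$ of~$H$, rooted arbitrarily: its nodes are the maximal cliques of~$H$, and its bags form a tree decomposition of~$H$. The first, easy, rule deletes every vertex of~$H$ that lies on no hole of~$G$; after it, every remaining vertex of~$H$ ``sees'' the modulator through some hole.

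First I would bound the part of the clique tree that interacts with~$M_0$. For each $m \in M_0$, let $T_m$ be the minimal subtree of~$T$ spanning all bags that meet $N(m)$, and mark the pairwise lowest common ancestors of the subtrees $\{T_m : m \in M_0\}$ together with all nodes that branch inside $\bigcup_{m} T_m$; this is a set of marked nodes of size polynomial in~$|M_0|$. Next, using Menger/flow arguments I would argue that a large adhesion (the intersection of two adjacent bags) lying outside the marked region certifies a small vertex set whose deletion destroys a large family of pairwise parallel holes, so that either the budget~$k$ forces a deletion --- decreasing~$k$ --- or the large adhesion can be shown spurious and simplified; hence we may assume all adhesions outside the marked region have bounded size. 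Deleting the marked nodes then leaves~$T$ as a forest of \emph{free paths} of bags, each of bounded adhesion.

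Along a free path the adhesions are laminar, so a sufficiently long free path contains a long run of consecutive bags sharing one fixed adhesion~$A$; after re-applying the first rule this run behaves like a corridor that every induced path of~$H$ --- and hence every hole of~$G$ --- must traverse through the separator~$A$ on each side. Inside such a corridor I would identify an irrelevant maximal clique and excise it (equivalently, shorten a long induced subpath to a short one), and independently I would replace each single bag by a bounded set of representative vertices chosen so that every hole through that clique is preserved; the latter is possible because an induced cycle meets a clique in at most two vertices, which must be consecutive on the cycle, so what matters about a bag is only finitely much information about its vertices' neighbourhoods toward~$M_0$ and toward the two incident adhesions. When no rule applies, $T$ has $\poly(k,|M_0|)$ nodes, every free path has $\poly(k,|M_0|)$ length, and every bag has $\poly(k,|M_0|)$ vertices, and multiplying these bounds through the nested marking steps yields the stated $\Oh(k^{45} \cdot |M_0|^{29})$.

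The hard part will be the safeness of the corridor-shortening rule: excising a maximal clique from the middle of a long homogeneous run must neither destroy a hole whose presence some size-$k$ solution was exploiting, nor create a new one. This requires a full case analysis of how a hole of~$G$ can thread the corridor --- how many of its vertices lie in~$M_0$, which of its induced subpaths run inside~$H$, and how these enter and leave through the fixed separator~$A$ --- together with a rerouting argument showing that anything the corridor contributes is already contributed by a short parallel piece of it. Coordinating several modulator vertices whose neighbourhoods all pass through the same region of the clique tree, while keeping the marking and representative-selection steps polynomial, is where most of the technical effort goes and is the origin of the large exponents.
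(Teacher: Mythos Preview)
The proposal misses the paper's central structural step. Before any clique-tree marking can succeed, the paper first \emph{tidies the modulator}: for each $v\in M_0$ it applies an Erd\H{o}s--P\'osa-type argument (Lemma~\ref{lemma:algorithm:packing:vs:covering}) to $G-(M_0\setminus\{v\})$, either finding a $v$-flower of order $k+1$ (so $v$ must be deleted and $k$ decreased) or finding $\Oh(k)$ vertices whose removal makes that graph chordal. Adding these hitting sets to $M_0$ yields a larger modulator $M$ of size $\Oh(k|M_0|)$ with the property that $G-(M\setminus\{v\})$ is chordal for \emph{every} $v\in M$. This single-vertex chordality is what drives everything downstream: it forces $N_{G-M}(C)$ to be a clique for every connected component $C$ of $G(\neg v)$ (Proposition~\ref{prop:NC-clique}), and that clique-boundary structure is exactly what makes the later ``component with two clique boundaries'' picture, the clique-shrinking rule, and the component-count bounds work. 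Your plan operates directly with $M_0$ and never obtains this property, so the corridor picture you describe has no foundation; with a raw modulator a single component of $G-M_0$ can interact with $M_0$ in ways far more complicated than ``enter and leave through a fixed separator $A$''.

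Two further concrete gaps. First, ``replace each single bag by a bounded set of representative vertices \ldots\ because an induced cycle meets a clique in at most two vertices'' is not a kernelization argument: you must preserve the yes/no answer under every deletion of at most $k$ other vertices, not merely preserve the current hole family. The paper's clique-shrinking rule (Lemma~\ref{lem:red-cliques}) needs a four-way marking scheme indexed by triples of modulator vertices and by the sets $Q^{xy}$ of bottommost $xy$-good nodes, together with distance-based tie-breaking in the clique tree; none of this is available without the tidying step. Second, the claim that ``along a free path the adhesions are laminar'' is false: consecutive adhesions on a path in a clique tree satisfy an interval property (each vertex appears in a contiguous range) but are not nested in general, so the promised ``long run of consecutive bags sharing one fixed adhesion $A$'' does not materialise, and the corridor-shortening rule as you describe it cannot even be set up.
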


To be able to apply this reduction algorithm, a modulator~$M_0$ is needed to reveal the structure of the chordal graph~$G - M_0$. If \ChVD would have a polynomial-time constant-factor approximation algorithm, or even a $\poly(\opt)$-approximation algorithm, one could bootstrap Theorem~\ref{thm:chvd:kernel} with an approximate solution as modulator to obtain a polynomial kernel for \ChVD. Prior to this work, no such approximation algorithm was known. We developed a $\poly(\opt)$-approximation algorithm to complete the kernelization, which we believe to be of independent interest.

\begin{theorem}\label{thm:apx}
There is a polynomial-time algorithm that, given an undirected graph $G$ % with $n$ vertices 
and an integer $k$,
either correctly concludes that $(G,k)$ is a no-instance of \ChVD
or computes a set $X \subseteq V(G)$ of size $\Oh(k^4 \log^2 k)$
such that $G-X$ is chordal.
\end{theorem}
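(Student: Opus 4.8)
The plan is to remove an optimal solution one vertex at a time: in $\Oh(k)$ rounds I will delete a set of $\Oh(k^{3}\log^{2}k)$ vertices that provably meets some optimal solution, so that the optimum drops by at least one, or else produce $k+1$ vertex-disjoint holes and report a no-instance. Since a graph is chordal precisely when it has no \emph{hole} (an induced cycle on at least four vertices), after at most $k$ rounds the graph is chordal, and the deleted vertices total $\Oh(k^{4}\log^{2}k)$. It thus suffices to describe a single round, on a graph whose optimum is some $k'\le k$.

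In one round I would first dispose of the short obstructions. A shortest hole can be found in polynomial time---iterate over a vertex $v$ and a non-adjacent pair $a,b\in N(v)$, take a shortest $a$--$b$ path in $G-(N[v]\setminus\{a,b\})$, and note that the shortest hole arises this way. So I repeatedly take a shortest hole $H$, and while $|H|$ stays below a threshold $c=\poly(k)$, I add $H$ to a packing and delete its vertices. If the packing ever exceeds $k$ we stop and report a no-instance; otherwise $\Oh(ck)$ vertices have been deleted and the remaining graph is $C_{4}$-free with no hole of length at most $c$. On such a graph the surviving holes are geodesic: by the standard shortest-hole versus shortest-path argument, a hole through a vertex $v$ with $H$-neighbours $a,b$ of $v$ is $v$ together with an induced $a$--$b$ path of length at least three whose interior avoids $N(v)$, and a shortest such path always yields a hole whenever one exists.

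Now take a shortest remaining hole $C$; if there is none the graph is chordal and the round is vacuous. Every optimal solution hits $C$, and the goal is to find a polynomially bounded superset of one intersection point. I would build the auxiliary graph $\Gdown$ that layers the graph by $\dist$ to $V(C)$, use the absence of short holes to argue that every hole meeting $C$ crosses these layers monotonically, and thereby encode the destruction of all such holes as a \skewmc instance on $\Gdown$ whose terminal set $\terms$ has size $\poly(k)$ and optimum at most $k'\le k$. Rounding its fractional optimum $\LPsol$ with the $\Oh(\log|\terms|)$-approximation for \multicut and \skewmc (Garg, Vazirani and Yannakakis) gives $\Oh(k\log k)$ vertices; accumulating the outputs of the $\poly(k)$ many \skewmc computations that arise when all choices of a hole-centre $v$ and its two hole-neighbours are treated uniformly, together with the $\Oh(ck)$ short-hole vertices and $\Oh(1)$ distinguished vertices of $C$, yields the target set $S$ of size $\Oh(k^{3}\log^{2}k)$, which meets some optimal solution. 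Delete $S$ and recurse.

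The technical heart, I expect, is the middle of the third step. First, one must prove the structural dichotomy that, once short holes and $4$-cycles are gone, every hole meeting $C$ is forced to run "skew" with respect to the $\dist$-layering of $\Gdown$, so that the separation problem is a genuine \skewmc instance with only $\poly(k)$ terminal pairs---this is what keeps the approximation factor $\Oh(\log k)$ rather than $\Oh(\log n)$. Second, one must establish the reverse connection: that the vertices extracted from the \skewmc solution, together with the chosen vertices of $C$, genuinely intersect an optimal \ChVD solution, so that the optimum strictly decreases this round. Making these fit---most plausibly by carrying a small chordal modulator from round to round in the style of iterative compression, and by tuning $\Gdown$ and $\terms$ so that both the approximation factor and the number of rounds stay polynomial in $k$---is where essentially all of the difficulty lies.
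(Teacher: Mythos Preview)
Your proposal is not a proof but a plan whose two load-bearing steps are left open, and neither can be filled in as stated. First, the claimed ``skew'' structure: you assert that once short holes are gone, every hole meeting the chosen hole $C$ traverses the $\dist$-to-$C$ layers monotonically, so that destroying all such holes becomes a \skewmc instance with $\poly(k)$ terminals. There is no argument for this, and in fact removing short holes does not force long holes to be geodesic with respect to an arbitrary base set $C$; a hole can dip in and out of the distance layers in a manner that is not skew at all. Second, and more fundamentally, even if you could approximate the multicut you describe, there is no reason the resulting set $S$ should meet an optimal \ChVD solution. A multicut solution hits the \emph{paths} you encoded, not the vertices an optimal solution chooses; it is entirely possible that $S$ kills all holes through $C$ while being disjoint from every optimal $X^\ast$, in which case deleting $S$ does not reduce the optimum and your outer loop does not terminate in $\Oh(k)$ rounds. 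You acknowledge both gaps yourself (``making these fit \ldots\ is where essentially all of the difficulty lies''), but they are the entire content of the theorem.

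The paper proceeds along a completely different line and never attempts to decrease the optimum one vertex at a time. It first observes that a yes-instance has a balanced separator consisting of a clique plus at most $k$ vertices; since the graph can be made $C_4$-free (so maximal cliques are polynomially enumerable), one guesses the clique and applies the Feige--Hajiaghayi--Lee balanced-separator approximation to the rest. Iterating this $\Oh(k\log n)$ times (with $\log n \le k\log k$, else Cao--Marx runs in polynomial time) partitions $V(G)$ into a chordal part $A_0$, a small junk set $X_0$, and $\Oh(k^2\log k)$ cliques $K_1,\ldots,K_\ell$. The second half solves the special case ``chordal graph plus one clique'': root a clique tree at a central bag $L$, orient edges of the chordal part downward, and show that hitting all holes through $L$ reduces to \emph{directed} \multicut for which the fractional LP solution $\LPsol$ (already of value $\Oh(k)$) is feasible; Gupta's rounding then gives an $\Oh(|\LPsol|^2)$ hitting set. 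A recursion on the central clique gives an $\Oh(|\LPsol|^2\log|\LPsol|)$ solution for each ``chordal $+$ clique'' instance, and adding the $K_i$ back one at a time yields the final $\Oh(k^4\log^2 k)$ bound. The key ideas you are missing---the clique-plus-small balanced separator, and the reduction of ``chordal plus clique'' to directed multicut via the clique-tree orientation---have no analogue in your outline.
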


Combining the two theorems, we obtain a kernel with~$\Oh(k^{161} \log^{58} k)$ vertices for \ChVD. To appreciate the conceptual difficulties in obtaining a polynomial kernel for \ChVD, let us point out why some popular kernelization strategies fail. In general, the difficulty of the \textsc{$\intrographclass$ Vertex Deletion} problem is tightly linked to the amount of structure of graphs in~$\intrographclass$. If all graphs in~$\intrographclass$ have bounded treewidth, and~$\intrographclass$ is closed under taking minors, then one can obtain a kernel by repeatedly replacing large \emph{protrusions}~\cite{MetaKernelization09,FominLMS12} in the graph by smaller gadgets. Since edgeless graphs and acyclic graphs are minor-closed and have constant treewidth, this explains the existence of kernels for \textsc{Vertex Cover} and \textsc{Feedback Vertex Set}. This approach does not work for \ChVD, as chordal graphs can have arbitrarily large treewidth (every clique is chordal). Another kernelization strategy for vertex deletion problems into hereditary classes~$\intrographclass$ is to model the problem as an instance of \textsc{Hitting Set} and apply a kernelization strategy based on the Sunflower lemma (cf.~\cite[\S 9.1]{FlumG06}). This yields polynomial kernels when there is a characterization of~$\intrographclass$ in terms of a finite set of forbidden induced subgraphs, such as co-graphs. However, the number of forbidden induced subgraphs for chordal graphs is infinite as this set contains all induced cycles of length at least four, making this approach infeasible as well. 

\paragraph{Techniques} Several recent kernelization results rely on matroid theory~\cite{KratschW12,KratschW14} (resulting in randomized kernels), or on \emph{finite integer index}, or the Graph-Minor Theorem~\cite{MetaKernelization09,FominLMS12,GajarskyHOORRVS13,KimLPRRSS13} (resulting in proofs that kernelization algorithms exist without showing how they should be constructed). In contrast, our kernelization algorithm is deterministic and fully explicit; it is based on new graph-theoretical insights into the structure of chordal graphs. 

A key ingredient in our kernelization is an algorithmic version of Erd\H{o}s-P\'{o}sa type~\cite{ErdosP65} covering/packing duality for holes in chordal graphs, expressed in terms of \emph{flowers}. A $v$-flower of order~$k$ in graph~$G$ is a set~$\C$ of~$k$ holes~$\C = \{C_1, \ldots, C_k\}$ in~$G$, such that~$V(C_i) \cap V(C_j) = \{v\}$ for all~$i \neq j$. 

\newcommand{\erdosposa}{There is a polynomial-time algorithm that, given a graph~$G$ and a vertex~$v$ such that~$G - v$ is chordal, outputs a $v$-flower~$C_1, \ldots, C_k$ and a set~$S \subseteq V(G) \setminus \{v\}$ of size at most~$12k$ such that~$G-S$ is chordal.}
\begin{lemma} \label{lemma:algorithm:packing:vs:covering}
\erdosposa
\end{lemma}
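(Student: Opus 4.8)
The plan is to recast the statement as a packing/covering dichotomy for the holes of $G$ relative to $v$, obtain the flower greedily, and then shrink the resulting (oversized) hitting set using a clique tree of $H := G - v$. First the reformulation. Since $H$ is chordal, every hole of $G$ contains $v$, and in fact a hole of $G$ is precisely $v$ together with an induced path $P$ of $H$ of length at least $2$ whose two endpoints lie in $N := N_G(v)$ and all of whose internal vertices lie in $\bar N := V(H)\setminus N$ (the endpoints of such a $P$ being then automatically non-adjacent). Consequently $G$ has a hole iff some connected component $D$ of $H[\bar N]$ has two non-adjacent vertices in $N_H(D)\cap N$, and a shortest path through $D$ between such a pair realizes one; this is polynomial-time testable by inspecting pairs of vertices of $N$. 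This characterization is monotone under deleting vertices other than $v$ — deleting from $N$ creates no new non-adjacencies and only discards candidate endpoints, while deleting from $\bar N$ merely refines the components of $H[\bar N]$ — so for every $S$ with $v\notin S$, $G-S$ is chordal iff no such ``obstruction'' survives.

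\textbf{Packing.} Build the flower greedily: while the current graph has a hole, pick one whose petal $P_i$ is a \emph{shortest} realizing path, output the corresponding hole $C_i$, delete $V(P_i)\setminus\{v\}$ from the graph, and iterate; stop once the graph is chordal. This yields a $v$-flower $C_1,\dots,C_k$ (pairwise vertex-disjoint petals, each an induced path by the shortest-path choice), and, because the process halts exactly when the graph becomes chordal, the set $X_0 := \bigcup_i(V(P_i)\setminus\{v\})$ already hits every hole of $G$. Its only defect is its size; note in particular that retaining merely the $2k$ endpoints of the petals does not suffice, since a surviving hole could thread through the interior of a petal without touching its endpoints.

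\textbf{Trimming (the crux).} We pass to a subset $S\subseteq X_0$ of size at most $12k$ that still hits every hole: for each petal $P_i$ we keep only a constant number of ``guard'' vertices — its two endpoints in $N$ together with a bounded number of vertices at the places where $P_i$ enters and leaves the bags of a fixed clique tree $\mathcal T$ of $H$ — and we verify $S$ by a rerouting/exchange argument. Suppose $G - S$ had a hole with petal $Q$, and let $i$ be smallest with $V(Q)\cap V(P_i)\neq\emptyset$; then $Q$ avoids $P_1,\dots,P_{i-1}$, so $v\cup Q$ is a hole already present at the step where $P_i$ was chosen and is therefore no shorter than $C_i$. Every induced path of $H$ — in particular $Q$ and each $P_i$ — traces out a path through $\mathcal T$ whose two extreme bags contain its endpoints, and since $Q$ misses the endpoints of $P_i$, the Helly property of subtrees of $\mathcal T$ confines how $Q$ can cross $P_i$ to a controlled region covered by the guards; splicing $Q$ with a piece of $P_i$ across that region then produces, inside $V(P_i)\cup V(Q)$, an obstruction that contradicts the greedy choice at step $i$. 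A careful count of how many bags of $\mathcal T$ a single petal can traverse relevantly, and of how the pieces can be recombined, is what pins the constant down to $12$. All ingredients — detecting obstructions, computing shortest realizing paths, constructing $\mathcal T$, and performing the splicing — are polynomial, giving the claimed running time.

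\textbf{Main obstacle.} The difficulty is entirely in the trimming step: one must describe exactly how the petal of a surviving hole can interleave with an existing petal inside the clique tree, and argue that boundedly many guard vertices per petal preclude every such interleaving — this is precisely where induced-ness, the shortest-path minimality of the $P_i$, and the subtree-Helly structure of chordal graphs have to be combined, and it is what fixes the constant $12$. The reformulation and the greedy packing are, by comparison, routine.
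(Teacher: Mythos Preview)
Your reformulation and the greedy packing are fine, and you correctly identify that the entire difficulty sits in the trimming step. But what you write for that step is an outline, not an argument, and the approach does not obviously go through.

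The exchange you sketch is not well-defined. You take a surviving petal $Q$ meeting $P_i$ and say that ``splicing $Q$ with a piece of $P_i$ \dots produces an obstruction that contradicts the greedy choice at step $i$''. The only invariant your greedy procedure gives you at step $i$ is that $|Q| \geq |P_i|$; it says nothing about hybrids. A splice $q_1 \to w$ along $Q$ followed by $w \to p_1$ along $P_i$ may be longer than $P_i$, may fail to be induced (chords between the $Q$-segment and the $P_i$-segment), and its endpoints $q_1,p_1$ may be adjacent. So there is no contradiction on the table. You also never say what the ``guard vertices'' are: a petal can visit arbitrarily many bags of the clique tree, so ``the places where $P_i$ enters and leaves bags'' is not a bounded set without further work, and nothing you wrote explains why the count comes out to $12$.

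For contrast, the paper does \emph{not} build the flower greedily. It runs a local search with three improvement rules: (I) add a hole disjoint from the current flower; (II) replace a single petal by a $2$-flower in the graph with the other petals deleted (found via a disjoint-paths subroutine, Proposition~\ref{proposition:twoflower}); (III) replace a petal by one whose endpoints are strictly closer in the clique tree. The hitting set is then defined via \emph{cutpoints}: for every $u \in N_G(v)\setminus V(\C)$ one takes the first edge $\pi(u)$ on the root-path above $\top(u)$ whose adhesion is contained in $N_G(v)\cup V(\C)$, and puts $\adh(\pi(u))\setminus N_G(v)$ into $S$, together with the $2k$ petal endpoints. The bound of $12$ per petal is obtained by a case analysis showing that $13$ hits on one petal $P_i$ force two non-adjacent internal vertices $x,y\in V(P_i)$ with cutpoint-witnesses $x',y'\in N_G(v)\setminus V(\C)$ from which one can reroute the ends of $P_i$ into a $2$-flower, contradicting maximality under rule~(II); rule~(III) is used to exclude certain adjacency patterns between $x',y'$ and the endpoints of $P_i$. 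Both rules~(II) and~(III) are load-bearing in that argument, and neither is available from your shortest-first construction. So as written the proposal has a genuine gap at exactly the point you flag as the main obstacle.
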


So either there is a large packing of holes intersecting only in~$v$, or all the holes can be intersected by a small set of vertices. We prove Lemma~\ref{lemma:algorithm:packing:vs:covering} by presenting a local search algorithm that builds a flower, followed by a greedy algorithm to construct a hitting set when the local search cannot make further progress. This procedure forms the first step in a series of arguments that \emph{tidy the modulator} (in the sense of~\cite{BevernMN12}) and narrow down the structure of the graph, which enables us to identify irrelevant parts of the graph which can be reduced. Section~\ref{s:intuition} describes the intuition behind the kernelization in more detail.

When it comes to the approximation algorithm, we combine linear-programming rounding techniques with structural insights into the behavior of nearly-chordal graphs. Since chordal graphs have balanced clique separators, if~$(G,k)$ is a yes-instance of \ChVD then there is a balanced separator in~$G$ consisting of a clique plus~$k$ vertices. We use an approximation algorithm for minimum-weight vertex separators of Feige et al.~\cite{FeigeHL08} together with an enumeration of maximal cliques  to decompose the graph. This decomposition step reduces the problem of finding an approximate solution to the setting in which we have an input graph~$G$ and a clique~$X$ such that~$G - X$ is chordal, and the aim is to find a small chordal deletion set in~$G$ (smaller than~$|X|$). After a second decomposition step, an analysis of the structure of holes allows the problem to be phrased in terms of multicut on a directed graph, to which we apply a result of Gupta~\cite{Gupta03}.

\paragraph{Motivation}
%Finally, we would like to mention that
%in the recent years, 
The parameterized complexity of
graph modification problems to chordal graphs and related
graph classes gained significant interest in recent years.
Several factors explain the popularity of such problems. 
First, chordal graphs and related graph classes, such as interval graphs, appear naturally in applications in solving linear equations~\cite{Buneman74}, in scheduling~\cite{Bar-NoyBFNS01}, and in bioinformatics~\cite{Rose72,ZhangSFCWKB94}.
%\todo{This motivates why people look at interval graphs, but is not really compelling for chordal graphs. However, I see the importance of having a practical motivation.}
% Marcin: I found two references for chordal graphs. They mostly concern completion to chordal graphs, but I think they are good enough.
 Second, from a more theoretical perspective, these graph classes are closely related to widely used graph parameters, such as treewidth: a tree decomposition can be seen as a completion to a chordal graph, and similar relations hold between interval graphs and pathwidth, proper interval graphs and bandwidth, or trivially perfect graphs and treedepth (cf.~\cite{DrangeFPV15}).
Third, to this day, we do not know of a vertex-deletion problem that is FPT but has no polynomial kernel. (We do know of some ones that are not FPT, such as \textsc{Perfect Deletion}~\cite{HeggernesHJKV13}.) It is tempting to conjecture that all FPT vertex deletion problems have polynomial kernels. \ChVD is a candidate for refuting this conjecture, and resolving its complexity status gives us more insight into the landscape of kernelization complexity of vertex deletion problems.
Fourth, as witnessed by numerous works in the past years~\cite{BliznetsFPP15,BliznetsFPP16,Cao16,CaoM14b,CaoM15,DrangeFPV15,FominSV13}, most graph modification problems to chordal graphs and related graph classes turn out to be tractable but very challenging;
the corresponding algorithmic results build upon deep graph-theoretical insights into the studied graph class, making this study particularly interesting from the point of view of graph theory.

One motivation for studying the particular case of vertex deletion to a chordal graph comes from the following fact: given a graph~$G$ and a modulator~$M$ such that~$G-M$ is chordal, one can find a maximum independent set, maximum clique, and minimum vertex cover of~$G$ in time~$2^{|M|} \cdot n^{\Oh(1)}$ (cf.~\cite[\S 6]{Cai03a}). Hence an efficient algorithm for chordal vertex deletion allows these fundamental optimization problems to be solved efficiently on almost-chordal graphs.

\paragraph{Related work} Kernels are known for a wide variety of vertex-deletion problems~\cite{FominLMS12,FominSV13,GiannopoulouJLS15,KratschW14,NemhauserT75,Thomasse10}. The recent kernels for \textsc{Odd Cycle Transversal}~\cite{KratschW14} and \textsc{Proper Interval Vertex Deletion}~\cite{FominSV13} are noteworthy since, like chordal graphs, bipartite graphs and proper interval graphs do not have bounded treewidth and do not have a characterization by a finite set of forbidden induced subgraphs. There are linear kernels for \ChVD on planar graphs~\cite{MetaKernelization09} and $H$-topological-minor-free graphs~\cite{KimLPRRSS13}. Concerning \emph{structural} parameterizations, the \ChVD problem is known to have polynomial kernels when parameterized by the size of a vertex cover~\cite{FominJP12}, when parameterized by the size of a set of candidate vertices from which the solution is required to be selected~\cite{HeggernesHJKV13}, and on bounded-expansion graphs when parameterized by a constant-treedepth modulator~\cite{GajarskyHOORRVS13}. Approximation algorithms have been developed for (weighted) \textsc{Feedback Vertex Set}~\cite{BafnaBF99,BeckerG96}, \textsc{Odd Cycle Transversal}~\cite{GargVY94}, and its edge-deletion variant~\cite{AgarwalCMM05}.

\paragraph{Organization}
We present preliminaries on graphs in Section~\ref{s:preliminaries}. Section~\ref{s:intuition} contains an informal overview of the data reduction procedure. Section~\ref{s:erdosposa} is devoted to the proof of Lemma~\ref{lemma:algorithm:packing:vs:covering}. We introduce an annotated version of \ChVD in Section~\ref{s:annotated}, which captures some of the structure we can obtain in the instance by appropriate use of Lemma~\ref{lemma:algorithm:packing:vs:covering}. Sections~\ref{s:toughness}--\ref{s:single-component} present reduction rules for instances of the annotated problem. The approximation algorithm is given in Section~\ref{s:approximation}, which allows us to give the final kernel in Section~\ref{s:kernelization}.

\section{Preliminaries} \label{s:preliminaries}
The set~$\{1,\ldots,n\}$ is abbreviated as~$[n]$. A set~$S$ \emph{avoids} a set~$U$ if~$S \cap U = \emptyset$.
For a set $U$, by $\binom{U}{2}$ we denote the family of all unordered pairs of distinct vertices of $U$. 
We often use $xy$ instead of $\{x,y\}$ for an element of $\binom{U}{2}$.
A graph~$G$ consists of a vertex set~$V(G)$ and edge set~$E(G) \subseteq \binom{V(G)}{2}$. For a vertex~$v \in V(G)$ its open neighborhood is~$N_G(v) := \{u \in V(G) \mid uv \in E(G)\}$ and its closed neighborhood is~$N_G[v] := N_G(v) \cup \{v\}$. The open neighborhood of a set~$S \subseteq V(G)$ is~$N_G(S) := \bigcup _{v \in S} N_G(v) \setminus S$, and the closed neighborhood is~$N_G[S] := \bigcup _{v \in S} N_G[v]$. The graph obtained from~$G$ by deleting all vertices in the set~$S$ and their incident edges is denoted~$G - S$; for a single vertex~$v \in V(G)$ we use~$G - v$ as a shorthand for~$G - \{v\}$.

An $st$-\emph{walk} in a graph~$G$ is a sequence of vertices~$s=v_1, v_2, \ldots, v_\ell=t$ such that~$v_i = v_{i+1}$ or~$v_i v_{i+1} \in E(G)$ for~$i \in [\ell - 1]$. An $st$-\emph{path} is an $st$-walk on which all vertices are distinct. For a walk~$W$ or path~$P$ we use~$V(W)$ and~$V(P)$ to denote the vertices of~$W$ and~$P$, respectively. For an $st$-path~$P$, the vertices~$s$ and~$t$ are its \emph{endpoints} and~$V(P) \setminus \{s,t\}$ are its \emph{interior vertices}. 

\begin{observation} \label{observation:walk:implies:inducedpath}
For any~$st$-walk~$W$ in a graph~$G$, any shortest $st$-path in~$G[V(W)]$ is an induced~$st$-path~$P$ in~$G$ with~$V(P) \subseteq V(W)$.
\end{observation}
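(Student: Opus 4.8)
The plan is to first make sure the statement even parses, i.e.\ that $G[V(W)]$ contains an $st$-path so that ``shortest $st$-path'' is well-defined. Writing $W = v_1, \ldots, v_\ell$ with $v_1 = s$ and $v_\ell = t$, I would delete every index $i$ with $v_i = v_{i+1}$ to obtain an honest walk in $G$ that still runs from $s$ to $t$ and uses only vertices of $V(W)$; then the usual cycle-excision argument (repeatedly removing the closed subwalk between two occurrences of a repeated vertex) extracts from it an $st$-path contained entirely in $G[V(W)]$. Let $P = u_1, \ldots, u_m$ with $u_1 = s$, $u_m = t$ be a shortest such path.

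The containment $V(P) \subseteq V(W)$ is then immediate: by construction $P$ is a path in the graph $G[V(W)]$, whose vertex set is exactly $V(W)$.

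The only real content is that $P$ is induced in $G$, which I would prove by contradiction. Suppose there are indices $i, j$ with $j \geq i+2$ and $u_i u_j \in E(G)$. Since $u_i, u_j \in V(W)$, this edge also belongs to $G[V(W)]$, so the sequence $u_1, \ldots, u_i, u_j, u_{j+1}, \ldots, u_m$ is again an $st$-path in $G[V(W)]$ --- its vertices are pairwise distinct, being a subsequence of those of $P$ --- and it is strictly shorter than $P$, contradicting the choice of $P$. Hence $P$ has no such chord, i.e.\ $P$ is induced in $G$.

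I do not expect any genuine obstacle here; the statement is elementary. The two points deserving a word of care are the existence of an $st$-path in $G[V(W)]$ (so that a shortest one exists) and the remark that a chord of $P$ automatically lies in $G[V(W)]$, which is what keeps the shortcutting argument inside the relevant induced subgraph.
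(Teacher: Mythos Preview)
Your proof is correct; the paper states this observation without proof, treating it as immediate. Your write-up supplies exactly the elementary details one would expect, including the two points you flag (existence of an $st$-path in $G[V(W)]$, and the fact that any chord of $P$ automatically lies in $G[V(W)]$).
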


A \emph{hole} is an induced (i.e., chordless) cycle of length at least four. A graph is \emph{chordal} if it does not contain any holes; we say that such graphs are hole-free. The \emph{independence number} of a graph~$G$, denoted~$\alpha(G)$, is the maximum cardinality of a set of vertices in~$G$ in which no pair is connected by an edge. Similarly, the clique number~$\omega(G)$ is the maximum cardinality of a set of vertices that are pairwise all adjacent.

\begin{proposition} \label{proposition:independencenr:vs:cliquenr}
Any chordal graph~$G$ satisfies~$\alpha(G) \geq |V(G)| / \omega(G)$.
\end{proposition}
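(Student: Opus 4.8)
The plan is to prove the inequality by induction on $|V(G)|$, peeling off a simplicial vertex at each step. Recall the classical fact (Dirac) that every nonempty chordal graph has a \emph{simplicial} vertex, i.e., a vertex $v$ whose closed neighbourhood $N_G[v]$ induces a clique; this can be seen by taking $v$ to be an endpoint of a longest induced path, or by invoking a perfect elimination ordering. I would fix such a vertex $v$. Since $N_G[v]$ is a clique we have $|N_G[v]| \le \omega(G)$, and since chordality is hereditary the graph $G' := G - N_G[v]$ is again chordal, has fewer vertices than $G$, and satisfies $\omega(G') \le \omega(G)$.

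For the base of the induction, if $V(G') = \emptyset$ then $V(G) = N_G[v]$, so $\{v\}$ is an independent set of size $1 \ge |V(G)|/\omega(G)$ because $|V(G)| \le \omega(G)$; the one-vertex graph is covered the same way. Otherwise the induction hypothesis applied to $G'$ yields an independent set $I'$ in $G'$ with
\[
|I'| \;\ge\; \frac{|V(G')|}{\omega(G')} \;\ge\; \frac{|V(G)| - |N_G[v]|}{\omega(G)} \;\ge\; \frac{|V(G)|}{\omega(G)} - 1 .
\]
Because $v$ has no neighbour outside $N_G[v]$, the set $I := I' \cup \{v\}$ is independent in $G$, and $|I| = |I'| + 1 \ge |V(G)|/\omega(G)$, which gives $\alpha(G) \ge |V(G)|/\omega(G)$.

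The only genuine ingredient beyond this bookkeeping is the existence of a simplicial vertex in a chordal graph; everything else is a one-line estimate, so I expect no real obstacle. As an alternative one could avoid the induction entirely: chordal graphs are perfect, hence $\chi(G) = \omega(G)$, and an optimal proper colouring partitions $V(G)$ into $\omega(G)$ independent sets, the largest of which has size at least $|V(G)|/\omega(G)$. I would prefer the simplicial-vertex argument, however, since it keeps the proof self-contained and does not rely on the perfect graph machinery.
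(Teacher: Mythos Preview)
Your proof is correct. Interestingly, the paper uses precisely the alternative you sketch at the end: it invokes the perfectness of chordal graphs to get $\chi(G)=\omega(G)$, and then pigeonholes on the colour classes to find an independent set of size at least $|V(G)|/\omega(G)$. Your primary argument via a simplicial vertex and induction is genuinely different. It is longer but more self-contained: it only needs Dirac's simplicial-vertex theorem (equivalently, the existence of a perfect elimination ordering), rather than the stronger statement that chordal graphs are perfect. The paper's route is a two-line proof once perfectness is granted, but it imports more machinery; your route trades that import for a short explicit induction. Either is entirely adequate here.
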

\begin{proof}
As any chordal graph is perfect~\cite{BrandstadtLS99}, the chromatic number of~$G$ equals~$\omega(G)$. A proper coloring with~$\omega(G)$ colors is a partition into~$\omega(G)$ independent sets. At least one color class has size~$|V(G)| / \omega(G)$, guaranteeing an independent set of that size.
\end{proof}

\paragraph{Tree decompositions}

A \emph{tree decomposition} of a graph~$G$ is a pair~$(T,\beta)$, where~$T$ is a rooted tree and~$\beta \colon V(T) \to 2^{V(G)}$ assigns to every node~$p \in V(T)$ a subset of the vertices of~$G$, called a \emph{bag}, such that the following holds:
\begin{enumerate}
	\item $\bigcup _{p \in V(T)} \beta(p) = V(G)$, and
	\item for each edge~$uv \in E(G)$ there is a node~$p \in V(T)$ such that~$uv \subseteq \beta(p)$, and
	\item for each~$v \in V(G)$, the nodes of~$T$ whose bag contains~$v$ induce a connected subtree of~$T$.
\end{enumerate}
For a vertex~$v \in V(G)$ we use~$\binv(v)$ to denote the set of nodes of~$T$ whose bag contains~$v$. The third property ensures that~$\binv(v)$ is a subtree of~$T$ for all~$v \in V(G)$. We use~$T_p$ to denote the subtree of~$T$ rooted at~$p \in V(T)$. The \emph{adhesion} of an edge~$e = pq$ of the decomposition tree is~$\adh(e) := \beta(p) \cap \beta(q)$. For a rooted tree decomposition~$(T,\beta)$ of~$G$ and a vertex~$v \in V(G)$, define~$\top(v)$ to be the node~$p \in V(T)$ whose bag~$\beta(p)$ contains~$v$ and whose distance in~$T$ to the root node is minimum. The third property of of tree decompositions ensures that this is well defined. For two nodes~$p,q \in V(T)$, we use~$\lca(p,q)$ to denote the lowest common ancestor of~$p$ and~$q$ in~$T$.

For a graph~$G$ and two vertex sets~$X$ and~$Y$, we use~$d_G(X,Y)$ to denote the number of edges on a shortest path from a vertex in~$X$ to a vertex in~$Y$. When~$(T,\beta)$ is a tree decomposition of~$G$ and~$s,t \in V(G)$, this means that~$d_T(\binv(s), \binv(t))$ is the distance between the subtree where~$s$ is represented and the subtree where~$t$ is represented. A path~$P$ in~$G$ \emph{connects} vertex sets~$X$ and~$Y$ if one endpoint of~$P$ belongs to~$X$ and the other endpoint belongs to~$Y$. It is a \emph{minimal} path connecting~$X$ and~$Y$ when no interior vertex belongs to~$X$ or~$Y$. Observe that for disjoint connected vertex subsets~$X$ and~$Y$ of a tree~$T$, the minimal path connecting~$X$ and~$Y$ is unique. This implies that the minimal path connecting~$\binv(s)$ and~$\binv(t)$ is unique when~$s$ and $t$ do not appear in the same bag of a tree decomposition~$(T,\beta)$ of~$G$. We use~$\MPtd(s,t)$ to refer to a minimal path connecting~$\binv(s)$ and~$\binv(t)$ in~$T$. We also use~$\MPtd(s, \root)$ to refer to the path from~$\top(s)$ to the root of~$T$, which is the minimal path connecting~$\binv(s)$ to the root node.

\begin{observation} \label{observation:stpaths:use:adhesion}
Let~$(T,\beta)$ be a tree decomposition of graph~$G$. If~$s, t$ are distinct vertices of~$G$ and~$\Ptd$ is a minimal path in~$T$ connecting~$\binv(s)$ to~$\binv(t)$, then for every edge~$e$ on~$\Ptd$ all $st$-paths in~$G$ contain a vertex of~$\adh(e)$.
\end{observation}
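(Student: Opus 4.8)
The plan is to reduce the claim to the classical fact that an adhesion of a tree decomposition is a vertex separator of $G$, and then trace an arbitrary $st$-path across it. If $\Ptd$ has no edge the statement is vacuous, so assume it has at least one and fix an arbitrary edge $e = pq$ of $\Ptd$, with $p$ the endpoint of $e$ lying closer to $\binv(s)$ along $\Ptd$. Deleting $e$ from the tree $T$ leaves two subtrees; let $A$ be the union of the bags over the subtree containing $p$, and $B$ the union of the bags over the subtree containing $q$, so that $A \cup B = V(G)$.

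The first step is to show $s \in A \setminus B$ and $t \in B \setminus A$. Here I would use that $\Ptd$ is a \emph{minimal} path connecting the subtrees $\binv(s)$ and $\binv(t)$: its only node in $\binv(s)$ is its $s$-endpoint $a_0$, and by the third property of tree decompositions $\binv(s)$ is a connected subtree of $T$, so every node of $\binv(s)$ is joined to $a_0$ by a path inside $\binv(s)$, which in particular does not use $e$ — because the other endpoint of $e$ does not lie in $\binv(s)$ (being either an interior node of $\Ptd$, which meets $\binv(s)$ only in $a_0$, or the endpoint of $\Ptd$ in $\binv(t)$, which is disjoint from $\binv(s)$ whenever $\Ptd$ has an edge). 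Hence all of $\binv(s)$ lies on the $p$-side of $e$, which gives $s \in A$ and, the two sides being disjoint, $s \notin B$. The symmetric argument gives $t \in B \setminus A$.

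The second step records the two standard separator properties, both immediate from the tree-decomposition axioms: (i) $A \cap B = \beta(p) \cap \beta(q) = \adh(e)$, where the inclusion $\supseteq$ is clear and $\subseteq$ holds because for $v \in A \cap B$ the connected subtree $\binv(v)$ meets both components of $T - e$, hence uses $e$ and so contains both $p$ and $q$; and (ii) $G$ has no edge between $A \setminus B$ and $B \setminus A$, because the two endpoints of any edge of $G$ lie together in some bag, which sits entirely on one of the two sides. Combining these, the third step is a one-line path-trace: given an $st$-path $s = v_1, \ldots, v_\ell = t$, let $j$ be least with $v_j \in B$; then $j \geq 2$, $v_{j-1} \in A \setminus B$, and the edge $v_{j-1} v_j$ forces $v_j \notin B \setminus A$ by (ii), so $v_j \in A \cap B = \adh(e)$, and the path meets $\adh(e)$ as required.

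I do not expect a genuine obstacle. The only point that needs care is the first step: it is precisely the minimality of $\Ptd$ — that no interior node of $\Ptd$ lies in $\binv(s)$ or $\binv(t)$ — together with the connectedness of these two subtrees, that guarantees deleting any single edge of $\Ptd$ splits $T$ so that $\binv(s)$ and $\binv(t)$ land on opposite sides; this is what makes $\adh(e)$ the relevant separator. Everything else is bookkeeping with the three axioms.
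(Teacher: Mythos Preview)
The paper states this observation without proof, treating it as a standard fact about tree decompositions; your argument is the standard one and is essentially correct.

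There is one small wrinkle in your first step. You assert that the $t$-endpoint $b_0$ of $\Ptd$ cannot lie in $\binv(s)$ whenever $\Ptd$ has an edge, but this can fail: take $T$ to be a single edge $a_0 b_0$ with $\binv(s) = \{a_0, b_0\}$ and $\binv(t) = \{b_0\}$; then $\Ptd = a_0 b_0$ is a minimal connecting path of length one with $b_0 \in \binv(s)$. (As you implicitly argue, this is the only way both endpoints of an edge $e$ of $\Ptd$ can land in $\binv(s)$: interior nodes are excluded by minimality, and if $|E(\Ptd)| \geq 2$ then $a_0$ and $b_0$ are not adjacent on $\Ptd$.) The fix is immediate: in that degenerate case $s \in \beta(a_0) \cap \beta(b_0) = \adh(e)$, so every $st$-path trivially meets $\adh(e)$; the symmetric case $a_0 \in \binv(t)$ gives $t \in \adh(e)$. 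Outside this situation your argument goes through verbatim.
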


\begin{observation} \label{observation:connectedsets:onesubtree}
Let~$(T,\beta)$ be a tree decomposition of graph~$G$ and let~$p \in V(T)$. If~$S \subseteq V(G)$ such that~$G[S \setminus \beta(p)]$ is connected, then there is a single tree~$T'$ in the forest~$T - p$ such that all bags containing a vertex of~$S \setminus \beta(p)$, are contained in~$T'$.
\end{observation}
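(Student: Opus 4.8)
The plan is to reduce everything to the basic fact that in a tree decomposition each vertex occupies a connected subtree. Write $S' := S \setminus \beta(p)$; if $S' = \emptyset$ the claimed statement is vacuous (no bag contains a vertex of $S'$), so I would assume $S' \neq \emptyset$. The first step is to observe that for every $v \in S'$ the subtree $\binv(v)$ avoids the node $p$, since $v \notin \beta(p)$, and being connected by the third property of tree decompositions it is therefore entirely contained in one of the trees of the forest $T - p$; call this tree $T'_v$.

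Next I would show that the assignment $v \mapsto T'_v$ is constant on $S'$. Take any edge $uv \in E(G[S'])$. By the second property of tree decompositions there is a node $q \in V(T)$ with $\{u,v\} \subseteq \beta(q)$, so $q \in \binv(u) \cap \binv(v)$; since $q$ lies in $T'_u$ and in $T'_v$ and distinct trees of the forest $T-p$ are vertex-disjoint, we get $T'_u = T'_v$. As $G[S']$ is connected, following a path in $G[S']$ between any two of its vertices and applying this equality edge by edge yields $T'_u = T'_v$ for all $u,v \in S'$. Let $T'$ denote this common tree.

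Finally, any bag of $T$ containing a vertex $v \in S'$ corresponds to a node of $\binv(v)$, which is contained in $T'$ by construction; hence every bag containing a vertex of $S \setminus \beta(p)$ lies in $T'$, which is exactly what is required.

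There is no real obstacle here: the argument is a direct consequence of properties~2 and~3 of tree decompositions together with the elementary observation that a connected subtree of $T$ avoiding $p$ sits inside a single component of $T - p$. The only point requiring a modicum of care is the degenerate case $S \setminus \beta(p) = \emptyset$ (and, relatedly, the case where $T$ consists of the single node $p$), which is handled simply by noting that the statement is then vacuous.
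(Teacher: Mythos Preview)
Your proof is correct. The paper states this as an observation without proof, so there is no argument to compare against; your reasoning via properties~2 and~3 of tree decompositions is exactly the standard justification one would give.
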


\paragraph{Clique trees}

A \emph{clique tree} is a tree decomposition~$(T,\beta)$ such that each bag~$\beta(p)$ is a maximal clique in~$G$. It is known that a graph is chordal if and only if it has a clique tree~\cite[Theorem 1.2.3]{BrandstadtLS99}. A clique tree of a chordal graph can be found in linear time~\cite[Theorem 1.2.4]{BrandstadtLS99}. The definition implies that if~$(T,\beta)$ is a clique tree of of~$G$ and~$S \subseteq V(G)$ is a clique in~$G$, then~$S \subseteq \beta(p)$ for some node~$p \in V(T)$.

Note that in a chordal graph $G$ with a clique tree $(T,\beta)$, two vertices $s,t$ appear in the same bag if and only if they are adjacent. Thus, in a chordal graph $G$ with a clique tree $(T,\beta)$, there is a unique minimal path connecting $\binv(s)$ and $\binv(t)$ for every nonadjacent pair of vertices $s,t$.

In our proofs we often work with a clique tree~$(T,\beta)$ of a chordal graph~$G$, and we will need to refer to both paths in~$G$ and paths in~$T$. To maximize readability, we use the convention that identifiers such as~$P$ and~$P'$ are used for paths in~$G$, while identifiers such as~$\Ptd$ and~$\Ptd'$ are used for paths in the decomposition tree~$T$.

\begin{proposition} \label{proposition:path:from:adhesions}
Let~$(T,\beta)$ be a clique tree of a graph~$G$, let~$U \subseteq V(G)$, and let~$s$ and~$t$ be distinct vertices in~$V(G) \setminus U$. If there is a path~$\Ptd$ in~$T$ connecting~$\binv(s)$ to~$\binv(t)$ such that~$\adh(e) \setminus U \neq \emptyset$ for each edge~$e$ on~$\Ptd$, then there is an induced $st$-path in~$G - U$.
\end{proposition}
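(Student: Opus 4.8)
The plan is to build an explicit $st$-walk in $G - U$ that lives inside the bags along $\Ptd$, and then invoke Observation~\ref{observation:walk:implies:inducedpath} to extract an induced $st$-path.

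First I would name the path: write $\Ptd = p_0, p_1, \ldots, p_\ell$ with $p_0 \in \binv(s)$ and $p_\ell \in \binv(t)$, and let $e_i = p_{i-1}p_i$ be its $i$-th edge for $i \in [\ell]$. By hypothesis $\adh(e_i) \setminus U \neq \emptyset$, so I may fix a vertex $w_i \in \adh(e_i) \setminus U = (\beta(p_{i-1}) \cap \beta(p_i)) \setminus U$ for each $i \in [\ell]$. Setting $w_0 := s$ and $w_{\ell+1} := t$, I claim that $w_0, w_1, \ldots, w_{\ell+1}$ is an $st$-walk in $G - U$. None of these vertices lies in $U$: this holds for $w_0 = s$ and $w_{\ell+1} = t$ by assumption, and for each $w_i$ with $i \in [\ell]$ by the choice above. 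For consecutive membership in a common bag, note that $w_0 = s$ and $w_1$ both belong to $\beta(p_0)$ (the former since $p_0 \in \binv(s)$, the latter since $w_1 \in \adh(e_1)$); for $1 \le i \le \ell - 1$ both $w_i$ and $w_{i+1}$ belong to $\beta(p_i)$; and $w_\ell$ together with $w_{\ell+1} = t$ belong to $\beta(p_\ell)$. Since each bag of a clique tree is a clique in $G$, every consecutive pair $w_i w_{i+1}$ is either an edge of $G$ or satisfies $w_i = w_{i+1}$, which is exactly what is allowed in a walk. (If $\ell = 0$ the walk degenerates to $s, t$, both in the single bag $\beta(p_0)$; since $s \neq t$ this bag being a clique gives the edge $st$.)

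Having produced an $st$-walk $W$ in $G - U$, I would finish by applying Observation~\ref{observation:walk:implies:inducedpath} with the ambient graph taken to be $G - U$: any shortest $st$-path in $(G-U)[V(W)]$ is an induced $st$-path in $G - U$, which is precisely the claim.

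I do not expect a genuine obstacle here; the whole content is recognizing that the chosen adhesion vertices $w_i$ form a walk because consecutive ones share a bag, and that bags of a clique tree are cliques. The only points needing care are the index bookkeeping at the two ends of $\Ptd$, the degenerate case $\ell = 0$, and the observation that the walk may repeat or leave adjacent the intermediate $w_i$'s — so one genuinely needs Observation~\ref{observation:walk:implies:inducedpath} to pass from a walk to an \emph{induced} path, rather than hoping the $w_i$ are already distinct and chordless.
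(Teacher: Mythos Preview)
Your proposal is correct and essentially identical to the paper's proof: both pick one vertex from each adhesion along~$\Ptd$, observe that consecutive choices share a bag (hence are equal or adjacent since bags are cliques), obtain an $st$-walk in~$G-U$, and finish with Observation~\ref{observation:walk:implies:inducedpath}. The only cosmetic difference is that the paper dispatches the degenerate case by first noting that the claim is trivial when $st \in E(G)$, whereas you fold it into the $\ell=0$ case.
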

\begin{proof}
If~$st \in E(G)$ the claim is trivial. Otherwise, the trees~$\binv(s)$ and~$\binv(t)$ are vertex-disjoint since each bag of the clique tree forms a clique in~$G$. Consider a path~$\Ptd$ as described, consisting of edges~$e_1, \ldots, e_\ell$ with~$\ell \geq 1$. For each edge~$e_i$ with~$i \in [\ell]$ pick a vertex~$v_i \in \adh(e_i) \setminus U$ and note that, by definition of adhesion,~$v_i$ occurs in the bags of both endpoints of~$e_i$. As~$v_1$ is in a common bag with~$s$ and each bag is a clique in~$G$, we know~$sv_1 \in E(G)$. Similarly we have~$v_\ell t \in E(G)$. Finally, edges~$e_i$ and~$e_{i+1}$ share an endpoint whose bag contains both~$v_i$ and~$v_{i+1}$ for~$i < \ell$, implying that~$s, v_1, \ldots, v_\ell, t$ is an $st$-walk in~$G$ that avoids~$U$. By Observation~\ref{observation:walk:implies:inducedpath}, this yields an induced~$st$-path in~$G$ that avoids~$U$.
\end{proof}

\begin{proposition} \label{proposition:induced:paths:in:clique:tree}
Let~$(T,\beta)$ be a clique tree of a graph~$G$, let~$P$ be an induced $st$-path in~$G$ with~$|V(P)| \geq 5$ for~$s,t \in V(G)$, and let~$T^*$ be the tree in the forest~$T - (\binv(s) \cup \binv(t))$ that contains the internal nodes of~$\MPtd(s,t)$. For any vertex~$u$ of~$P$ that is not among the first two or last two vertices on~$P$, the subtree~$\binv(u)$ is contained entirely in~$T^*$.
\end{proposition}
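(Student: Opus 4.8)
The plan is to track how the subtrees $\binv(v)$ for the vertices $v$ of $P$ sit inside $T$ relative to $\binv(s)$ and $\binv(t)$. Write $P = p_0 p_1 \cdots p_m$ with $p_0 = s$, $p_m = t$, so $m = |V(P)| - 1 \ge 4$ and the vertices to handle are exactly $p_2, \dots, p_{m-2}$. Since $|V(P)| \ge 5$ the endpoints $s,t$ are non-adjacent, so — two vertices sharing a bag of a clique tree being adjacent — $\binv(s)$ and $\binv(t)$ are disjoint connected subtrees of $T$; put $Z := \binv(s) \cup \binv(t)$, so that $T^*$ is a tree of the forest $T - Z$. I would first record the easy half: for $2 \le i \le m-2$ the vertex $p_i$ is neither equal nor adjacent to $s$ or to $t$ (because $P$ is induced), hence $p_i$ shares no bag with $s$ or $t$ and $\binv(p_i) \cap Z = \emptyset$. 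As $\binv(p_i)$ is connected and consecutive subtrees $\binv(p_i), \binv(p_{i+1})$ overlap (a bag containing both $p_i$ and $p_{i+1}$), the union $\bigcup_{i=2}^{m-2}\binv(p_i)$ is a connected subgraph of $T$ disjoint from $Z$, so it lies in a single tree $T'$ of $T - Z$. The whole task then reduces to proving $T' = T^*$.

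The engine would be the structural observation that $T^*$ is the \emph{unique} tree of $T - Z$ having an edge (in $T$) to $\binv(s)$ and an edge to $\binv(t)$. Indeed, if $\hat T$ is such a tree, with edges $y y'$ ($y \in \binv(s)$, $y' \in \hat T$) and $z z'$ ($z \in \binv(t)$, $z' \in \hat T$), then concatenating $yy'$, the unique path from $y'$ to $z'$ inside the subtree $\hat T$, and $z'z$ yields a path of $T$ from a vertex of $\binv(s)$ to a vertex of $\binv(t)$ all of whose internal nodes lie in $\hat T$, hence outside $Z$; this is a minimal path connecting $\binv(s)$ and $\binv(t)$, so by uniqueness of such a path it is $\MPtd(s,t)$, and therefore every internal node of $\MPtd(s,t)$ lies in $\hat T$, forcing $\hat T = T^*$. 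As a byproduct, exhibiting any such $\hat T$ shows $\MPtd(s,t)$ has an internal node, so $T^*$ is well defined.

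It then remains to show that $T'$ itself has an edge to $\binv(s)$ and an edge to $\binv(t)$; by symmetry only the first needs argument, and I would use $p_1$. The subtree $\binv(p_1)$ meets $\binv(s)$ (since $p_1 s \in E(G)$) but is disjoint from $\binv(t)$ (since $p_1 \ne t$ and $p_1 t \notin E(G)$, using $m \ge 4$), so $A := \binv(p_1) \cap Z = \binv(p_1) \cap \binv(s)$ is a nonempty subtree of $\binv(p_1)$. Picking $x \in \binv(p_1) \cap \binv(p_2)$ gives $x \notin Z$ (so $x \in T'$) and $x \notin A$; letting $C$ be the component of the forest $\binv(p_1) - A$ that contains $x$, connectedness of $\binv(p_1)$ yields an edge $y' y$ of $T$ with $y' \in C$ and $y \in A \subseteq \binv(s)$. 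Since $y'$ lies in $\binv(p_1)$ but outside $A$ it avoids $Z$, and it is joined to $x$ by a path inside $C$ (which avoids $Z$), so $y' \in T'$; hence $y'y$ witnesses an edge from $T'$ to $\binv(s)$. The mirror argument using $p_{m-1}$ (which meets $\binv(t)$, avoids $\binv(s)$, and meets $\binv(p_{m-2})$) gives an edge from $T'$ to $\binv(t)$, so $T' = T^*$ by the previous step, and since $\binv(u) = \binv(p_i) \subseteq T'$ for every relevant $i$, the claim follows. The step I expect to be delicate is exactly this last one — verifying that the node reached ``just before'' entering $\binv(s)$ really remains on the $T'$-side of $T - Z$ — together with the uniqueness statement for $T^*$; everything else is a direct consequence of $P$ being induced and of bags being cliques.
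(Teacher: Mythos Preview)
Your proof is correct, but it takes a genuinely different route from the paper's. The paper fixes a single middle vertex~$u$, observes that~$\binv(u)$ lies in some component~$T'$ of~$T-Z$, and argues by contradiction: if~$T' \neq T^*$ then the minimal path in~$T$ from~$\binv(u)$ to~$\binv(s)$ (or to~$\binv(t)$) must pass through~$\binv(t)$ (resp.~$\binv(s)$), and then Observation~\ref{observation:stpaths:use:adhesion} forces the $su$-subpath of~$P$ to contain a vertex in a bag of~$\binv(t)$, hence adjacent to~$t$, contradicting inducedness.

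Your argument instead handles all middle vertices at once by first showing their subtrees form a single connected blob~$T'$ in~$T-Z$, then characterising~$T^*$ intrinsically as the unique component of~$T-Z$ that borders both~$\binv(s)$ and~$\binv(t)$, and finally witnessing those borders for~$T'$ via~$p_1$ and~$p_{m-1}$. This is slightly longer but more self-contained: you never invoke the adhesion observation, and as you note, the construction of a minimal connecting path through~$T'$ simultaneously certifies that~$\MPtd(s,t)$ has an internal node, so~$T^*$ is well defined. The paper's proof is more economical; yours gives a cleaner structural picture of why~$T^*$ is the right component.
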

\begin{proof}
Assume the preconditions hold. Since~$u$ is not among the first two or last two vertices on the induced $st$-path~$P$, vertex~$u$ is not adjacent to~$s$ or~$t$ and therefore~$\binv(u)$ is vertex-disjoint with~$\binv(s)$ and~$\binv(t)$. It follows that~$\binv(u)$ is contained entirely within one tree~$T'$ of the forest~$T - (\binv(s) \cup \binv(t))$. Assume for a contradiction that~$T' \neq T^*$. Then either~$\MPtd(u,s)$ goes through~$\binv(t)$, or~$\MPtd(u,t)$ goes through~$\binv(s)$. Assume without loss of generality (by symmetry) that~$\MPtd(u,s)$ goes through~$\binv(t)$.

The subpath~$P'$ of~$P$ from~$s$ to~$u$ is induced. None of the vertices on~$P'$ are adjacent to~$t$, since~$P$ is induced and~$u$ itself is not adjacent to~$s$ or~$t$. Let~$e$ be an edge on~$\MPtd(u,s)$ that has an endpoint~$p$ in~$\binv(t)$. By Observation~\ref{observation:stpaths:use:adhesion}, the $su$-path~$P'$ contains a vertex~$w$ of~$\adh(e) \subseteq \beta(p)$, with~$p \in \binv(t)$. This shows that~$w$ is in a common bag with~$t$. Since each bag of~$T$ forms a clique, vertex~$w$ on~$P'$ is adjacent to~$t$; a contradiction.
\end{proof}

\begin{proposition} \label{proposition:inducedpath:on:minimalpath}
Let~$(T,\beta)$ be a clique tree of graph~$G$ and let~$P$ be an induced $st$-path in~$G$. For every internal vertex~$u$ of~$P$, the tree~$\binv(u)$ contains a node of~$\MPtd(s,t)$.
\end{proposition}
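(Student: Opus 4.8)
The plan is to turn the statement into an elementary claim about how the nodes of $\MPtd(s,t)$ are distributed among the vertices of $P$, using only the clique-tree structure and Observation~\ref{observation:stpaths:use:adhesion}. First I would dispose of the degenerate case: if $P$ has no internal vertices the statement is vacuous, so we may assume $|V(P)| \geq 3$; then $s \neq t$ and, since $P$ is induced, $st \notin E(G)$. As $(T,\beta)$ is a clique tree, $st \notin E(G)$ means $\binv(s)$ and $\binv(t)$ are vertex-disjoint subtrees of $T$, so $\MPtd(s,t)$ is a genuine path $p_0 p_1 \cdots p_\ell$ in $T$ with $\ell \geq 1$, endpoints $p_0 \in \binv(s)$ and $p_\ell \in \binv(t)$, and no interior node in $\binv(s) \cup \binv(t)$. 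Write $P$ as $s = w_0, w_1, \ldots, w_m = t$.

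For each node $p_a$ of $\MPtd(s,t)$ I would consider the set of indices of vertices of $P$ that lie in the bag $\beta(p_a)$,
\[ I_a := \{\, i \in \{0,1,\ldots,m\} : w_i \in \beta(p_a) \,\} = \{\, i : p_a \in \binv(w_i) \,\}, \]
and establish three facts. (i) $0 \in I_0$ and $m \in I_\ell$, because $p_0 \in \binv(s)$ forces $w_0 = s \in \beta(p_0)$ and symmetrically $w_m = t \in \beta(p_\ell)$. (ii) Each $I_a$ is a set of consecutive integers: $\beta(p_a)$ is a clique of $G$, so $\beta(p_a) \cap V(P)$ is a clique of the induced (hence chordless) path on $V(P)$, and thus has at most two vertices, with those two adjacent on $P$ and therefore consecutive; so $I_a$ is empty, a singleton, or a pair $\{i,i+1\}$. (iii) $I_a \cap I_{a+1} \neq \emptyset$ for all $0 \leq a \leq \ell - 1$: the edge $p_a p_{a+1}$ lies on $\MPtd(s,t)$, so Observation~\ref{observation:stpaths:use:adhesion} applied to the $st$-path $P$ produces a vertex $w_i \in \adh(p_a p_{a+1}) = \beta(p_a) \cap \beta(p_{a+1})$, whence $i \in I_a \cap I_{a+1}$.

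To finish I would run a one-line induction: from (ii) and (iii), $\bigcup_{a=0}^{b} I_a$ is an interval of integers for every $b$ (the union of two integer intervals sharing an element is an interval, and $I_b$ shares an element with the interval $\bigcup_{a<b} I_a$ because it already shares one with $I_{b-1}$). Hence $\bigcup_{a=0}^{\ell} I_a$ is an interval; by (i) it contains both $0$ and $m$, and it is contained in $\{0,\ldots,m\}$, so it equals $\{0,1,\ldots,m\}$. Given any internal vertex $u = w_j$ of $P$ we have $j \in \{1,\ldots,m-1\}$, hence $j \in I_a$ for some $a$, which means $u = w_j \in \beta(p_a)$, i.e.\ $p_a \in \binv(u)$. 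Since $p_a$ is a node of $\MPtd(s,t)$, this is exactly the conclusion.

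I do not foresee a real obstacle; the only points requiring care are fact (ii) --- the single spot where the hypothesis that $(T,\beta)$ is a \emph{clique} tree (rather than an arbitrary tree decomposition) is used, since it relies on bags being cliques and $P$ being chordless --- and the bookkeeping verifying that $\MPtd(s,t)$ is really a path with one endpoint in $\binv(s)$ and one in $\binv(t)$, which is immediate from the clique-tree preliminaries.
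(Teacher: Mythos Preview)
Your proof is correct and takes a genuinely different route from the paper. The paper argues by contradiction: it picks the first internal vertex $u_1$ on $P$ whose subtree $\binv(u_1)$ misses $\MPtd(s,t)$, locates its predecessor $u_0$ and the next vertex $u_2$ whose subtree does meet $\MPtd(s,t)$, and then uses the connectivity property of tree decompositions to force $u_0$ and $u_2$ into a common bag $\beta(q)$ with $q \in \MPtd(s,t)$, producing a chord $u_0u_2$ on $P$. You instead give a direct argument: you record for each node $p_a$ on $\MPtd(s,t)$ the index set $I_a$ of vertices of $P$ in $\beta(p_a)$, show each $I_a$ is an interval of size at most two (clique-tree bags meet an induced path in at most two consecutive vertices), show consecutive $I_a$'s overlap via Observation~\ref{observation:stpaths:use:adhesion}, and conclude that $\bigcup_a I_a$ is a single interval containing both endpoints and hence all of $\{0,\ldots,m\}$. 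Both arguments hinge on the same two ingredients (bags are cliques; adhesions on $\MPtd(s,t)$ are hit by $P$), but your interval-covering formulation is arguably cleaner and avoids the case analysis with the subtree $T'$ and the node $q$ that the paper's contradiction argument needs. The paper's approach, on the other hand, makes the structural reason for failure (an explicit chord) more visible.
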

\begin{proof}
Assume for a contradiction that there is an internal vertex~$u$ of~$P$ for which~$\binv(u)$ contains no node of~$\MPtd(s,t)$. Orient~$P$ from~$s$ to~$t$ and consider the first vertex~$u_1$ for which this holds. Let~$u_0$ be the predecessor of~$u_1$ on~$P$. Let~$u_2$ be the first vertex after~$u_1$ on~$P$ for which~$\binv(u_2)$ contains a node of~$\MPtd(s,t)$, which is well-defined since~$t$ is such a vertex.

Let~$T'$ be the tree in~$T - V(\MPtd(s,t))$ containing~$\binv(u_1)$, which is well-defined since~$u_1$ does not occur in a bag of~$\MPtd(s,t)$ and the occurrences of~$u_1$ form a connected subtree. Let~$q$ be the unique node of~$\MPtd(s,t)$ that has a neighbor in~$T'$. Since~$u_0$ occurs in a common bag with~$u_1$ and also occurs in a bag on~$\MPtd(s,t)$, by the connectivity property of tree decompositions we have~$u_0 \in \beta(q)$. Since~$u_2$ is the first vertex after~$u_1$ that occurs in a bag of~$\MPtd(s,t)$, all interior vertices~$u$ of the subpath of~$P$ from~$u_0$ to~$u_2$ occur only in bags of~$T'$ by Observation~\ref{observation:connectedsets:onesubtree}. Since~$u_2$ occurs in a common bag with the predecessor of~$u_2$ on~$P$, it follows that~$u_2$ occurs both in a bag of~$T'$ and in a bag of~$\MPtd(s,t)$, and consequently~$u_2 \in \beta(q)$. Hence~$u_0$ and~$u_2$ occur in a common clique, implying that~$u_0$ and~$u_2$ are connected in~$G$. But this edge is a chord on~$P$ since~$u_0$ and~$u_2$ are not consecutive; a contradiction to the assumption that~$P$ is induced.
\end{proof}

\section{Informal description of the kernelization} \label{s:intuition}
% The obstructions to chordal graphs are all holes, which can be arbitrarily long. Standard sunflower-type arguments therefore do not work.  
In this section we give the high-level ideas behind the kernelization algorithm. Given an instance~$(G,k)$ of \ChVD, we run the approximation algorithm of Theorem~\ref{thm:apx}. If it concludes that~$(G,k)$ is a no-instance, then we output a constant-size no-instance as the result of the kernelization. Otherwise, we obtain an approximate solution~$M_0 \subseteq V(G)$ of size~$\poly(k)$ such that~$G - M_0$ is chordal. The reduction process proceeds in three phases.

\begin{enumerate}
	\item We find a vertex set~$Q \supseteq M_0$ of size~$\poly(k, |M_0|)$ such that all the connected components~$A$ of~$G - Q$ (which are chordal graphs since~$Q$ contains~$M_0$) are simple in the following way: all vertices in~$A$ have the same neighborhood in~$M_0$, and the neighborhood of~$A$ in the rest of the graph (i.e., $Q \setminus M_0$) consists of two cliques of size~$\poly(k, |M_0|)$. \label{reduction:findseparator}
	\item We introduce a reduction rule that bounds the number of connected components of~$G - Q$ in terms of~$|Q|$.\label{reduction:reducenumbercomp}
	\item We introduce a reduction rule that shrinks the size of each component~$A$ of~$G - Q$ to polynomial in the size of the two boundary cliques.\label{reduction:shrinkcomponents}
\end{enumerate}

This process reduces the size of the graph to polynomial in~$k$, when starting with a modulator~$M_0$ of size~$\poly(k)$. 

\paragraph{Reduction phase \ref{reduction:findseparator}.} To be able to find the desired separator~$Q$, several intermediate steps are needed. The flow of ideas is as follows. We prove Lemma~\ref{lemma:algorithm:packing:vs:covering}: if~$G$ is a graph containing a distinguished vertex~$v$ such that~$G - v$ is chordal, then either there are~$k+1$ holes in~$G$ that are vertex-disjoint except for intersecting at~$v$, or there is a set~$S_v \subseteq G$ of~$12k$ vertices such that~$G - S$ is chordal. The proof is algorithmic and yields a polynomial-time algorithm that computes the set of~$k+1$ holes or the set~$S$, depending on the outcome. Given the input graph~$G$ and the modulator~$M_0$, we apply this algorithm to the graph~$G_v := G - (M_0 \setminus \{v\})$ for each~$v \in M$; note that~$G_v - v$ is chordal. When~$G_v$ contains~$k+1$ holes that intersect only in~$v$, we know that any size-$k$ solution to the original instance must contain~$v$. Hence we can delete such vertices and decrease the budget~$k$ by one. After exhaustively applying this rule, we can collect the hitting sets~$S_v$ for the remaining vertices and add them to~$M_0$, obtaining a larger modulator~$M$. It yields the following useful structure: for every~$v \in M$, the graph~$G - (M \setminus \{v\})$ is chordal.
This property is exactly the \emph{tidyness of the modulator}
as defined by Bevern et al~\cite{BevernMN12}.

For a vertex $v \in M$, we define $G(\neg v) = G-(M \cup N(v))$, that is, the subgraph of $G-M$ induced by the nonneighbors of $v$.
By the previous Erd\H{o}s-P\'{o}sa type step, a boundary of a component of $G(\neg v)$, i.e., the set $N_{G-M}(C)$ for a connected component $C$ of $G(\neg v)$,
is a clique. To achieve the desired set $Q$, we first reduce the graph in two ways.
\begin{itemize}
\item We shrink large cliques of $G-M$, bounding the size of a clique in $G-M$ polynomially in $k$ and $|M|$ (Section~\ref{s:cliques}).
\item We reduce the number of connected components of $G(\neg v)$ for every $v \in M$ (Section~\ref{s:components}).
\end{itemize}
We start with $Q$ being all boundaries of connected components of $G(\neg v)$ for all $v \in M$.
To ensure that the boundary of~$A$ consists of two cliques, we use lowest-common-ancestor (LCA) closure of marked bags in a clique tree, similarly as in the case of the protrusion technique~\cite{MetaKernelization09}. This yields a separator~$Q$ with the desired properties.

\begin{figure}[t]
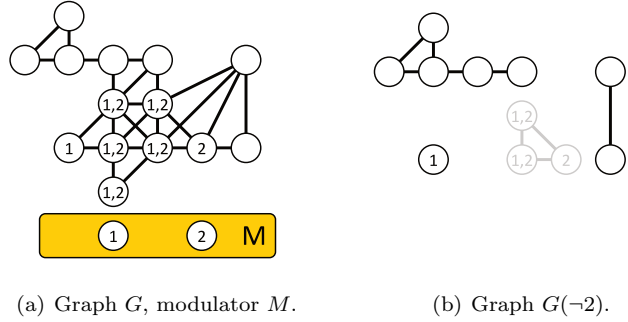

\begin{center}
\subfigure[Graph~$G$, modulator~$M$.]{\label{fig:modulator1}
\image{EnhancedModulator1}
}
\subfigure[Graph~$G(\neg 2)$.]{\label{fig:modulator2}
\image{EnhancedModulator2}
}
\caption{\ref{fig:modulator1} A graph~$G$ with modulator~$M$ such that~$G - (M \setminus \{v\})$ is chordal for each~$v \in M$. To prevent clutter, the neighborhoods of the vertices~$\{1,2\} \in M$ are represented by drawing a~$1$ (resp.~$2$) in each circle representing a neighbor of~$1$ (resp.~$2$). \ref{fig:modulator2} The graph~$G(\neg 2)$ is drawn in black. The three gray vertices form a clique, which is the neighborhood of the rightmost connected component of~$G(\neg 2)$.}
\end{center}
\end{figure}

\paragraph{Reduction phase \ref{reduction:reducenumbercomp}.} The second phase is easy and presented in Section~\ref{s:toughness}. Since all holes pass through~$M \subseteq Q$, a component~$A$ of~$G - Q$ can only be used on a hole to connect two vertices in~$Q$ by an induced path; it turns out that two different types of connections are relevant. For each of the~$|Q|^2$ pairs of vertices~$x,y$ of~$Q$, we mark~$k+2$ components of each type that provide a connection between~$x$ and~$y$ (or fewer, if the connection type is realized by fewer than~$k+2$ components). Afterwards, any component that is not marked can be safely discarded from the problem.

\paragraph{Reduction phase \ref{reduction:shrinkcomponents}.} The third reduction phase is technical, and described in Section~\ref{s:single-component}. Since a component~$A$ of~$G - Q$ is chordal, has two cliques as its boundary, and all its vertices have the same neighborhood in~$M$, the only way~$A$ can be used to make a hole in the graph is to provide an induced path between its two boundary cliques. To break the holes in the graph, it might therefore be beneficial to have a separator between the two boundary cliques in a solution. By analyzing the local structure, and exploiting the fact that an earlier reduction rule has reduced the sizes of all cliques in~$G - M$ to~$\poly(k, |M|)$, we can pinpoint a small number of relevant cliques in~$A$ such that there is always an optimal solution that does not delete vertices of~$A$ that fall outside the relevant cliques. Afterwards we can shrink~$A$ by a ``bypassing'' operation on all the vertices that are not needed in an optimal solution.

% Q: Relate the components of G - Z to protrusions? They are obtained in a similar way (LCA closure on marked bags of a tree decomposition). The step of augmenting a modulator to get more structure is also used in~\cite{FominLMS12} and~\cite{GiannopoulouJLS15}. 

\section{\texorpdfstring{Erd\H{o}s-P\'{o}sa}{Erdos-Posa} for almost chordal graphs}\label{s:erdosposa}
The main goal of this section is to prove Lemma~\ref{lemma:algorithm:packing:vs:covering}. However, we first present some additional terminology and a subroutine that will be useful during the proof.

We use~$V(\C)$ to denote the union of the vertices used on the holes in a flower~$\C$. Observe that for any $v$-flower~$\C = \{C_1, \ldots, C_k\}$ of order~$k$, the structure~$\C-v$ consists of~$k$ pairwise vertex-disjoint chordless paths~$P_1, \ldots, P_k$ that connect two nonadjacent neighbors of~$v$, whose internal vertices belong to~$V(G) \setminus N_G[v]$. Throughout our proofs we will switch between the interpretation of a $v$-flower as a set of intersecting holes in~$G$, and as a collection of induced paths in~$G - v$ between nonadjacent neighbors of~$v$.

\begin{proposition} \label{proposition:twoflower}
There is a polynomial-time algorithm that, given a graph~$G$ and a vertex~$v$ such that~$G - v$ is chordal, computes a $v$-flower of order two or correctly determines that no such flower exists.
\end{proposition}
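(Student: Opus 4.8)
The plan is to reduce the problem to a shortest-path computation in the clique tree of $G-v$. Let $H = G - v$, which is chordal, and compute a clique tree $(T,\beta)$ of $H$ in linear time. A $v$-flower of order two is, by the discussion above, a pair of vertex-disjoint induced paths $P_1, P_2$ in $H$, each connecting two nonadjacent neighbors of $v$, with all internal vertices avoiding $N_G[v]$; equivalently, two vertex-disjoint holes $C_1, C_2$ in $G$ with $V(C_1) \cap V(C_2) = \{v\}$. So I first want a subroutine that, for a prescribed pair $s,t$ of nonadjacent neighbors of $v$, finds a shortest induced $st$-path in $H$ whose interior avoids $N_G(v)$, since such a path together with $v$ forms a hole.

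First I would handle the single-path problem. Given nonadjacent $s,t \in N_G(v)$, set $U := N_G(v) \setminus \{s,t\}$ and look for an induced $st$-path in $H - U$. By Proposition~\ref{proposition:path:from:adhesions}, such a path exists if and only if there is a path $\Pi$ in $T$ connecting $\binv(s)$ to $\binv(t)$ with $\adh(e) \setminus U \neq \emptyset$ for every edge $e$ on $\Pi$; and since $s,t$ are nonadjacent, $\MPtd(s,t)$ is the unique minimal such path, so it suffices to check whether every adhesion along $\MPtd(s,t)$ contains a vertex outside $U$. This is a polynomial-time test, and when it succeeds the walk $s, v_1, \ldots, v_\ell, t$ constructed in the proof of Proposition~\ref{proposition:path:from:adhesions} yields, via Observation~\ref{observation:walk:implies:inducedpath}, a concrete induced $st$-path $P$ in $H$ avoiding $N_G(v) \setminus \{s,t\}$; together with $v$ this is a hole through $v$, and moreover $V(P) \setminus \{s,t\}$ lies in the single subtree $T^\ast$ of $T - (\binv(s)\cup\binv(t))$ that contains the interior of $\MPtd(s,t)$ (this localization follows from Proposition~\ref{proposition:induced:paths:in:clique:tree} once the path is long enough, and short cases are handled directly).

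Now for two disjoint holes through $v$: I would iterate over all choices of the two "ends" of the first hole, i.e., over all $4$-tuples $(s_1,t_1,s_2,t_2)$ of neighbors of $v$ with $s_1 t_1 \notin E(G)$ and $s_2 t_2 \notin E(G)$ — only $|N_G(v)|^4$ many. For a fixed tuple, I want vertex-disjoint induced paths $P_1$ (an $s_1t_1$-path) and $P_2$ (an $s_2t_2$-path), each avoiding $N_G(v)$ except at its own endpoints. The key structural point is that the minimal tree paths $\MPtd(s_1,t_1)$ and $\MPtd(s_2,t_2)$ in $T$ control where $P_1$ and $P_2$ can live: if the (open) subtrees $T_1^\ast, T_2^\ast$ carrying their interiors are disjoint, then picking the shortest valid path for each end independently (using the single-path subroutine, with $U$ enlarged to also forbid the four endpoints of the other pair) gives disjoint $P_1, P_2$ automatically. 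If $T_1^\ast$ and $T_2^\ast$ overlap, the holes are forced to interact inside a common part of the tree; here I would argue that one may then instead look for both holes "routed through" a single adhesion or a bounded piece of the tree and again reduce to shortest-path / $2$-disjoint-path computations on $T$. If for some tuple this produces disjoint $P_1,P_2$, output the flower; if no tuple works, report that no $v$-flower of order two exists.

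The main obstacle is the disjointness bookkeeping in the overlapping case: when the two candidate holes want to use the same region of the clique tree, a naive shortest-path choice for one may block the other, and I need a clean argument — presumably a small case analysis on how $\binv(s_i),\binv(t_i)$ and the two minimal tree paths are arranged, together with the localization Propositions~\ref{proposition:induced:paths:in:clique:tree} and~\ref{proposition:inducedpath:on:minimalpath} — showing that if any pair of disjoint holes exists for this tuple then one of polynomially many canonical candidate pairs (e.g., the pair of "extreme" shortest induced paths on either side of the overlap, or paths separated by a single chosen adhesion) is disjoint. Everything else is routine application of the clique-tree machinery already developed, and the total running time is polynomial since we try $O(|N_G(v)|^4)$ tuples and do polynomial work per tuple.
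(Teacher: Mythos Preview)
Your overall framing (iterate over $4$-tuples of endpoints, then look for two vertex-disjoint paths with interiors avoiding $N_G(v)$) matches the paper's, but the paper resolves the disjointness step in one line rather than via clique-tree geometry: for a fixed tuple $(s_1,t_1,s_2,t_2)$ it simply formulates a \textsc{$2$-Disjoint Paths} instance in $G-(N_G[v]\setminus\{s_1,t_1,s_2,t_2\})$ and invokes a black-box algorithm (any FPT algorithm for \textsc{$k$-Disjoint Paths} suffices since $k=2$ is constant, or the linear-time algorithm of Kammer and Tholey for chordal graphs). If disjoint paths are returned, shortcut each to an induced path and output the flower; otherwise move on. No case analysis on the clique tree is needed.

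Your proposal, by contrast, has a genuine gap precisely where you flag the ``main obstacle'': the overlapping case. When $T_1^\ast$ and $T_2^\ast$ share part of the tree, you assert that one of polynomially many ``canonical'' candidate pairs must be disjoint, but you give no mechanism for generating these candidates or any argument that such a bounded list suffices. What you are effectively attempting here is to solve \textsc{$2$-Disjoint Paths} in chordal graphs from scratch using the clique tree; that is a nontrivial result in its own right (it is the content of the Kammer--Tholey paper), and Propositions~\ref{proposition:induced:paths:in:clique:tree} and~\ref{proposition:inducedpath:on:minimalpath} by themselves do not get you there. So either supply a full argument for the overlapping case, or---much more economically---do what the paper does and hand the disjointness question to an existing \textsc{Disjoint Paths} algorithm.
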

\begin{proof}
Observe that~$\C = \{C_1, C_2\}$ is a $v$-flower of order two if and only if~$P_1 := C_1 - v$ and~$P_2 := C_2 - v$ are pairwise vertex-disjoint induced paths whose endpoints are nonadjacent neighbors of~$v$, and whose internal vertices avoid~$N_G[v]$. To search for a flower of order two, we try all~$\Oh(n^4)$ tuples~$(s_1,t_1,s_2,t_2) \in (N_G(v))^4$ for the endpoints of the paths~$P_1$ and~$P_2$ in the set~$N_G(v)$. For all tuples consisting of distinct vertices for which~$s_1 t_1 \not \in E(G)$ and~$s_2 t_2 \not \in E(G)$, we formulate an instance of the \textsc{Disjoint Paths} problem in~$G-(N_G[v] \setminus \{s_1, t_1, s_2, t_2\})$ in which we want to connect~$s_i$ to~$t_i$ for~$i \in \{1,2\}$ by pairwise vertex-disjoint paths. Any FPT-algorithm for the general \textsc{$k$-Disjoint Paths} problem suffices for this purpose as the parameter is constant. We may also use the linear-time algorithm for \textsc{$k$-Disjoint Paths in Chordal Graphs} due to Kammer and Tholey~\cite{KammerT09} since~$G - v$ is chordal. If there are vertex-disjoint~$s_i t_i$-paths~$P_1$ and~$P_2$, then we can shortcut them to \emph{induced} paths, which gives a flower of order two. Conversely, if there is a flower of order two, then we will find such paths in the iteration where we guess the tuple containing their endpoints. Hence if the algorithm loops through all tuples without finding a solution to the \textsc{Disjoint Paths} problem, we may conclude that there is no flower of order two and terminate.
\end{proof}

We are now ready to prove Lemma~\ref{lemma:algorithm:packing:vs:covering}, which we re-state for completeness.

\begin{lemma*}
\erdosposa
\end{lemma*}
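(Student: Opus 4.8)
The strategy is a local-search / augmentation approach to build a large $v$-flower, interleaved with a greedy hitting-set construction that kicks in once augmentation stalls. Recall from the discussion above that a $v$-flower of order $k$ is equivalent to a collection $P_1,\ldots,P_k$ of pairwise vertex-disjoint induced paths in $G-v$, each joining two nonadjacent neighbors of $v$, with all internal vertices avoiding $N_G[v]$; call such a path a \emph{petal}. We maintain a partial flower $\{P_1,\ldots,P_j\}$ and a partial hitting set $S$ (disjoint from $\{v\}$), together with the invariant that every hole of $G$ that is not already broken by $S$ must use $v$ and a petal-like path through $G - v$. At each stage we try to augment: we look for an induced path $P_{j+1}$ in $G - v$ between two nonadjacent neighbors of $v$, with internal vertices avoiding $N_G[v]$, that is disjoint from the current petals and from $S$. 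If we find one, we add it; if we fail, we argue that the vertices already ``blocking'' all such paths can be summarized by a bounded number of vertices which we add to $S$, and then we have a certificate that $G - S$ is chordal. Proposition~\ref{proposition:twoflower} (or its ideas) furnishes the augmentation subroutine: after fixing the four endpoints in $N_G(v)$, one solves a constant-parameter \textsc{Disjoint Paths} instance in $G - v$, which is chordal, using Kammer--Tholey, and then shortcuts to an induced path.

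The key structural point making this work — and the reason the hitting set is only of size $12k$ — is that chordal graphs have strong separation structure via clique trees. When augmentation fails, there is a clique tree $(T,\beta)$ of $G - v$, and each petal $P_i$ traces a path in $T$; the obstruction to finding a new petal disjoint from $P_1,\ldots,P_j$ and $S$ must be ``localized'' near the existing petals. Concretely, I expect the argument to go: if one cannot route a new induced path avoiding the used vertices, then for each existing petal $P_i$ there is a small set (a constant number, contributing the factor behind $12$) of ``critical'' vertices — roughly the endpoints in $N_G(v)$ together with a bounded number of adhesion vertices where the petal's subtree in $T$ could be cut — whose removal, across all $i \leq j$, destroys every remaining hole through $v$. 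One then sets $S$ to be this union of at most $12j$ vertices; since the petals $P_1,\ldots,P_j$ certify a $v$-flower of order $j$, and $G - S$ is chordal, taking $k := j$ gives the claimed bound $|S| \leq 12k$. Running Proposition~\ref{proposition:twoflower}-style augmentation at most $n$ times keeps everything polynomial.

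The main obstacle is the correctness of the stall analysis: proving that when no augmenting petal exists, a \emph{bounded} (constant-per-petal) set of vertices already intersects all remaining holes through $v$. This requires a careful clique-tree argument. The natural route is: take any hole $C$ through $v$ surviving the deletion of the chosen $S$; then $C - v$ is an induced path in $G - v$ between nonadjacent neighbors of $v$ avoiding $N_G[v]$ internally, so it is a valid petal-candidate except that it must overlap some $P_i$ (else it would have been an augmenting path). Using Propositions~\ref{proposition:inducedpath:on:minimalpath} and~\ref{proposition:induced:paths:in:clique:tree}, one tracks where $C - v$ meets $P_i$ in the clique tree and shows the overlap can be pushed to one of the few critical vertices of $P_i$ — an ``uncrossing''/rerouting argument: if $C - v$ intersected $P_i$ only away from its critical vertices, one could splice $C - v$ and $P_i$ to obtain a genuinely new disjoint induced petal, contradicting the stall. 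Getting the exact constant (and verifying the endpoints-plus-adhesions bookkeeping goes through without an off-by-a-factor error) is the delicate part; the rest — the augmentation subroutine, polynomial running time, and the equivalence between flowers and petal systems — is routine given the machinery already set up.
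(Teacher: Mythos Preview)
Your high-level shape is right --- local search to grow a flower, then a clique-tree-based hitting set with a constant-per-petal bound --- and this is indeed what the paper does. But two of the load-bearing technical ideas are missing, and without them the stall analysis you sketch does not go through.

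First, the local search needs more than the single augmentation rule you describe (add a new disjoint petal). The paper uses three rules: besides your rule, it also (II) replaces a single petal by a $v$-flower of order two when possible (this is where Proposition~\ref{proposition:twoflower} is actually used), and (III) replaces a petal $P_i$ with endpoints $s,t$ by one with endpoints $s,t'$ whenever $t' \in N_G(v)\setminus V(\C)$ is strictly closer to $s$ in the clique tree and such a rerouting exists. Both (II) and (III) are essential: the eventual contradiction in the per-petal bound is to rule~(II), and ruling out a bad subcase along the way requires rule~(III). Your plan only has rule~(I), so ``maximal with respect to augmentation'' is too weak a hypothesis to force the size bound.

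Second, the hitting set is not built petal-by-petal from a fixed list of ``critical vertices'' on each $P_i$. Instead, for every $u \in N_G(v)\setminus V(\C)$ one defines a \emph{cutpoint} $\pi(u)$: the first edge on the root-path of $\top(u)$ whose adhesion lies in $N_G(v)\cup V(\C)$. The set $S$ consists of the petal endpoints together with $\adh(\pi(u))\setminus N_G(v)$ over all such $u$. That $S$ lands on the petals is a \emph{consequence}; the construction is driven by the outside vertices $u$, not the petals. The per-petal bound is then proved by contradiction: if some $P_i$ contributes more than $12$ vertices to $S$, one finds two nonadjacent interior vertices $x,y\in V(P_i)\cap S$ put there by cutpoints of some $x',y'\in N_G(v)\setminus V(\C)$, and from the positions of $\pi(x'),\pi(y')$ in the clique tree one constructs disjoint $x'x$- and $y'y$-paths avoiding $V(\C)\cup N_G(v)$; splicing these onto the two halves of $P_i$ yields a $2$-flower in $G-(V(\C)\setminus V(C_i))$, contradicting maximality under rule~(II). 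Your proposed uncrossing (``splice $C-v$ and $P_i$ to obtain a new disjoint petal, contradicting the stall'') does not work as stated: splicing two intersecting induced paths does not produce a path disjoint from both, so you cannot contradict rule~(I) this way. The right target of the contradiction is rule~(II), and getting there needs the cutpoint machinery and rule~(III).
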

\begin{proof}
We first present an algorithm to compute a $v$-flower~$\C$. Then we show how to compute a hitting set based on the structure of the flower, and argue that the constructed set is not too large. To initialize the procedure, we compute a clique tree~$(T,\beta)$ of~$G - v$ and root it at an arbitrary node.

\paragraph{Computing a flower.} The algorithm to compute the flower is a local search procedure that maintains a $v$-flower~$\C$, initially empty, and iteratively improves this structure according to three rules that are described below.

\begin{enumerate}[(I)]
	\item If~$G - (V(\C) \setminus \{v\})$ contains a hole~$C$, then add~$C$ to~$\C$. Equivalently, if there are two distinct nonadjacent vertices~$x,y \in N_G(v) \setminus V(\C)$ such that graph~$G - (V(\C) \cup N_G[v] \setminus \{x,y\})$ contains an induced~$xy$-path~$P$, then add the hole~$\{v\} \cup V(P)$ to~$\C$.\label{improve:addpath}
	\item If the current flower~$\C$ contains a hole~$C_i$ such that the graph~$G - (V(\C) \setminus V(C_i))$ contains a $v$-flower~$\C' = \{C'_i, C''_i\}$ of order two, then remove the hole~$C_i$ from flower~$\C$ and add the holes~$C'_i$ and~$C''_i$ instead.\label{improve:addtwoflower}
	\item If the current flower~$\C$ contains a hole~$C_i$ such that the path~$P_i := C_i - v$ has endpoints~$s$ and~$t$, and there exists a~$t' \in N_G(v) \setminus (V(\C) \cup N_G(s))$ such that:
	\begin{itemize}
		\item $d_T(\binv(s), \binv(t)) > d_T(\binv(s), \binv(t'))$, and
		\item the graph~$(G - (N_G[v] \setminus \{s,t'\})) - (V(\C) \setminus V(P_i))$ contains an induced~$st'$-path~$P'$, i.e., there is an induced~$st'$ path~$P'$ in~$G$ whose interior vertices avoid the closed neighborhood of~$v$ and avoid all vertices used on other holes of the flower,
	\end{itemize}
	then replace~$C_i$ in the flower by the hole consisting of~$\{v\} \cup V(P')$.\label{improve:shorten}
\end{enumerate}

\begin{figure}[t]
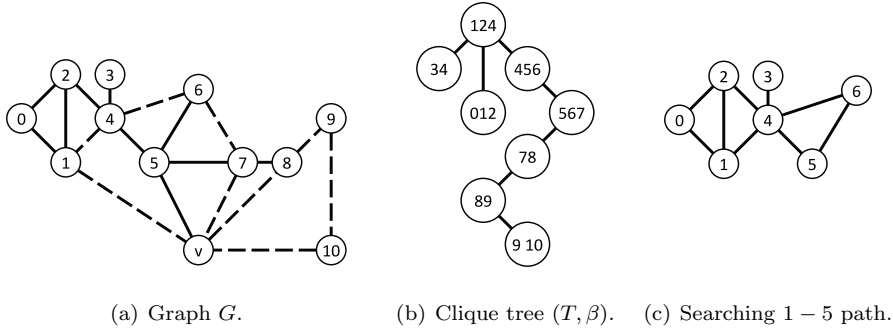

\begin{center}
\subfigure[Graph~$G$.]{\label{fig:localsearch1}
\image{LocalSearch1}
}
\subfigure[Clique tree~$(T,\beta)$.]{\label{fig:localsearch2}
\image{LocalSearch2}
}
\subfigure[Searching $1-5$ path.]{\label{fig:localsearch3}
\image{LocalSearch3}
}
\caption{\ref{fig:localsearch1} A graph~$G$ such that~$G-v$ is chordal. A $v$-flower of order two is highlighted by dotted edges. \ref{fig:localsearch2} A clique tree of~$G-v$, visualized by drawing the contents bags inside the nodes. With respect to this clique tree, the $v$-flower can be improved by Step~(\ref{improve:shorten}) by choosing~$P_i := (s=1,4,6,7=t)$ and~$t' := 5$. Note that~$d_T(\binv(s), \binv(t)) = 2$, while~$d_T(\binv(s),\binv(t')) = 1$. \ref{fig:localsearch3} The graph~$(G - (N_G[v] \setminus \{s,t'\})) - (V(\C) \setminus V(P_i))$ contains an induced $st'$-path~$P' := (1,4,5)$.} \label{fig:localsearch}
\end{center}
\end{figure}

Figure~\ref{fig:localsearch} illustrates these concepts. The applicability of each of the three improvement steps can be tested in polynomial time. For~(\ref{improve:addpath}) and~(\ref{improve:shorten}) this follows from the fact that induced paths can be found using breadth-first search; for~(\ref{improve:addtwoflower}) this follows from Proposition~\ref{proposition:twoflower}. The number of improvement steps of type~(\ref{improve:addpath}) and~(\ref{improve:addtwoflower}) is bounded by the maximum order of the flower, which is at most~$|V(G)|$. Step~(\ref{improve:shorten}) replaces a path in~$\C - v$ by one whose endpoints are closer together in~$T$. As there are only~$|V(G)|^2$ pairs of possible endpoints, a path cannot be replaced more than~$|V(G)|^2$ times by a better path. Since the flower contains at most~$|V(G)|$ paths at any time, there can be no more than~$|V(G)|^3$ successive applications of Step~(\ref{improve:shorten}) before the algorithm either terminates, or applies a step that increases the order of the flower. The algorithm therefore terminates after a polynomial number of steps. It is easy to verify that the stated conditions ensure that the structure~$\C$ is a $v$-flower at all times. Hence we can compute a $v$-flower~$\C$ that is maximal with respect to the three improvement rules in polynomial time.

\paragraph{Computing a hitting set.} Based on a maximal flower~$\C = \{C_1, \ldots, C_k\}$ and the associated set of paths~$P_1, \ldots, P_k$ in~$G - v$, we compute a set~$S \subseteq V(G) \setminus \{v\}$ that hits all holes. To every vertex of~$G - v$ we associate an edge of the decomposition tree~$T$. For~$u \in V(G) \setminus \{v\}$, define the \emph{cutpoint above~$u$}, denoted~$\pi(u)$, as follows. Let~$\pi(u)$ be the first edge~$e \in E(T)$ on the path from~$\top(u)$ to the root for which~$\adh(e) \subseteq N_G(v) \cup V(\C)$, or NIL if no such edge exists. The hitting set~$S$ is constructed as follows.

\begin{enumerate}[(a)]
	\item For each path~$P_i$ with~$i \in [k]$, add the endpoints of~$P_i$ to~$S$.\label{hittingset:endpoints}
	\item For each vertex~$u \in N_G(v) \setminus V(\C)$ with~$\pi(u) \neq$ NIL, add~$\adh(\pi(u)) \setminus N_G(v)$ to~$S$.\label{hittingset:cutpoint}
\end{enumerate}

It is easy to carry out this construction in polynomial time.

\begin{claim} \label{claim:s:subset:flower}
The set~$S$ is a subset of~$V(\C) \setminus \{v\}$.
\end{claim}
\begin{claimproof}
The endpoints of the paths~$P_i$ that are added to~$S$ in the first step clearly belong to the flower. The vertices added to~$S$ in the second step belong to~$\adh(\pi(u)) \setminus N_G(v)$. By definition of cutpoint we have~$\adh(\pi(u)) \subseteq N_G(v) \cup V(\C)$, hence~$\adh(\pi(u)) \setminus N_G(v) \subseteq V(\C)$. As~$(T,\beta)$ is a clique tree of~$G - v$, vertex~$v$ does not occur in any set~$\adh(\pi(u))$ and is not added to~$S$.
\end{claimproof}

\begin{claim} \label{claim:saturated:edge}
If~$\C$ is a maximal $v$-flower, then for any distinct~$s, t \in N_G(v) \setminus V(\C)$ with~$st \not \in E(G)$, the path~$\MPtd(s,t)$ contains~$\pi(s)$ or~$\pi(t)$.
\end{claim}
\begin{claimproof}
Let~$s,t \in N_G(v) \setminus V(\C)$ be nonadjacent in~$G$. Assume for a contradiction that~$\adh(e) \setminus (N_G(v) \cup V(\C)) \neq \emptyset$ for all~$e$ on~$\MPtd(s,t)$. Let~$U := (N_G(v) \cup V(\C)) \setminus \{s,t\}$, so~$\adh(e) \setminus U \neq \emptyset$ for all edges~$e$ on~$\MPtd(s,t)$. By applying Proposition~\ref{proposition:path:from:adhesions} to the chordal graph~$G - v$ we find an induced $st$-path in~$(G - v) - U$. Such a path forms a hole together with~$v$, since~$st \not \in E(G)$. So Step~(\ref{improve:addpath}) can be applied to increase the order of the $v$-flower,  contradicting the maximality of~$\C$.

Hence there is an edge~$e^*$ on~$\MPtd(s,t)$ for which~$\adh(e^*) \subseteq N_G(v) \cup V(\C)$. Observe that the trees~$\binv(s)$ and~$\binv(t)$ are vertex-disjoint as each bag forms a clique in~$G$ and~$st \not \in E(G)$. We conclude with a case distinction; see Figure~\ref{fig:pathcontainscutpoint} for an illustration.

\begin{figure}[t]
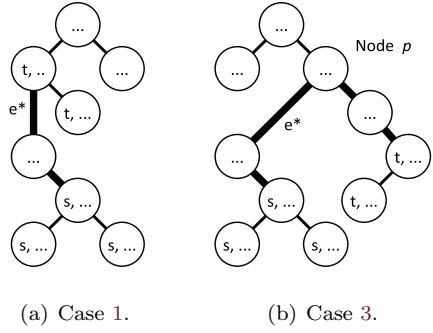

\begin{center}
\subfigure[Case~\ref{case:containcutpointsbelowt}.]{\label{fig:pathcontainscutpoint1}
\image{PathContainsCutpoint1}
}
\subfigure[Case~\ref{case:containcutpointlca}.]{\label{fig:pathcontainscutpoint2}
\image{PathContainsCutpoint2}
}
\caption{Illustration of the proof of Claim~\ref{claim:saturated:edge}. The figure shows a tree decomposition~$(T,\beta)$ of~$G-v$. The minimal path~$\MPtd(s,t)$ in~$T$ is drawn with thick edges. \ref{fig:pathcontainscutpoint1} All bags containing~$s$ are contained in the subtree of~$T$ rooted at the highest bag containing~$t$. \ref{fig:pathcontainscutpoint2} If the first two cases do not hold, then the bags containing~$s$ are contained in a different subtree of the lowest common ancestor~$p$ than the bags containing~$t$.} \label{fig:pathcontainscutpoint}
\end{center}
\end{figure}

\begin{enumerate}
	\item \label{case:containcutpointsbelowt} If~$\binv(s)$ is contained in~$T_{\top(t)}$, then some vertex of~$\binv(t)$ is an ancestor to all vertices of~$\binv(s)$. Hence~$\MPtd(s,t)$ is the path from~$\top(s)$ to its first ancestor in~$\binv(t)$, so~$\MPtd(s,t) \subseteq \MPtd(s, \root)$. As~$e^*$ shows that at least one edge on~$\MPtd(s, \root)$ satisfies the defining condition for a cutpoint,~$\pi(s)$ is a well-defined. As~$\pi(s)$ is not higher than~$e^*$, which lies on~$\MPtd(s,t)$, the edge~$\pi(s)$ lies on~$\MPtd(s,t)$.
	\item If~$\binv(t)$ is contained in~$T_{\top(s)}$, then reversing the roles of~$s$ and~$t$ in the previous argument shows that~$\pi(t)$ lies on~$\MPtd(s,t)$.
	\item \label{case:containcutpointlca} If neither case holds, then~$\binv(s)$ and~$\binv(t)$ are contained in different child subtrees of~$p := \lca(\top(s), \top(t))$, implying that~$\MPtd(s,t)$ is the concatenation of a path~$\Ptd_s$ from~$\top(s)$ to its ancestor~$p$, and a path~$\Ptd_t$ from~$\top(t)$ to its ancestor~$p$. Since~$e^*$ lies on~$\MPtd(s,t)$, it lies on~$\Ptd_s$ or~$\Ptd_t$. Suppose that~$e^*$ lies on~$\Ptd_s$. Then~$\pi(s)$ is well-defined, lies above~$\top(s)$ and no higher than~$e^*$, and therefore lies on~$\MPtd(s,t)$. The case when~$e^*$ lies on~$\Ptd_t$ is symmetric.\qedhere
\end{enumerate}
\end{claimproof}

\begin{claim} \label{claim:s:hittingset}
The graph~$G - S$ is chordal.
\end{claim}
\begin{claimproof}
Assume for a contradiction that~$G - S$ contains a hole~$C^*$. Since~$G-v$ is chordal,~$C^*$ passes through~$v$ so that~$C^* - v$ is an induced $st$-path in~$G-S-v$ for nonadjacent~$s,t \in N_G(v)$. By Step~(\ref{hittingset:endpoints}) we know that~$s$ and~$t$ are not endpoints of paths in~$\C - v$, with respect to the computed maximal flower~$\C$. Since paths in~$\C - v$ cannot contain members of~$N_G(v)$ as interior vertices either, we find~$s,t \not \in V(\C)$.

Since~$s$ and~$t$ are nonadjacent vertices of~$N_G(v) \setminus V(\C)$, by Claim~\ref{claim:saturated:edge} we know~$\MPtd(s,t)$ contains~$\pi(s)$ or~$\pi(t)$. Let~$e^* \in \{\pi(s), \pi(t)\}$ be a well-defined edge on~$\MPtd(s,t)$. As~$s, t \in N_G(v) \setminus V(\C)$, by Step~(\ref{hittingset:cutpoint}) of the construction of~$S$ we know that~$\adh(e^*) \setminus S \subseteq N_G(v)$. As only one bag of~$\MPtd(s,t)$ contains~$s$ by minimality of~$\MPtd(s,t)$, and only one bag of~$\MPtd(s,t)$ contains~$t$, it follows that~$s,t \not \in \adh(e^*)$. On the other hand, since~$\MPtd(s,t)$ connects~$\binv(s)$ to~$\binv(t)$ in~$T$ and~$e^*$ lies on~$\MPtd(s,t)$, by Observation~\ref{observation:stpaths:use:adhesion} we know that the $st$-path~$P_v := C^* - v$ in~$G - S - v$ contains a vertex~$u \in \adh(e^*) \setminus S \subseteq N_G(v) \setminus \{s,t\}$. This shows that~$P_v$ contains a vertex from~$N_G(v)$ in its interior, and hence has a chord; a contradiction to the assumption that~$C^*$ is a hole.
\end{claimproof}

The claims so far established that we can compute a maximal $v$-flower~$\C$ and a corresponding chordal deletion set~$S$ that avoids~$v$, in polynomial time. To complete the proof, it remains to show that the size of~$S$ is bounded by twelve times the order of the computed flower. This is the most technical part. It follows from the next claim.

\begin{claim} \label{claim:twelve:from:path}
If~$C_i$ is a hole of the maximal flower~$\C$ and~$P_i := C_i - v$, then~$S$ contains at most~$12$ vertices from~$P_i$.
\end{claim}
\begin{claimproof}
Assume for a contradiction that~$C_i$ is a hole in~$\C$ such that~$S$ contains at least~$13$ vertices from the path~$P_i := C_i - v$. Let~$s,t \in N_G(v)$ be the endpoints of~$P_i$. We identify two special vertices in~$S \cap V(P_i)$ that allow us to derive a contradiction. Define~$p := \lca(\top(s), \top(t))$ and note that~$p$ may coincide with~$\top(s)$ or~$\top(t)$. Since~$\beta(p)$ is a clique in~$G-v$ and~$P_i$ is an induced path, we know~$|\beta(p) \cap V(P_i)| \leq 2$. Similarly,~$P_i$ contains at most two vertices of each of the sets~$\adh(\pi(s))$ and~$\adh(\pi(t))$.

Let~$X$ be the set containing the first two and last two vertices on~$P_i$, together with~$\beta(p) \cap V(P_i)$, $\adh(\pi(s)) \cap V(P_i)$, and~$\adh(\pi(t)) \cap V(P_i)$. Since~$|X| \leq 10$ and~$|S \cap V(P_i)| \geq 13$, there are at least three vertices in~$(S \cap V(P_i)) \setminus X$. Since they lie on the induced path~$P_i$, these three vertices do not form a clique implying that there are two distinct vertices in~$(S \cap V(P_i)) \setminus X$ that are not adjacent in~$G$. Call these~$x$ and~$y$ and choose their labels such that when traversing the path~$P_i$ from~$s$ to~$t$, we visit~$x$ before visiting~$y$. We will use the pair~$\{x,y\}$ to derive a contradiction to the maximality of~$\C$ with respect to the three presented improvement steps.

Since~$x$ and~$y$ are internal vertices of~$P_i$ and all paths in~$\C - v$ are vertex-disjoint,~$x$ and~$y$ are not endpoints of any path in~$\C - v$ and therefore Step~(\ref{hittingset:endpoints}) of the hitting set construction is not responsible for adding them to~$S$. Consequently,~$x$ and~$y$ are in~$S$ because there are vertices~$x',y' \in N_G(v) \setminus V(\C)$ with well-defined cutpoints~$\pi(x'), \pi(y')$ such that~$x \in \adh(\pi(x')) \setminus N_G(v)$ and~$y \in \adh(\pi(y')) \setminus N_G(v)$ that caused~$x$ and~$y$ to be added to~$S$ in Step~(\ref{hittingset:cutpoint}); we will argue later that~$x'$ and~$y'$ are distinct.

Since no internal vertex of~$\MPtd(s,t)$ lies in~$\binv(s)$ or~$\binv(t)$, there is a unique tree~$T^*$ in the forest~$T - (\binv(s) \cup \binv(t))$ that contains the internal nodes on~$\MPtd(s,t)$. To derive a contradiction we will argue for the following properties; note that~(\ref{twelve:twoflower}) contradicts the maximality of~$\C$ with respect to Step~(\ref{improve:addtwoflower}).

\begin{enumerate}[(i)]
	\item $\pi(x')$ and~$\pi(y')$ are distinct edges, hence~$x' \neq y'$.\label{twelve:xy:distinct:cutpoint}
	\item $\pi(x')$ and~$\pi(y')$ both belong to~$T^*$.\label{twelve:xy:cutpoints:in:subtree}
	\item The edges~$\pi(x')$ and~$\pi(y')$ are distinct from~$\pi(s)$ and~$\pi(t)$.\label{twelve:primetost:distinct:cutpoint}
	\item Neither one of~$\{x',y'\}$ is simultaneously adjacent to both~$s$ and~$t$.\label{twelve:prime:not:both:st}
	\item Neither one of~$\{x',y'\}$ is adjacent to~$s$ or~$t$.\label{twelve:prime:not:one:st}
	\item There are paths from~$x'$ to~$x$ and from~$y'$ to~$y$ that are pairwise vertex-disjoint and whose interior avoids~$V(\C) \cup N_G(v)$.\label{twelve:twopaths}
	\item The graph~$G - (V(\C) \setminus V(C_i))$ contains a $v$-flower of order two.\label{twelve:twoflower}
\end{enumerate}

(\ref{twelve:xy:distinct:cutpoint}) If~$\pi(x') = \pi(y')$, then~$x,y \in \adh(\pi(x')) = \adh(\pi(y'))$ are adjacent in~$G-v$ because they occur in a common bag of a clique tree. This contradicts our choice of~$x$ and~$y$ as nonadjacent vertices. Since~$\pi(x') \neq \pi(y')$ we must have~$x' \neq y'$.

(\ref{twelve:xy:cutpoints:in:subtree}) Since~$\pi(x')$ is an edge of~$T$ whose adhesion contains~$x$, it follows that~$\pi(x')$ is an edge of the subtree~$\binv(x)$. Similarly,~$\pi(y')$ is an edge of~$\binv(y)$. It therefore suffices to prove that the subtrees~$\binv(x)$ and~$\binv(y)$ are entirely contained in~$T^*$. This follows from Proposition~\ref{proposition:induced:paths:in:clique:tree}, since both~$x$ and~$y$ are vertices on the induced~$st$-path~$P_i$ in the chordal graph~$G - v$ that are not among the first two or last two vertices.

(\ref{twelve:primetost:distinct:cutpoint}) Suppose that~$\pi(x')$ equals~$\pi(s)$ or~$\pi(t)$. Then~$x \in \adh(\pi(x'))$ is a vertex on~$P_i$ that is in~$\adh(\pi(s)) \cup \adh(\pi(t))$. But this contradicts our choice of~$x$ to be a vertex of~$(S \cap V(P_i)) \setminus X$ above, since~$X$ contains~$\adh(\pi(s)) \cap V(P_i)$ and~$\adh(\pi(t)) \cap V(P_i)$. The argument when~$\pi(y')$ equals~$\pi(s)$ or~$\pi(t)$ is symmetric.

(\ref{twelve:prime:not:both:st}) Assume for a contradiction that~$x'$ is adjacent to both~$s$ and~$t$. We make a case distinction on the structure of~$\binv(s)$ and~$\binv(t)$ in the clique tree~$T$.

\begin{enumerate}
	\item If~$\binv(s)$ is contained in~$T_{\top(t)}$, then let~$p$ be the lowest node of~$\binv(t)$ for which~$\binv(s) \subseteq T_p$. Then tree~$T^*$ as defined above is contained in~$T_p$. As we assumed~$x'$ is adjacent to both~$s$ and~$t$, vertices~$x'$ and~$t$ occur in a common bag, showing that~$\top(x')$ is a proper ancestor to all nodes of~$T^*$. Since~$\pi(x')$ is an edge on~$\MPtd(x', \root)$, the edge~$\pi(x')$ lies above~$\top(x')$ while all of~$T^*$ lies below~$\top(x')$. This implies that~$\pi(x')$ does not belong to~$T^*$, contradicting~(\ref{twelve:xy:cutpoints:in:subtree}).
	
	\item If~$\binv(t)$ is contained in~$T_{\top(s)}$, then reversing the roles of~$s$ and~$t$ in the previous argument yields a contradiction.
	
	\item If neither case holds, then~$\binv(s)$ and~$\binv(t)$ are contained in different child subtrees of~$p := \lca(\top(s), \top(t))$. It follows that~$\MPtd(s,t)$ is the concatenation of a path from~$\top(s)$ to~$p$ with a path from~$\top(t)$ to~$p$. Since~$x'$ is adjacent to both~$s$ and~$t$, we know that~$\binv(x')$ intersects both~$\binv(s)$ and~$\binv(t)$, and therefore contains~$p$. It follows that~$\top(x')$ is an ancestor of~$p$, and therefore that the edge~$\pi(x')$ connects two ancestors of~$p$. By Proposition~\ref{proposition:inducedpath:on:minimalpath} we know that~$x$, which is a vertex on the induced $st$-path~$P_i$, occurs in a bag on~$\MPtd(s,t)$. Since~$x \in \adh(\pi(x'))$ and~$\pi(x')$ is an edge connecting two ancestors of~$p$, while all paths from ancestors of~$p$ to nodes in~$\MPtd(s,t)$ go through~$p$, we have~$x \in V(P_i) \cap \beta(p)$. This contradicts the choice of~$x$, as we chose it from a set that excludes~$V(P_i) \cap \beta(p)$.
\end{enumerate}

The argument when~$y'$ is adjacent to both~$s$ and~$t$ is symmetric.

(\ref{twelve:prime:not:one:st}) Assume for a contradiction that~$z' \in \{x',y'\}$ is adjacent to~$r \in \{s,t\}$; by~(\ref{twelve:prime:not:both:st}) we then already know that~$z'$ is not adjacent to~$\hat{r} := \{s,t\} \setminus \{r\}$. We distinguish two cases, depending on the relative position of~$\binv(r)$ and~$\binv(\hat{r})$.

\begin{enumerate}
	\item If~$\binv(\hat{r}) \subseteq T_{\top(r)}$, then let~$p$ be the lowest node of~$\binv(r)$ for which~$\binv(\hat{r}) \subseteq T_p$. It follows that~$T^* \subseteq T_p$. Now observe that since~$z'$ is adjacent to~$r$ in~$G$, the trees~$\binv(z')$ and~$\binv(r)$ share at least one node, and hence~$\top(z')$ is not a proper descendant of~$p$. Since~$\pi(z')$ is an edge on~$\MPtd(z', \root)$, it follows that~$\pi(z')$ does not lie in~$T^*$. But this contradicts~(\ref{twelve:xy:cutpoints:in:subtree}).
	\item If the previous case does not hold, then~$\MPtd(r,\hat{r})$ does not enter~$\binv(r)$ ``from below'', which implies that~$\MPtd(r,\hat{r})$ contains the edge from~$\top(r)$ to its parent. We now distinguish two cases.
	\begin{enumerate}
		\item If~$\top(z')$ is a proper ancestor of~$\top(r)$, then since~$z'$ occurs in a common bag with~$r$ we must have that~$z'$ is contained in the bag of the parent of~$\top(r)$. We claim that, in this situation, Step~(\ref{improve:shorten}) is applicable to improve the structure of the flower, contradicting the maximality of~$\C$ with respect to the given rules. To see this, observe that the presence of~$z'$ in the parent bag of~$r$, together with the fact that~$\MPtd(r,\hat{r})$ uses the edge from~$\top(r)$ to its parent, implies that~$d_T(\binv(z'), \binv(\hat{r}))$ is strictly smaller than~$d_T(\binv(r), \binv(\hat{r}))$; the latter quantity is exactly the number of edges on~$\MPtd(r,\hat{r})$. By Observation~\ref{observation:stpaths:use:adhesion}, the $st$-path~$P_i$ contains a vertex~$u$ of~$\adh(e)$, where~$e$ is the edge from~$\top(r)$ to its parent in~$T$. Consequently,~$u$ is not~$r$ itself since~$r$ does not appear in the parent bag of~$\top(r)$. Since both~$u$ and~$z'$ appear in that parent bag, we know that~$z'$ is adjacent to~$u$ and therefore that~$z'$ together with the subpath of~$P_i$ from~$u$ to~$\hat{r}$ is a~$z'\hat{r}$-path in~$G$. Hence, there is an induced $z'\hat{r}$-path in~$G$ on a vertex subset of~$V(P_i) \cup \{z'\}$. As~$z' \in \{x',y'\} \subseteq N_G(v) \setminus V(\C)$ and~$z$ is not adjacent to~$\hat{r}$, which is one of the endpoints of~$P_i$, this shows that Step~(\ref{improve:shorten}) can be applied to improve the structure of the flower~$\C$; a contradiction.
		\item If~$\top(z')$ is not a proper ancestor of~$\top(r)$, then since~$z'$ and~$r$ occur in a common bag we know that~$\top(z') \in T_{\top(r)}$. Since the cutpoint~$\pi(z')$ above~$z'$ lies on~$\MPtd(z', \root)$ by definition, and is contained in~$T^*$ (by~(\ref{twelve:xy:cutpoints:in:subtree})) which lies ``above''~$\top(r)$ by the precondition to this case, it follows that~$\pi(z')$ is an edge between two ancestors of~$\top(r)$. As the adjacent vertices~$z'$ and~$r$ occur in a common bag and~$\top(z')$ is not higher than~$\top(r)$, path~$\MPtd(z', \root)$ goes through~$\top(r)$ and all bags on the path from~$\top(z')$ to~$\top(r)$ contain~$r$. Since no bag in~$T^*$ contains~$r$ and~$\pi(z') \in T^*$ it follows that~$\pi(z')$ is an edge connecting two ancestors of~$r$, and must therefore be the first edge~$e$ on the path from~$\top(r)$ to the root for which~$\adh(e) \subseteq N_G(v) \cup V(\C)$. But then~$\pi(z') = \pi(r)$, which contradicts~(\ref{twelve:primetost:distinct:cutpoint}).
	\end{enumerate}
\end{enumerate}

\widefigure{
\subfigure[Clique tree.]{\label{fig:twoflower1}
\bigimage{TwoFlower1}
}
\subfigure[Intersecting paths.]{\label{fig:twoflower2}
\image{TwoFlower2}
}
\subfigure[$v$-flower.]{\label{fig:twoflower3}
\image{TwoFlower3}
}}{
\caption{\ref{fig:twoflower1} Illustration for part~(\ref{twelve:twopaths}) of the proof of Claim~\ref{claim:twelve:from:path}. Example of how the different entities can be arranged in the clique tree~$(T,\beta)$ of~$G-v$. \ref{fig:twoflower2} The structure of a hypothetical intersection in the interiors of the paths~$P_{z'}$ and~$P_{\hat{z}'}$. Path~$Q$ is shown by thick edges. \ref{fig:twoflower3} Illustration for part~(\ref{twelve:twoflower}) of Claim~\ref{claim:twelve:from:path}. Drawing of how the constructed paths~$P_{x'}$ and~$P_{y'}$ are combined with the original petal~$P_i := C_i - v$ of the flower, to obtain a flower of order two (visualized by thick edges).} \label{fig:twoflower}
}

(\ref{twelve:twopaths}) We first construct the paths and then prove that they have interiors that avoid~$V(\C) \cup N_G(v)$ and are pairwise vertex-disjoint. Let~$z' \in \{x',y'\}$ and let~$z$ be the corresponding vertex in~$\{x,y\}$. Let~$\Ptd'$ be the path in~$T$ from~$\top(z')$ to the lower endpoint of~$\pi(z')$. Since~$z \in \adh(\pi(z'))$ by definition of~$x,y,x',y'$, we know that~$z$ appears in the bag of the bottom endpoint of~$\pi(z')$ and so~$\Ptd'$ connects~$\binv(z)$ to~$\binv(z')$ in~$T$. Since none of the edges of~$\Ptd'$ satisfied the condition for being the cutpoint above~$z'$, for each edge~$e$ on~$\Ptd'$ the set~$\adh(e) \setminus (N_G(v) \cup V(\C))$ is not empty. Define~$U := (N_G(v) \cup V(\C)) \setminus \{z,z'\}$. Applying Proposition~\ref{proposition:path:from:adhesions} to the chordal graph~$G-v$ where~$z'$ plays the role of~$s$ and~$z$ plays the role of~$t$, we establish that there is an induced $zz'$-path in the graph~$(G-v)-U$. The interior vertices on this path avoid~$U \cup \{z,z'\}$ (since the latter are endpoints), and therefore the interior avoids~$N_G(v) \cup V(\C)$ as demanded by~(\ref{twelve:twopaths}). As this holds for both choices of~$z$, we obtain an induced~$xx'$-path~$P_{x'}$ and an induced~$yy'$-path~$P_{y'}$. It remains to prove that~$P_{x'}$ and~$P_{y'}$ are pairwise vertex-disjoint. By~(\ref{twelve:xy:distinct:cutpoint}) we know that~$x' \neq y'$. As we chose~$x$ and~$y$ to be distinct vertices, and~$x,y \in V(\C)$ while~$x',y' \in N(v) \setminus V(\C)$, it is clear that the endpoints of~$P_{x'}$ and~$P_{y'}$ are four distinct vertices. It remains to prove that the interior of~$P_{x'}$ is vertex-disjoint from the interior of~$P_{y'}$. If one of the two paths has no interior, then this is trivial. So assume for a contradiction that both paths have an interior, which includes a vertex common to both paths. Orient both paths from the prime to the non-prime endpoint.

Choose~$z' \in \{x',y'\}$ and~$\hat{z}' \in \{x',y'\} \setminus \{z'\}$ such that the bottom endpoint~$p$ of~$\pi(z')$ satisfies~$\pi(\hat{z}') \not \in T_p$. This is always possible; see Figure~\ref{fig:twoflower} for an illustration. Let~$z \in \{x,y\}$ depending on whether~$z' = x'$ or~$z' = y'$; let~$\hat{z} \in \{x,y\} \setminus \{z\}$. Since the two interiors of~$P_{z'}$ and~$P_{\hat{z}'}$ intersect, there is a path~$Q$ in~$G-v$ from the successor~$\succ(z')$ of~$z'$ on~$P_{z'}$, to the predecessor$~\pred(\hat{z})$ of~$\hat{z}$ on~$P_{\hat{z}'}$, such that all vertices on~$Q$ belong to the interior of~$P_{x'}$ or~$P_{y'}$. So~$V(Q)$ avoids~$N_G(v) \cup V(\C)$. Vertex~$\succ(z')$ occurs in a bag of~$T_p$, since it is adjacent to~$z'$ and~$\top(z')$ is a descendant of both endpoints of~$\pi(z')$. To reach a contradiction we show that~$\pred(\hat{z})$ does not occur in a bag in~$T_p$. To that end, we first show that~$\hat{z}$ does not occur in a bag of~$T_p$. This follows from the fact that~$\hat{z}$ occurs in the adhesion of~$\pi(\hat{z}')$, that~$\pi(\hat{z}')$ is not contained in~$T_p$, and that if~$\hat{z}$ also occurred in a bag of~$T_p$ then this would force~$\hat{z}$ to be in~$\beta(p)$ which contains~$z \in \adh(\pi(z'))$; but this would imply that~$z$ and~$\hat{z}$ are adjacent, contradicting our choice of~$x$ and~$y$ in the beginning of the proof. So~$\hat{z}$ does not occur in~$T_p$, implying that~$\pred(\hat{z})$ occurs in some bag outside~$T_p$. Consider a path~$\Ptd$ in~$T$ from a bag in~$T_p$ containing~$\succ(z')$, to a bag outside~$T_p$ containing~$\pred(\hat{z}')$; we established that~$\Ptd$ contains the edge from the root of~$T_p$ to its parent, which is~$\pi(z')$. By Observation~\ref{observation:stpaths:use:adhesion}, the path~$Q$ connecting~$\succ(z')$ to~$\pred(\hat{z}')$ contains a vertex of~$\adh(\pi(z'))$. By definition of cutpoint, we have that~$\adh(\pi(z')) \subseteq N_G(v) \cup V(\C)$. But~$Q$ avoids~$N_G(v) \cup V(\C)$; a contradiction. It follows that~$P_{x'}$ and~$P_{y'}$ are indeed vertex-disjoint, proving~(\ref{twelve:twopaths}).

(\ref{twelve:twoflower}) Consider the $xx'$-path~$P_{x'}$ and the $yy'$-path~$P_{y'}$, whose existence we proved in~(\ref{twelve:twopaths}). The interiors of these paths avoid~$V(\C) \cup N_G(v)$, the endpoints~$x'$ and~$y'$ belong to~$N_G(v)$, and the endpoints~$x$ and~$y$ belong to~$V(\C) \setminus N_G(v)$. Consider the walk in~$G-v$ that starts at~$s$, traverses the $st$-path~$P_i = C_i - v$ until~$x$, and then traverses~$P_{x'}$ to~$x'$. Since~$s$ is not adjacent to~$x'$ by~(\ref{twelve:prime:not:one:st}) and~$x' \in N_G(v)$, this walk contains an induced $sx'$-path between two neighbors of~$v$, whose internal nodes avoid~$N_G(v)$. Moreover, the only vertices of~$\C$ used on this path belong to~$P_i$. Similarly, there is an induced $ty'$-path within the walk starting at~$t$, traversing~$P_i$ from~$t$ to~$y$, and then traversing~$P_{y'}$ from~$y$ to~$y'$. Since we chose~$x$ and~$y$ such that~$x$ occurs before~$y$, the pieces of~$P_i$ used by these two induced paths are vertex-disjoint. By~(\ref{twelve:twopaths}), the pieces of~$P_{x'}$ and~$P_{y'}$ used are also vertex-disjoint. Hence these two induced paths are vertex-disjoint and each connect two nonadjacent neighbors of~$v$. Together with~$v$ these form a $v$-flower of order two in~$G$. Since the only vertices of~$\C$ used on this order-$2$ flower are those of~$P_i$ and~$v$, it follows that there is a $v$-flower of order two in~$G - (V(\C) \setminus V(C_i))$, establishing~(\ref{twelve:twoflower}). This contradicts the fact that~$\C$ is maximal with respect to Step~(\ref{improve:addtwoflower}) and completes the proof of Claim~\ref{claim:twelve:from:path}.
\end{claimproof}

Armed with these claims we finish the proof of Lemma~\ref{lemma:algorithm:packing:vs:covering}. Claim~\ref{claim:s:hittingset} shows that~$S$ is a hitting set avoiding~$v$ for the holes in~$G$. All vertices of~$S$ belong to~$V(\C) \setminus \{v\}$ by Claim~\ref{claim:s:subset:flower}, and therefore all lie on some path of the structure~$\C - v$. Since the number of paths equals the order of the flower, while~$S$ contains at most~$12$ vertices from each path by Claim~\ref{claim:twelve:from:path}, it follows that the size of~$S$ is at most~$12$ times the order of the flower. This concludes the proof of Lemma~\ref{lemma:algorithm:packing:vs:covering}.
\end{proof}

%For a graph~$G$, let~$\pack_v(G)$ be the minimum cardinality of a vertex set~$S \subseteq V(G)$ that does not include~$v$, such that~$G - S$ is chordal. Similarly, let~$\cover_v(G)$ be the maximum order of a $v$-flower in~$G$. The following is a direct consequence of Lemma~\ref{lemma:algorithm:packing:vs:covering}.
%
%\begin{corollary}
%Let~$G$ be a graph and let~$v \in V(G)$ such that~$G - v$ is chordal. Then~$\cover_v(G) \leq 12 \cdot \pack_v(G)$.
%\end{corollary}

\section{Annotated Chordal Vertex Deletion} \label{s:annotated}
In what follows it is more convenient to work with a variant of the \ChVD
problem with some annotations on the set $M$.
More formally, the input to the problem \AChVDlong (\AChVD for short)
is a tuple $(G,k,M,E^h)$ with the following properties:
\begin{enumerate}
\item $G$ is a graph, $k$ is an integer, $M \subseteq V(G)$, and $E^h \subseteq E(G[M])$;
\item $G-M$ is a chordal graph, and moreover, for every $v \in M$ the graph
$G-(M \setminus \{v\})$ is a chordal graph.
\end{enumerate}
The \AChVD problem asks for the existence of a set $X \subseteq V(G)$,
called henceforth a \emph{solution}, such that 
$|X| \leq k$, $G-X$ is chordal, and for every pair $\{x,y\} \in E^h$,
at least one of $x$ and $y$ belongs to $X$. 

Lemma~\ref{lemma:algorithm:packing:vs:covering} can be used to transform an instance of the general \ChVD problem into an equivalent instance satisfying the requirements of the annotated problem, if a modulator~$M$ is known. For the final kernelization algorithm, such a modulator will be obtained from the approximation algorithm of Theorem~\ref{thm:apx}.

\begin{lemma} \label{lem:annotate}
There is a polynomial-time algorithm that, given a graph~$G$, an integer~$k$, and a set~$M_0 \subseteq V(G)$ such that~$G-M_0$ is chordal, either correctly determines that~$(G,k)$ is a no-instance of \ChVD or computes an instance~$(G',k',M,E^h)$ of \AChVD with the same answer such that~$|M| \in \Oh(k \cdot |M_0|)$ and~$k' \leq k$.
\end{lemma}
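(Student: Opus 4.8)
The plan is to use Lemma~\ref{lemma:algorithm:packing:vs:covering} in two ways: first to detect vertices of the modulator that are forced into every small solution, and then, once no such vertex remains, to enlarge~$M_0$ into a tidy modulator by adding the hitting sets the lemma produces.

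Concretely, I maintain a current \ChVD instance $(G,k)$ together with a set $M_0$ such that $G-M_0$ is chordal, initialized with the input. For a vertex $v\in M_0$, the graph $G-(M_0\setminus\{v\})$ becomes $G-M_0$ after deleting $v$, hence is chordal after deleting $v$, so Lemma~\ref{lemma:algorithm:packing:vs:covering} applies and returns a $v$-flower together with a hitting set; let $\ell_v$ be the order of the returned flower. If $\ell_v\geq k+1$ for some $v$, then $v$ lies in every chordal deletion set of $G$ of size at most $k$: the $\ell_v\geq k+1$ holes of the flower are pairwise disjoint outside $v$, and each of them, being an induced cycle of length at least four in the induced subgraph $G-(M_0\setminus\{v\})$, is a hole of $G$; a deletion set avoiding $v$ would have to contain a vertex of each of the $k+1$ pairwise disjoint sets $V(C_i)\setminus\{v\}$, which is impossible within budget $k$. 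Hence $(G,k)$ and $(G-v,\,k-1)$ are equivalent instances of \ChVD, so I replace $(G,k,M_0)$ by $(G-v,\,k-1,\,M_0\setminus\{v\})$ and repeat; if this ever makes $k$ negative I report that $(G,k)$, and therefore the original instance, is a no-instance. Since $|M_0|$ strictly decreases in each such step, the loop terminates after at most $|M_0|$ iterations, each making $|M_0|$ polynomial-time calls to the algorithm of Lemma~\ref{lemma:algorithm:packing:vs:covering}; note that after deleting a forced vertex the hypothesis of that lemma is still met for the remaining modulator vertices, as all graphs involved stay induced subgraphs of the chordal graph $G-M_0$.

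When the loop halts we have $\ell_v\leq k$ for all $v\in M_0$ (with the current, reduced $G$, $k$, $M_0$), so Lemma~\ref{lemma:algorithm:packing:vs:covering} also yields, for each $v\in M_0$, a set $S_v\subseteq V(G)\setminus\{v\}$ with $|S_v|\leq 12\ell_v\leq 12k$ such that $(G-(M_0\setminus\{v\}))-S_v$ is chordal. I set $M:=M_0\cup\bigcup_{v\in M_0}S_v$, so $|M|\leq(12k+1)|M_0|\in\Oh(k\cdot|M_0|)$ in terms of the original parameters (since $M_0$ only shrank and $k$ only decreased), and output the \AChVD instance $(G,k,M,\emptyset)$. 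The structural requirements hold: $G-M$ is an induced subgraph of the chordal graph $G-M_0$; for $v\in M\setminus M_0$ we have $M_0\subseteq M\setminus\{v\}$, so $G-(M\setminus\{v\})$ is an induced subgraph of the chordal graph $G-M_0$; and for $v\in M_0$ we have $(M_0\setminus\{v\})\cup S_v\subseteq M\setminus\{v\}$, so $G-(M\setminus\{v\})$ is an induced subgraph of the chordal graph $(G-(M_0\setminus\{v\}))-S_v$. Finally, since $E^h=\emptyset$, a solution to $(G,k,M,\emptyset)$ is precisely a chordal deletion set of $G$ of size at most $k$, so this \AChVD instance is equivalent to the \ChVD instance $(G,k)$ reached at the end of the forcing loop, which in turn is equivalent to the original input by the forcing argument. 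The only real subtlety is the bookkeeping around the iterative budget reduction: one must re-run the subroutine after each forced deletion rather than gathering all forced vertices at once, because decreasing the budget may turn a previously ``small'' flower into a forcing one; termination is nonetheless guaranteed by the monotone shrinking of $M_0$.
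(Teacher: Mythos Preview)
Your proof is correct and follows essentially the same approach as the paper: apply Lemma~\ref{lemma:algorithm:packing:vs:covering} to each $G-(M_0\setminus\{v\})$, delete $v$ and decrement $k$ whenever the returned flower has order exceeding~$k$, and once no forced vertex remains set $M:=M_0\cup\bigcup_{v\in M_0}S_v$ with $E^h=\emptyset$. Your write-up is in fact more explicit than the paper's in verifying the \AChVD validity conditions for $M$ and in justifying termination and equivalence across the forcing loop.
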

\begin{proof}
For every $v \in M_0$, we apply the algorithm of Lemma~\ref{lemma:algorithm:packing:vs:covering} to the nearly-chordal graph~$G - (M_0 \setminus \{v\})$, obtaining a $v$-flower $\C_v$ and a hitting set $S_v$. If $|\C_v| > k$ for some $v$, we delete the vertex $v$ from $G$, decrease $k$ by one, and restart the algorithm. If the budget~$k$ drops below zero then we report that the original input was a no-instance of \ChVD. Clearly, if $|\C_v| > k$, then every size-$k$ solution to \ChVD on~$G$ contains the vertex $v$, so~$(G,k)$ is equivalent to~$(G-v, k-1)$. If $|\C_v| \leq k$ for every $v \in M_0$, we define $M := M_0 \cup \bigcup_{v \in M_0} S_v$; note that $|M| \leq |M_0|(12k+1)$. Letting~$G'$ and~$k'$ denote the reduced graph and budget, we output the \AChVD instance~$(G',k',M,E^h = \emptyset)$. It follows from Lemma~\ref{lemma:algorithm:packing:vs:covering}
that it is a valid \AChVD instance, and clearly it is a yes-instance if and only if the original \ChVD instance $(G,k)$ is a yes-instance.
\end{proof}

The next few sections are devoted to the proof of the following theorem.

\begin{theorem}\label{thm:AChVD-kernel}
\AChVD admits a polynomial kernel when parameterized by $|M|$.
More precisely, given an \AChVD instance $(G,k,M,E^h)$, one can in polynomial
time compute an equivalent instance $(G',k,M,F^h)$ with $E^h \subseteq F^h$
and~$|V(G')| \in \Oh(k^{16} |M|^{29})$.
\end{theorem}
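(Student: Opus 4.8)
The plan is to implement the three-phase reduction scheme outlined in Section~\ref{s:intuition}, where each phase is a short sequence of reduction rules justified by a local exchange argument, and all the size bounds that arise are $\poly(k,|M|)$; multiplying these out gives the claimed $\Oh(k^{16}|M|^{29})$ bound on $|V(G')|$. Throughout, $(G,k,M,E^h)$ is the input \AChVD instance, so $G-M$ and each $G-(M\setminus\{v\})$ with $v\in M$ are chordal. A first fact to record is that, for every $v\in M$, the boundary $N_{G-M}(C)$ of any connected component $C$ of $G(\neg v)=G-(M\cup N(v))$ is a clique in $G-M$: two nonadjacent boundary vertices would lie in $N(v)$ and be joined by an induced path through $C$ avoiding $N[v]$, forming a hole with $v$ in the chordal graph $G-(M\setminus\{v\})$, a contradiction.

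\textbf{Phase 1 (producing the separator $Q$; Sections~\ref{s:cliques}--\ref{s:components}).} First I would bound the size of every clique of $G-M$ by a quantity $\cliquebound\in\poly(k,|M|)$: a hole meets a fixed clique in at most two vertices, so in a very large clique many vertices are interchangeable both for hosting holes and for the annotation, and all but $\poly(k,|M|)$ of them can be removed by a bypassing operation (possibly enlarging $E^h$). Next, for each $v\in M$ I would bound the number of connected components of $G(\neg v)$ by $\poly(k,|M|)$ via a marking argument: components with the same boundary clique are interchangeable, so keeping $\Oh(k)$ of each ``type'' per boundary clique suffices. Put $Q_0 := M\cup\bigcup_{v\in M}\{\,N_{G-M}(C) : C\text{ a component of }G(\neg v)\,\}$, so $|Q_0|\in\poly(k,|M|)$. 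A key structural consequence is that any component $A$ of $G-Q_0$ lies, for each $v\in M$, entirely in $N(v)$ or entirely in $V(G(\neg v))$ — an edge of $A$ crossing between the two sides would have an endpoint in some $N_{G-M}(C)\subseteq Q_0$ — hence all vertices of $A$ share the same neighborhood in $M$. Finally, to obtain the two-cliques boundary property, I would fix a clique tree $(T,\beta)$ of $G-M$, mark the $\Oh(|Q_0|)$ topmost bags of vertices of $Q_0\setminus M$, close this set of bags under lowest common ancestors (as in the protrusion technique~\cite{MetaKernelization09}), and add to $Q_0$ all vertices appearing in the resulting $\Oh(|Q_0|)$ bags; the outcome $Q$ still satisfies $|Q|\in\poly(k,|M|)$, each component $A$ of $G-Q$ still has uniform $M$-neighborhood, and $N_{G-M}(A)\subseteq Q\setminus M$ is now the union of at most two cliques, namely the adhesions of the at most two tree edges that cut the subtree region occupied by $A$ away from the rest.

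\textbf{Phase 2 (bounding the number of components; Section~\ref{s:toughness}).} Since every hole passes through $M\subseteq Q$, a component $A$ of $G-Q$ can occur on a hole only by contributing an induced path between two vertices of $Q$, and the analysis isolates two relevant ``connection types'' per pair of $Q$-vertices. For each of the $\Oh(|Q|^2)$ pairs and each type I would mark $k+2$ components of $G-Q$ realizing that connection (all of them if fewer than $k+2$ exist) and delete every unmarked component. Correctness is an exchange argument: a solution of size at most $k$ meets at most $k$ components, so for each realized connection enough marked components remain untouched to substitute for any unmarked component on any hole. This leaves $\Oh(k|Q|^2)\in\poly(k,|M|)$ components.

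\textbf{Phase 3 (shrinking each component; Section~\ref{s:single-component}).} This is the technical core. A surviving component $A$ is chordal, has boundary equal to two cliques $K_1,K_2$ of size at most $\cliquebound$, and has uniform $M$-neighborhood, so the only role of $A$ in a hole is to provide an induced $K_1$--$K_2$ path; breaking the corresponding holes calls for a small $K_1$--$K_2$ separator inside $A$. Using the clique-size bound from Phase~1 and a detailed analysis of how induced paths traverse the clique tree of $A$, I would identify $\poly(|K_1|+|K_2|)$ \emph{relevant} cliques of $A$ and prove an exchange argument showing that some optimal solution deletes no vertex of $A$ outside the relevant cliques; the remaining vertices of $A$ can then be bypassed, so $|A|$ shrinks to $\poly(|K_1|+|K_2|)\in\poly(k,|M|)$, again possibly enlarging the annotation. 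Multiplying the bounds on $|Q|$, on the number of components, and on the size of each component, and tracking exponents through the three phases, yields the equivalent instance $(G',k,M,F^h)$ with $E^h\subseteq F^h$ and $|V(G')|\in\Oh(k^{16}|M|^{29})$. I expect the main obstacle to be Phase~3: pinning down which cliques of a component are relevant and proving that an optimal solution can be assumed to avoid the rest, while checking that the bypassing operation neither creates nor destroys holes in $G'$ and correctly preserves, via $F^h$, the constraint that holes passing through $A$ must be hit; a secondary difficulty is the Phase-1 clique reduction, where bypassing inside a large clique must be done without spuriously shortcutting or lengthening holes elsewhere and again with the annotation maintained.
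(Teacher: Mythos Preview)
Your three-phase outline matches the paper's architecture, and Phases~2 and~3 are accurate summaries of Sections~\ref{s:toughness} and~\ref{s:single-component}. But two pieces of Phase~1, as you state them, diverge from the paper in ways that would not go through.

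First, the component bound for $G(\neg v)$: the criterion ``same boundary clique $\Rightarrow$ interchangeable, keep $\Oh(k)$ per type'' is not what the paper does and is not obviously sound---two components of $G(\neg v)$ with the same $\Gc$-boundary can have different neighborhoods in $M\setminus\{v\}$, so a hole through one need not reroute through the other. Reduction~\ref{red:components} instead marks components by three criteria tied to which $y\in M$ they connect, and on an unmarked component it \emph{adds the edges from $v$} rather than deleting; both directions of the equivalence are nontrivial and rest on Observation~\ref{obs:forbidden}. Second, your $Q_0$ omits Reduction~\ref{red:common-neighbours} and the first marking step of Section~\ref{s:single-component} (a bag for each maximal clique of $G(x,y)$, $xy\notin E(G)$). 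Your argument correctly yields that every component $A$ of $G-Q_0$ has uniform $M$-neighborhood, but it does \emph{not} give that $N(A)\cap M$ is a clique (Observation~\ref{obs:comp:uniform}); without that, a hole can visit $A$ in a single vertex $a$ flanked by two nonadjacent $M$-vertices, so your Phase-3 premise ``the only role of $A$ in a hole is a $K_1$--$K_2$ path'' fails. A minor point: the clique reduction of Section~\ref{s:cliques} deletes an \emph{irrelevant} vertex (or forces a pair) after a four-way marking; it does not bypass, and bypassing inside a large clique could add edges across $N(v)\setminus K$ that destroy holes you must preserve.
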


As typical for kernelization algorithms, the algorithm of Theorem~\ref{thm:AChVD-kernel} consists of a number of reduction rules;
at every step, we apply the lowest-numbered reduction rule. Observe that any instance with~$k \geq |M|$ has a trivial YES-answer as the set~$M$ is a solution. For such instances we can output a constant-size YES-instance of \AChVD. In the remainder we assume~$|M| > k$.

Let $(G,k,M,E^h)$ be a \AChVD instance.
We say that a vertex set 
$S \subseteq V(G) \setminus M$ is \emph{irrelevant}
if the instances $(G,k,M,E^h)$ and $(G-S,k,M,E^h)$ are equivalent.
We say that a pair $xy \in \binom{M}{2}$ is a \emph{new forced pair}
if $xy \notin E^h$, but
the instances $(G,k,M,E^h)$ and $((V(G),E(G) \cup \{xy\}),k,M,E^h \cup \{xy\})$ are equivalent.
Note that adding an edge $xy$ to $G$ is meaningless, as a solution is required to delete $x$ or $y$ anyway,
but it makes notation later cleaner.

A hole $H$ in the graph $G$ is a \emph{permitted hole} if $V(H)$ does not contain
any pair of $E^h$. Note that a hole that is not permitted is hit by any
solution to $(G,k,M,E^h)$ due to the constraints imposed by $E^h$, and,
informally speaking, we can ignore such holes in the further argumentation.

In what follows, we will use the following shorthand notation.
Let $x_1,x_2,\ldots,x_a,\linebreak[0] y_1,y_2,\ldots,y_b \in M$ be (not necessarily
distinct) vertices for some $a,b \geq 0$.
Then we define:
\begin{align*}
V(x_1,x_2,\ldots,x_a,\neg y_1, \neg y_2, \ldots, \neg y_b) &= \left ( (V(G) \setminus M) \cap \bigcap_{i=1}^a N(x_i) \right ) \setminus \left( \bigcup_{j=1}^b N(y_j) \right) \\
G(x_1,x_2,\ldots,x_1,\neg y_1,\neg y_2, \ldots, \neg y_b) &= G[V(x_1,x_2,\ldots,x_a,\neg y_1, \neg y_2, \ldots, \neg y_b)].
\end{align*}
Consider the following reduction rule.
\begin{reduction}\label{red:common-neighbours}
For every $x,y \in M$ with $x \neq y$, $xy \notin E(G)$,
if the size of the largest independent set in $G(x,y)$ is
at least $k+2$, then add $xy$ to $E^h$ and to $E(G)$.
\end{reduction}
The correctness is straightforward: any solution to $(G,k,M,E^h)$ can delete
at most $k$ out of $k+2$ vertices of any maximum independent set in $G(x,y)$;
the remaining vertices form a $C_4$ with $x$ and $y$, forcing the solution
to delete either $x$ or $y$. We add the edge $xy$ to $E(G)$ to preserve the properties of an \AChVD instance.
Furthermore, note that as $G(x,y)$ is chordal, maximum independent sets
can be computed in polynomial time~\cite{Gavril72}.

In the remainder we will often use the following simple observation to find holes, in proofs by contraposition.
\begin{observation}[{\cite[Proposition 3]{Marx10}}] \label{obs:find-hole}
If a graph $G$ contains a vertex $v$ with two nonadjacent neighbors $u_1,u_2 \in N(v)$, and a walk from $u_1$ to $u_2$ with all internal vertices in $V(G) \setminus N[v]$, then $G$ contains a hole passing through $v$.
\end{observation}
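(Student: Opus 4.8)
The plan is to turn the given walk into a hole by first extracting an induced subpath and then closing it up through $v$. Let $W$ denote the given $u_1u_2$-walk whose internal vertices all lie in $V(G)\setminus N[v]$. First I would apply Observation~\ref{observation:walk:implies:inducedpath} to obtain an induced $u_1u_2$-path $P$ with $V(P)\subseteq V(W)$. The point of passing from $W$ to $P$ is that $P$ is chordless, while it still enjoys the key property inherited from $W$: its internal vertices avoid $N[v]$. This follows because $V(P)\subseteq V(W)$, the endpoints of $P$ are precisely $u_1$ and $u_2$, and every vertex of $W$ other than $u_1,u_2$ lies outside $N[v]$ by hypothesis; in particular $v\notin V(W)$, since $v\in N[v]$ while $v\neq u_1,u_2$.

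Next I would form the cycle $C$ consisting of $v$ together with the path $P$, that is, $v,u_1,\ldots,u_2,v$, and verify directly that $C$ is a hole through $v$. Since $u_1\neq u_2$ (they are two distinct neighbors of $v$) and $u_1u_2\notin E(G)$, the induced path $P$ cannot be a single edge, so it has at least one interior vertex and hence $C$ has length at least $4$. For chordlessness, observe that the vertex set of $C$ is $\{v\}\cup V(P)$: the only edges of $G$ inside $V(P)$ are the consecutive edges of $P$, because $P$ is induced; and within $\{v\}\cup V(P)$ the vertex $v$ is adjacent only to $u_1$ and $u_2$, because every interior vertex of $P$ avoids $N[v]\supseteq N(v)$. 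Therefore $G[\{v\}\cup V(P)]$ is exactly the cycle $C$, which is an induced cycle of length at least four passing through $v$, i.e.\ a hole through $v$.

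I do not expect a genuine obstacle here; the statement is essentially a bookkeeping consequence of Observation~\ref{observation:walk:implies:inducedpath}. The only points requiring minor care are (i) confirming that the extracted induced path $P$, and not merely the walk $W$, has all interior vertices outside $N[v]$, which reduces to $V(P)\subseteq V(W)$ together with $v\notin\{u_1,u_2\}$; and (ii) checking that $P$ has at least one interior vertex so that the resulting cycle has length at least $4$, which is immediate from $u_1u_2\notin E(G)$. The remaining verification that the only edges among $\{v\}\cup V(P)$ are those of $C$ is then routine.
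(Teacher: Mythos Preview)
Your proof is correct and follows the natural route: shortcut the walk to an induced path via Observation~\ref{observation:walk:implies:inducedpath}, then close it through~$v$ and verify the resulting cycle is chordless of length at least four. The paper itself does not prove this observation but simply cites it from~\cite{Marx10}, so there is no alternative argument to compare against; your write-up is exactly the standard justification.
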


We conclude this section with an important immediate observation about \AChVD.
\begin{proposition}\label{prop:NC-clique}
Let $(G,k,M,E^h)$ be an instance of \AChVD. Then, for every $v \in M$
and every connected component $A$ of $G(\neg v)$, 
the set $N_G(A) \setminus M$ is a clique.
\end{proposition}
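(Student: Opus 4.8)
The plan is to use the tidyness of the modulator. Recall that $(G,k,M,E^h)$ being a valid \AChVD instance means that $G - (M \setminus \{v\})$ is chordal for every $v \in M$. Fix $v \in M$ and a connected component $A$ of $G(\neg v) = G[V(\neg v)]$, where $V(\neg v) = (V(G) \setminus M) \setminus N(v)$. Suppose toward a contradiction that $N_G(A) \setminus M$ is not a clique, so there are two nonadjacent vertices $s, t \in N_G(A) \setminus M$. Since $s, t \in N_G(A)$, each of them has a neighbor in $A$, and since $A$ is connected there is an $st$-path whose internal vertices all lie in $A \subseteq V(G) \setminus M$, in particular avoiding $N[v]$ (as $A$ avoids $N(v)$, and $A$ avoids $v$ itself since $v \in M$).

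**The key observation** is then that $s, t$ are both neighbors of $v$: indeed $s \in N_G(A) \setminus M$ means $s \notin M$ and $s$ has a neighbor in $A$; but every vertex of $G(\neg v)$ is a non-neighbor of $v$, so if $s$ were also a non-neighbor of $v$ it would itself lie in $V(\neg v)$ and hence in the component $A$ (being adjacent to a vertex of $A$), contradicting $s \in N_G(A)$ which by definition of open neighborhood is disjoint from $A$. Hence $s \in N_G(v)$, and symmetrically $t \in N_G(v)$. So $v$ has two nonadjacent neighbors $s, t$ joined by a walk (the $st$-path through $A$) whose internal vertices avoid $N[v]$. By Observation~\ref{obs:find-hole}, $G$ contains a hole $H$ passing through $v$. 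But this hole uses only $v$ from $M$: its other vertices are $s$, $t$ (which are not in $M$) and internal vertices in $A$ (not in $M$). Therefore $H$ is a hole in $G - (M \setminus \{v\})$, contradicting the assumption that $G - (M \setminus \{v\})$ is chordal.

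**The only subtlety** worth double-checking is that the $st$-walk through $A$ genuinely has all internal vertices outside $N[v]$: the internal vertices lie in $A \subseteq V(\neg v)$, and by definition $V(\neg v)$ consists of vertices outside $M$ and outside $N(v)$, so in particular they are not $v$ and not in $N(v)$, i.e. outside $N[v]$ — exactly what Observation~\ref{obs:find-hole} requires. With that in hand the argument is complete; there is no real obstacle, as this is an immediate consequence of tidyness together with the hole-finding observation of Marx.
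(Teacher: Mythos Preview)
Your proof is correct and follows essentially the same approach as the paper: assume two nonadjacent vertices in $N_G(A)\setminus M$, connect them by a walk through $A$, and invoke Observation~\ref{obs:find-hole} to obtain a hole through $v$ that lives in $G-(M\setminus\{v\})$, contradicting tidyness. You are in fact more explicit than the paper in justifying why $s,t\in N_G(v)$ (the paper leaves this implicit), while the paper is slightly cleaner in applying Observation~\ref{obs:find-hole} directly to the graph $G-(M\setminus\{v\})$ rather than to $G$ and then arguing the resulting hole avoids $M\setminus\{v\}$; both routes are equivalent.
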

\begin{proof}
Suppose there exist two nonadjacent vertices $u_1,u_2 \in N(A) \setminus M$. Apply Observation~\ref{obs:find-hole} to the graph~$G - (M \setminus \{v\})$ with the vertex $v$ and a path between $u_1$ and $u_2$ with internal vertices in $A$. This yields a hole in~$G - (M \setminus \{v\})$, which contradicts the requirement of \AChVD instances that~$G - (M \setminus \{v\})$ is chordal for all~$v \in M$.
\end{proof}

\section{Reducing the number of components \texorpdfstring{wrt.\,}{wrt. }a separator}\label{s:toughness}
The next procedure reduces an instance~$(G,k,M,E^h)$ of \AChVD if we have access to a small vertex set whose removal splits the graph into many components. In general, it is not easy to find such a vertex set since it is related to computing the toughness of a graph, which is NP-hard~\cite{BauerHS90}. The reduction can therefore only be applied for a concrete separator that is supplied to the procedure. For that reason, we call it a \emph{reduction template}, instead of a concrete reduction rule. We will apply the template in situations where such a vertex set can be identified efficiently. 

In the following, we will say ``mark up to $f(k)$ objects of type~$X$'' to mean the following. If there are more than~$f(k)$ objects of type~$X$, then choose~$f(k)$ of them arbitrarily and mark them. If there are fewer, then mark all of them.

\begin{template}\label{template:toughness}
Given a vertex subset~$S \supseteq M$ of~$G$, do the following.
\begin{enumerate}
	\item For every pair~$xy \in \binom{S}{2} \setminus E(G)$, mark up to~$k+2$ connected components~$C$ of~$G - S$ that contain both a neighbor of~$x$ and a neighbor of~$y$.
	\item For every pair~$xy \in \binom{S}{2} \cap E(G)$, mark up to~$k+1$ connected components~$C$ of~$G - S$ that contain a path~$P$ between a vertex in~$N_G(x) \cap V(C)$ and a vertex in~$N_G(y) \cap V(C)$ such that~$V(P)$ avoids~$N_G(x) \cap N_G(y)$.
\end{enumerate}
Delete the vertices in an unmarked component~$C$ from the graph.
\end{template}

Template~\ref{template:toughness} yields a stand-alone reduction that does not rely on any other reduction rules being exhaustively applied. Observe that if~$G - S$ has more than~$(k+2)\binom{|S|}{2}$ components, then the marking scheme will leave at least one component unmarked, and therefore the reduction will shrink the graph. Given~$S$, it is easy to apply the procedure in polynomial time. Observe that the existence of a path~$P$ through a component~$C$ satisfying the conditions of the second item can be tested by trying for each~$v \in N_G(x) \cap V(C)$ and~$u \in N_G(y) \cap V(C)$ whether~$u$ and~$v$ belong to the same connected component of~$G[V(C)] - (N_G(x) \cap N_G(y))$. Correctness follows from the next lemma. 

\begin{lemma}
If a component~$C$ of~$G - S$ is not marked by Reduction Template~\ref{template:toughness}, then~$V(C)$ is irrelevant.
\end{lemma}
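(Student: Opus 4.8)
The goal is to show that deleting an unmarked component $C$ of $G - S$ does not change the answer to the \AChVD instance. Since deleting vertices cannot create new holes or new $E^h$-violations, a solution to $(G,k,M,E^h)$ restricted to $V(G) \setminus V(C)$ is still a solution to $(G - V(C), k, M, E^h)$; so one direction is immediate. The real content is the converse: given a solution $X$ with $|X| \leq k$ to $(G - V(C), k, M, E^h)$, we must show $X$ is also a solution to the original instance, i.e., that $X$ hits every permitted hole of $G$ that passes through $V(C)$ (holes not meeting $V(C)$ survive in $G - V(C)$ and are already hit, and non-permitted holes are hit by the $E^h$-constraints). So fix a hole $H$ in $G - X$ with $V(H) \cap V(C) \neq \emptyset$; we derive a contradiction by exhibiting a "rerouted" hole that avoids $V(C)$ but is still present in $G - X$.

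The key structural fact is that since $M \subseteq S$, the component $C$ is a connected chordal graph with $N_G(V(C)) \subseteq S \setminus M$ (actually $N_G(V(C)) \subseteq S$), so $H$ can only use $V(C)$ to travel between two vertices of $S$ along an induced subpath whose interior lies entirely in $V(C)$. More precisely, $V(H) \cap V(C)$ is a disjoint union of such subpaths; pick one maximal induced subpath $P$ of $H$ with interior in $V(C)$ and endpoints $u, v \in N_G(V(C)) \cap V(H) \subseteq S$ (when $V(H) \subseteq V(C)$ the hole lies in the chordal graph $G[V(C)]$, impossible). Now do a case distinction on whether $uv \in E(G)$. If $uv \notin E(G)$: $u$ has a neighbor and $v$ has a neighbor in the same component $C$, so the first marking step would have tried to mark $k+2$ components of this type; since $C$ is unmarked, there are $k+2$ marked components $C_1, \dots, C_{k+2}$ each containing a neighbor of $u$ and a neighbor of $v$, joined inside $C_i$ by an induced path $P_i$ with interior in $V(C_i)$. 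These $k+2$ interiors are pairwise disjoint (they live in distinct components), so at most $k$ of them can meet $X$, leaving some $C_i$ with $P_i$ avoiding $X$; one checks the interior of $P_i$ is nonadjacent to $H - \{u,v\}$ (interior of $P_i$ lies in $V(C_i)$, whose only neighbors are in $S$, and $H$'s vertices outside $\{u,v\}$ that are adjacent to $V(C_i)$ would have to be in $S$ — here we need that $H$ uses each component of $G-S$ only as an induced subpath and the marked-vertex analysis; this adjacency bookkeeping is the delicate part), so replacing $P$ by $P_i \cup \{u,v\}$ — or rather shortcutting the closed walk $(H \setminus \mathrm{int}(P)) \cup P_i$ to an induced cycle via Observation~\ref{observation:walk:implies:inducedpath} — yields a hole in $G - X$ avoiding $V(C)$, contradicting that $X$ was a solution for $G - V(C)$.

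If $uv \in E(G)$: here $u$ and $v$ are consecutive on no hole unless... actually $u,v$ being adjacent means the subpath $P$ together with the rest of $H$ forms a hole only if $P$ has length $\geq 3$ and no chord; the relevant feature is that $P$'s interior avoids $N_G(u) \cap N_G(v)$ (otherwise $H$ would have a chord — an interior vertex of $P$ adjacent to both $u$ and $v$ gives a triangle-chord). So $C$ contains a path between $N_G(u) \cap V(C)$ and $N_G(v) \cap V(C)$ avoiding $N_G(u) \cap N_G(v)$, exactly the condition in the second marking step; since $C$ is unmarked there are $k+1$ marked components $C_1, \dots, C_{k+1}$ of this type, at least one $C_i$ has its connecting path $P_i$ avoiding $X$, and again we reroute $H$ through $C_i$ — using $P_i$, which stays off $N_G(u) \cap N_G(v)$ precisely so that the resulting cycle has no $uv$-triangle chord — shortcut to an induced hole in $G - X$ avoiding $V(C)$, the contradiction. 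Finally one must confirm the rerouted hole is still \emph{permitted}, i.e., contains no $E^h$-pair: the only new vertices introduced lie in $V(C_i) \subseteq V(G) \setminus M$, and $E^h \subseteq E(G[M])$, so no $E^h$-pair can be created.

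\medskip
\noindent\textbf{Main obstacle.} The delicate part is the adjacency bookkeeping when rerouting: one must argue that the substituted path $P_i$ (interior in $V(C_i)$) creates no unwanted chords with the untouched portion $H \setminus \mathrm{int}(P)$ of the hole. Since any neighbor of $V(C_i)$ outside $V(C_i)$ lies in $S$, and the untouched portion of $H$ may contain several vertices of $S$, one needs to be careful that chords from those $S$-vertices into $\mathrm{int}(P_i)$ do not arise — this is handled by invoking Observation~\ref{observation:walk:implies:inducedpath} to shortcut the resulting closed walk to a genuine induced cycle rather than claiming the naive concatenation is already a hole, and by checking that the shortcut cycle still has length $\geq 4$ (it still separates $u$ from $v$ through $C_i$, and in the adjacent case the "avoids $N_G(u)\cap N_G(v)$" clause is exactly what prevents the cycle from collapsing to a triangle). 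Getting this shortcutting argument clean — ensuring the resulting induced cycle genuinely has length at least four and genuinely avoids $V(C)$ — is where the proof requires the most care.
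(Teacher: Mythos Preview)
Your overall plan matches the paper's: one direction is trivial, and for the converse you assume a hole $H$ in $G-X$ through $V(C)$, identify the two bordering vertices $u,v \in S$ of a maximal segment of $H$ inside $C$, and split on whether $uv \in E(G)$. The adjacent case is essentially the paper's argument, and there your rerouting through a single marked component works, because when $uv \in E(G)$ the edge $uv$ would be a chord unless $H$ is exactly $u,v$ together with the one segment through $C$; so after rerouting nothing of $V(C)$ remains.

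The gap is in the non-adjacent case. You take \emph{one} surviving marked component $C_i$, reroute the $C$-segment through it, and then ``shortcut the closed walk $(H \setminus \mathrm{int}(P)) \cup P_i$'' to a hole avoiding $V(C)$. Two things go wrong:
\begin{itemize}
\item The untouched portion $H \setminus \mathrm{int}(P)$ may itself pass through $V(C)$ again: nothing prevents $H$ from visiting $C$ in several maximal segments (you even say ``$V(H)\cap V(C)$ is a disjoint union of such subpaths''). So any hole extracted from this closed walk need not avoid $V(C)$, and you have not produced a contradiction in $(G-V(C))-X$.
\item Observation~\ref{observation:walk:implies:inducedpath} turns a walk into an induced \emph{path}; it does not turn a closed walk into a hole. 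The right tool would be Observation~\ref{obs:find-hole}, but applying it requires a vertex with two nonadjacent neighbours and a walk between them avoiding its closed neighbourhood --- exactly the ``adjacency bookkeeping'' you flag as the obstacle, and which you do not actually carry out. Interior vertices of $P_i$ lie in $C_i$ and may well be adjacent to $S$-vertices on the untouched part of $H$.
\end{itemize}

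The paper sidesteps all of this by using \emph{two} surviving marked components in the non-adjacent case. Since $k+2$ components were marked and $|X|\le k$, at least two marked components $C_1,C_2$ avoid $X$ entirely. Each yields an induced $uv$-path with interior in $C_j$; because $C_1,C_2$ are distinct components of $G-S$, their interiors are pairwise nonadjacent, and since $uv\notin E(G)$ their concatenation is directly a hole in $(G-V(C))-X$. No shortcutting, no dependence on the rest of $H$, no residual $V(C)$-vertices. This is precisely why the first marking step asks for $k+2$ components rather than $k+1$.

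(A minor aside: your final check that the rerouted hole is ``permitted'' is unnecessary. You only need a hole in $(G-V(C))-X$ to contradict chordality; the $E^h$-constraints are already handled since $V(C)\cap M=\emptyset$.)
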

\begin{proof}
Since~$S \supseteq M$, no forced pairs involve~$V(C)$. It therefore suffices to prove that if~$(G- V(C)) - X$ is chordal for some set~$X$ of size at most~$k$, then~$G-X$ is chordal as well. Assume for a contradiction that~$(G - V(C)) - X$ is chordal but~$G-X$ is not, and consider some hole~$H$ in~$G - X$. It uses at least one vertex~$v \in V(C)$. Let~$x$ and~$y$ be the first vertices of~$S$ that you encounter when traversing the hole~$H$ forwards and backwards, respectively, when starting in~$v$. By definition of \AChVD and the fact that~$M \subseteq S$ such vertices exist and~$x \neq y$.

If~$xy \not \in E(G)$, then~$C$ is unmarked and contains a path from a neighbor of~$x$ to a neighbor of~$y$. We marked at least~$k+2$ components that provide such a path and~$X$ intersects at most~$k$ of them, leaving two marked components~$C_1, C_2$ free. As each component~$C_i$ contains a path between a neighbor of~$x$ and a neighbor of~$y$, by shortcutting these paths we find induced~$xy$-paths whose interior vertices belong to~$C_i$, for each~$i \in \{1,2\}$. As~$C_1$ and~$C_2$ are distinct components of~$G-S$ and the only vertices from~$S$ used on these paths are~$x$ and~$y$, the interiors of the path are vertex-disjoint and the interior vertices of one path are not adjacent to interior of the other. Hence the concatenation of these paths is a hole in~$(G - V(C)) - X$, a contradiction.

Now consider the case that~$xy \in E(G)$. Since~$x$ and~$y$ are the first vertices of~$S$ encountered when traversing~$H$ starting at~$v$, it follows that~$H \cap V(C)$ is an induced path in~$C$ between a neighbor of~$x$ and a neighbor of~$y$ that avoids~$N_G(x) \cap N_G(y)$. The hole~$H$ consists of~$xy$ together with this single segment in~$H \cap V(C)$. Since~$C$ was not marked, there are~$k+1$ marked components containing a $N_G(x) \cap V(C)$ to~$N_G(y) \cap V(C)$ path avoiding~$N_G(x) \cap N_G(y)$. Consequently, one of these components~$C'$ is not intersected by~$X$. Let~$P$ be an induced path in~$C'$ from~$u_1 \in N_G(x) \cap V(C')$ to~$u_2 \in N_G(y) \cap V(C')$ that avoids~$N_G(x) \cap N_G(y)$. Let~$u'_1$ be the last vertex on~$P$ that is adjacent to~$x$, and let~$u'_2$ be the first vertex after~$u'_1$ that is adjacent to~$y$. The subpath of~$P$ from~$u'_1$ to~$u'_2$ is an induced path with at least two vertices from a neighbor of~$x$ to a neighbor of~$y$, and no interior vertex of this subpath is adjacent to~$x$ or~$y$. Moreover,~$u'_1 y, u'_2 x \not \in E(G)$ since~$P$ avoids~$N_G(x) \cap N_G(y)$. Hence~$P$ forms a hole together with~$x$ and~$y$, and this hole avoids~$X$ and~$V(C)$. So~$(G-V(C)) - X$ is not chordal; a contradiction.
\end{proof}

\section{Reducing cliques}\label{s:cliques}

Our goal for this section is to prove the following statement.

\begin{lemma}\label{lem:red-cliques}
Given an \AChVD instance $(G,k,M,E^h)$ 
on which Reduction~\ref{red:common-neighbours} is
not applicable, and a clique $K$
in $G-M$ of size $|K| > (k+1)(|M|^3 + (k+3)|M|^2)$,
one can in polynomial time find either an irrelevant vertex
or a new forced pair.
\end{lemma}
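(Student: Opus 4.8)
Let $(G,k,M,E^h)$ be an \AChVD instance on which Reduction~\ref{red:common-neighbours} is inapplicable, and let $K$ be a clique in $G-M$ with $|K| > (k+1)(|M|^3 + (k+3)|M|^2)$. The plan is to argue that, because $K$ is huge, some vertex $w \in K$ is so ``generic'' that it participates in no permitted hole in an essential way, and hence $\{w\}$ is irrelevant; or, failing that, to extract a new forced pair. The first step is to classify the vertices of $K$ according to their interaction with the modulator $M$ and with the clique tree of $G-M$. For each vertex $u \in K$, record the \emph{trace} of $u$: its neighborhood $N(u) \cap M$ inside the modulator, together with some bounded description of where $u$ sits relative to $M$ in the chordal graph $G-M$. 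Since $K$ is a clique, in the clique tree $(T,\beta)$ of $G-M$ all of $K$ lives inside a single bag $\beta(p)$; this already pins down $K$ in $T$, so the relevant ambiguity is only the neighborhood in $M$. There are at most $2^{|M|}$ possible values of $N(u)\cap M$, which is too many, so I expect the right coarsening is to track, for each $x \in M$, only whether $u \in N(x)$, together with for each \emph{nonadjacent} pair $x,y\in M$ whether $u$ lies on a short induced $xy$-path of a restricted type through $G-M$; the number of such ``types'' should be polynomial in $|M|$ and $k$, matching the bound $(k+1)(|M|^3 + (k+3)|M|^2)$ in the statement. By pigeonhole, since $|K|$ exceeds $(k+1)$ times this type-count, some type class has more than $k+1$ vertices of $K$.

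**The core argument.** Once we have $k+2$ vertices $w_0, w_1, \dots, w_{k+1} \in K$ all of the same type, pick $w := w_0$ as the candidate irrelevant vertex and let $G' := G - w$. One direction is trivial: a solution to $(G,k,M,E^h)$ that does not use $w$ is a solution to $(G',k,M,E^h)$, and if it does use $w$ we need to repair it. The substantive claim is that $(G',k,M,E^h)$ and $(G,k,M,E^h)$ have the same answer. Suppose $X$ is a solution to $G'$ with $|X|\le k$; I want to show $G-X$ is chordal (after possibly first noticing that $w$ being in a forced pair is impossible, since $w\notin M$ so $w$ is in no pair of $E^h$). Assume for contradiction $G-X$ has a permitted hole $H$ through $w$. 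Then $H-w$ is an induced path whose two endpoints $a,b$ are the neighbors of $w$ on $H$, with $ab\notin E(G)$. The key point is that $a$ and $b$ are both in $N_G(w)$: since $w\in K$ and $K$ is a clique in $G-M$, every vertex of $K$ other than $w$ is adjacent to $w$, so the $k+1$ other same-type vertices $w_1,\dots,w_{k+1}$ are all adjacent to $w$. At least one $w_i$ is not in $X$ (as $|X|\le k$). Now I want to replace $w$ on the hole $H$ by $w_i$: because $w_i$ has the same type as $w$, $w_i$ has the same behavior with respect to the endpoints $a,b$ of the would-be hole --- in particular $w_i$ is adjacent to both $a$ and $b$ if $w$ is, $a b\notin E(G)$ is unaffected, and $w_i$ is not adjacent to the interior vertices of $H-w$ (this is where the type classification must be strong enough to control adjacency to the interior, which is the delicate point). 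Then $(H - w) + w_i$ is an induced cycle, i.e. a hole, in $G - X$ avoiding $w$, hence a hole in $G' - X$, contradicting that $X$ is a solution to $G'$. So $w$ is irrelevant.

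**Where the type definition needs care, and the fallback to a forced pair.** The obstacle is exactly in making ``$w_i$ is not adjacent to the interior vertices of $H$'' follow from same-typeness. Vertices of the interior of $H$ come in two flavors: those in $M$ and those in $V(G)\setminus M$. For the $M$-part, ``same $N(\cdot)\cap M$'' literally forces $w$ and $w_i$ to see the same modulator vertices, so interior $M$-vertices are handled --- this is why $N(u)\cap M$ must be part of the trace, even though it costs, and this is presumably the source of the $|M|^2$, $|M|^3$ factors: we cannot afford all of $2^{|M|}$, so the trace will record $N(u)\cap M$ only \emph{up to} being enough to certify non-creation of holes, e.g.\ the set of $x\in M$ with $w\in N(x)$ that additionally have a neighbor-of-$y$ reachable appropriately. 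For the part of the interior outside $M$: here we use that the interior vertices of $H-w$ lying outside $M$, together with $a,b$, live in $G-M$, which is chordal, and $w, w_i$ both lie in the single bag $\beta(p)\supseteq K$ of the clique tree; so the neighbors of $w$ (resp.\ $w_i$) in $G-M$ are exactly $\beta(p')$-vertices for bags $p'$ in the subtree $\binv(w)$, and comparing $\binv(w)$ with $\binv(w_i)$ lets one show that if $w$ touches an interior vertex $z$ of the path then so would the path already be short-circuitable, or $w_i$ touches it too and we argue by a smaller hole. Concretely I expect the clean route is: among $a, b$ and the path $H-w$, if $w$ is adjacent (in $G$) to some interior vertex $z$ of $H-w$, then $\{w\}\cup V(H-w)$ was not an induced cycle to begin with --- contradiction --- so $w$ is adjacent on $H$ only to $a$ and $b$; and then adjacency of $w_i$ to an interior $z\notin M$ would (via $w_i \in \beta(p)$, $z$ adjacent to $w \in \beta(p)$ being false) be controlled by the clique-tree structure. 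If, after all this, no single vertex of $K$ can be certified irrelevant --- which would mean that for \emph{every} large same-type class, replacing fails only because of an $E^h$-obstruction or a forced edge --- then the two nonadjacent modulator vertices $x,y$ witnessing the failure give, via the $\ge k+2$ disjoint same-type vertices forming $C_4$'s with $x,y$ (exactly as in the correctness proof of Reduction~\ref{red:common-neighbours}), that $xy$ is a new forced pair. Since Reduction~\ref{red:common-neighbours} is assumed inapplicable, this case is already excluded unless the large independent set / $C_4$ structure is of the more refined ``through a short path'' type, which is precisely what the extra $|M|^3$ and $(k+3)|M|^2$ summands in the clique-size threshold are budgeted to detect. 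The main obstacle, to restate, is pinning down the minimal type invariant that (a) has only $\poly(|M|,k)$ values and (b) is strong enough that same-type vertices of $K$ are interchangeable on every permitted hole; everything else is bookkeeping with the clique tree and a pigeonhole count.
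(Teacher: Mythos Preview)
Your high-level plan --- pigeonhole over ``types'' of clique vertices, then swap an unmarked $w$ for a surviving same-type $w_i$ --- is in the right spirit, but the swap mechanism has a real gap and the paper's argument works differently. The difficulty you flag is fatal: for a direct swap you need $w_i$ adjacent to the two $H$-neighbours $a,b$ of $w$ and to nothing else on $H$. On the $M$-side this would follow from ``same $N(\cdot)\cap M$'', which already costs $2^{|M|}$ types; on the $(G-M)$-side the neighbours $a,b$ can be arbitrary vertices of $G-M$ adjacent to $w$, and no polynomial-size invariant on $w$ controls its adjacency to all such vertices. So there is no type classification with $\poly(|M|,k)$ classes strong enough to make ``replace $w$ by a same-type $w_i$ and get a hole'' go through.

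The paper does two different things. First, the ``new forced pair'' output is a \emph{preliminary} check, not a fallback: for each nonadjacent $x,y\in M$ one looks at the set $Q^{xy}$ of bottommost clique-tree nodes below which an induced $xy$-path through $G-M$ exists; if $|Q^{xy}|\ge k+2$ these yield $k+2$ internally-disjoint $xy$-holes and $xy$ is forced. Second, assuming all $|Q^{xy}|\le k+1$, the paper marks vertices of $K$ not by type but by \emph{role together with a clique-tree distance preference}: for each way a clique vertex can sit on a hole relative to $M$ (both $H$-neighbours in $M$, one, or none), it marks $k+1$ vertices of the appropriate subset of $K$ that are extremal in distance to a designated node (either some $q\in Q^{xy}$, maximizing distance, or a node $p^y$ containing the boundary of the $\neg y$-component meeting $K$, minimizing distance). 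The bound $(k+1)(|M|^3+(k+3)|M|^2)$ is exactly this marking budget. In the irrelevance proof, given an unmarked $v$ on a hole $H$ in $G-X$, one picks a surviving marked $v'$ extremal for the relevant role; the distance preference then guarantees either that $v'$ misses the interior of $H$ (when maximizing), or that $v'$ is adjacent to \emph{some} internal vertex of the subpath of $H$ from $v$ toward the relevant $y\in M$ (when minimizing; this is the key technical claim). In the latter case one does not swap $v$ for $v'$ but \emph{re-routes} the hole through $v'$, obtaining a walk between the $H$-neighbours of $y$ avoiding $N[y]$ and hence a hole in $G-v-X$. This distance-extremal marking plus re-routing is the missing idea in your proposal.
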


Let $\Gc := G-M$. Fix a clique tree of $\Gc$
and root it in a vertex $p$ such that $K \subseteq \beta(p)$. 

\subsection{Sets \texorpdfstring{$Q^{xy}$}{Q xy} and forced pairs}

We first look for a new forced pair in the following manner. 
Let $x$ and $y$ be two distinct vertices of $M$ with $xy \notin E(G)$.
We say that a node $q \in V(T)$ is \emph{$xy$-good}
if there exists an $xy$-path path $P_q$ in $G$ whose every internal
vertex $v$ belongs to $\Gc$ and satisfies $\binv(v) \subseteq T_q$.
Note that if $q$ is $xy$-good, then all its ancestors in the tree $T$
are also $xy$-good.
Let $Q^{xy}$ be the set of all maximally bottommost $xy$-good nodes in $T$,
that is, $q \in Q^{xy}$ if and only if 
$q$ is $xy$-good but none of the children of $q$ are $xy$-good. We claim the following.
\begin{claim}\label{cl:xy-good}
If $|Q^{xy}| \geq k+2$ for some pair $xy \in \binom{M}{2} \setminus E(G)$,
then every solution to the instance $(G,k,M,E^h)$ contains
at least one vertex from the pair $xy$.
\end{claim}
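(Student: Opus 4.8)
The proof will be by contradiction. Suppose $|Q^{xy}| \ge k+2$ yet there is a solution $X$ with $x,y \notin X$; the goal is to produce a hole in $G-X$, contradicting chordality of $G-X$.

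The first step is to observe that $Q^{xy}$ is an antichain of $T$, i.e.\ no member is an ancestor of another: if $q \in Q^{xy}$ had a descendant $q' \in Q^{xy}$, then the child $q''$ of $q$ on the path to $q'$ would satisfy $T_{q'} \subseteq T_{q''}$, so the $xy$-path witnessing that $q'$ is $xy$-good would also witness that $q''$ is $xy$-good, contradicting that $q$ is a bottommost $xy$-good node. Consequently the subtrees $\{T_q : q \in Q^{xy}\}$ are pairwise vertex-disjoint.

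Next, for each $q \in Q^{xy}$ fix a witnessing $xy$-path and, if necessary, shortcut it to an induced $xy$-path $P_q$ via Observation~\ref{observation:walk:implies:inducedpath}; shortcutting only discards internal vertices, so every internal vertex $v$ of $P_q$ still lies in $\Gc$ with $\binv(v) \subseteq T_q$. Since $xy \notin E(G)$, each $P_q$ has length at least two. If two distinct paths $P_q, P_{q'}$ shared an internal vertex $v$, then $\emptyset \neq \binv(v) \subseteq T_q \cap T_{q'} = \emptyset$, a contradiction; hence the interiors of the paths $\{P_q\}_{q \in Q^{xy}}$ are pairwise disjoint, and $V(P_q) \cap V(P_{q'}) = \{x,y\}$ for distinct $q, q'$.

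To finish, note that $X$ misses $x$ and $y$ and has at most $k$ vertices, so it meets at most $k$ of the at least $k+2$ pairwise-disjoint interiors; choose distinct $q, q'$ with $\bigl(V(P_q) \cup V(P_{q'})\bigr) \cap X = \emptyset$. Then $C := P_q \cup P_{q'}$ is a cycle on at least four vertices passing through $x$ and $y$. It is chordless: a chord inside $P_q$ or inside $P_{q'}$ contradicts inducedness of that path; a chord joining an internal vertex $u$ of $P_q$ to an internal vertex $w$ of $P_{q'}$ (this also covers the two ``corner'' pairs adjacent to $x$ and to $y$) is an edge of $\Gc$, so $u$ and $w$ share a bag of the clique tree, which would have to lie in $T_q \cap T_{q'} = \emptyset$; and a chord between $x$ (or $y$) and a non-neighbouring internal vertex of one of the paths contradicts inducedness of that path, while $xy \notin E(G)$. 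Thus $C$ is a hole in $G-X$, contradicting that $X$ is a solution, so every solution contains $x$ or $y$. The main obstacle is this last chord analysis --- in particular recognizing that the ``corner'' edges at $x$ and $y$ join two $\Gc$-vertices confined to disjoint subtrees of the clique tree, which is exactly what the antichain property of $Q^{xy}$ provides.
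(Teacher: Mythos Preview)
Your proof is correct and follows essentially the same approach as the paper: both argue that $Q^{xy}$ is an antichain so the subtrees $T_q$ are pairwise disjoint, take induced witness paths $P_q$, observe that internal vertices of distinct paths can neither coincide nor be adjacent (since their $\binv$-subtrees live in disjoint $T_q$'s), and conclude that any two paths missed by $X$ combine into a hole. You spell out the antichain argument and the chord analysis in more detail than the paper does, but the structure and ideas are identical.
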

\begin{claimproof}
For every $q \in Q^{xy}$, let $P_q$ be a path witnessing that $q$ is $xy$-good;
without loss of generality, assume that $P_q$ is an induced path.
Consider two distinct $q,q' \in Q^{xy}$. 
Observe that, by the definition of $Q^{xy}$, the nodes $q$ and $q'$
are not in ancestor-descendant relationship in $T$, and thus the subtrees
$T_q$ and $T_{q'}$ are node-disjoint. Conseqently, no internal vertex 
of $P_q$ is equal or adjacent to any internal vertex of $P_{q'}$.
As $xy \notin E(G)$, the union of the paths $P_q$ and $P_{q'}$ is a hole in $G$.
As there are at least $k+2$ elements of $Q^{xy}$, any solution to $(G,k,M,E^h)$
avoids the set of internal vertices of at least two such paths, and therefore
contains $x$ or $y$.
\end{claimproof}
Claim~\ref{cl:xy-good} justifies the following reduction rule.
\begin{reduction}\label{red:xy-good}
If $|Q^{xy}| \geq k+2$ for some pair $xy \in \binom{M}{2} \setminus E(G)$,
then add $xy$ to $E^h$ and to $E(G)$.
\end{reduction}
Observe that it is straightforward to check directly from the definition
if a node $q \in V(T)$ is $xy$-good for a given pair $xy$.
Thus, Reduction~\ref{red:xy-good} can be applied in polynomial time,
and henceforth we assume $|Q^{xy}| \leq k+1$ for every pair
$xy \in \binom{M}{2} \setminus E(G)$.

\subsection{Marking relevant parts of a clique}

Having bounded the size of the sets $Q^{xy}$, we now mark a number
of vertices of the clique $K$, and then argue that any unmarked vertex
is irrelevant.
The argumentation here was loosely inspired by the analogous
part of the work of Marx~\cite[Section 5]{Marx10}.
%\todo{I recall you saying that you got some inspiration from reading Daniel's paper. Does it relate to this clique marking step, since he also has an argument to find irrelevant vertices in a clique? Should we say this procedure is ``inspired by Daniel's paper''?}
% Marcin: I admit I cannot now point exactly what was inspired, but I remember getting some idea from DM's paper and it was somewhere here. I think we can safely mention here inspiration.

Note that for every~$x \in M$, there is at most one connected component of~$G(\neg x)$ that contains a vertex of~$\beta(p)$, since~$\beta(p)$ is a clique in~$\Gc$. By~$A^x$ we denote the vertex set of this connected component of~$G(\neg x)$ if it exists. We define~$A^x := \emptyset$ otherwise, which occurs when~$\beta(p) \subseteq N(x)$. By Proposition~\ref{prop:NC-clique}, if $A^x \neq \emptyset$, then
$N(A^x) \setminus M$ is a clique in $\Gc$; let $p^x$ be a node of $T$
such that $N(A^x) \setminus M \subseteq \beta(p^x)$.
For a vertex $x \in M$ with $A^x = \emptyset$, we define $p^x = p$, which is the root of the clique tree.

\widefigure{
\subfigure[Graph~$G$, modulator~$M$.]{\label{fig:cliquemarking2}
\image{CliqueMarking2}
}
\subfigure[Graph~$\Gc$, with~$G(\neg x)$ in black.]{\label{fig:cliquemarking3}
\image{CliqueMarking3}
}
\subfigure[Clique tree~$(T,\beta)$.]{\label{fig:cliquemarking1}
\bigimage{CliqueMarking1}
}}{
\caption{Illustration for the clique marking procedure. \ref{fig:cliquemarking2} Instance of \AChVD with graph~$G$ and modulator~$M$, and a clique~$K$ in~$\Gc$. For readability, edges between~$\Gc$ and~$M$ are not drawn, but encoded through vertex names. Vertex~$x$ is adjacent to the~$x_i$'s and~$z_i$'s, vertex~$y$ is adjacent to the~$y_i$'s and~$z_i$'s. \ref{fig:cliquemarking3} The graph~$G(\neg x)$ contains the vertices in black and has a component with vertex set~$A^x = \{y_1, y_3, w_1\}$ that intersects~$K$. Since~$K \subseteq N_G(y)$ we have~$A^y = \emptyset$. \ref{fig:cliquemarking1} A clique tree of~$\Gc$ rooted at a node~$p$ whose bag contains the entire clique~$K$. The three nodes of~$Q^{xy}$ are yellow. A node~$p^x$ containing all vertices of~$A^x$ is chosen. When a letter such as~$z_{5,6}$ has multiple subscripts, this means the bag contains both~$z_5$ and~$z_6$.} \label{fig:cliquemarking}
}

% BMP: I commented this out because I moved this description to one section earlier, where we use it in the marking to reduce # components wrt separator.
%In what follows, by a phrase ``from a set $Z$ we mark up to $\ell$ vertices''
%we mean that we mark $\ell$ arbitrary vertices from $Z$ if $|Z| > \ell$,
%or all vertices of $Z$ if $|Z| \leq \ell$. 

The marking procedure for this rule works with additional constraints. By a statement of the form ``mark up to~$\ell$ vertices of~$Z$, preferring vertices maximizing/minimizing distance from $q$'' for some node $q \in V(T)$, we mean that we do not choose
the vertices of $Z$ to mark arbitrarily (in the case $|Z| > \ell$),
but we sort vertices $u \in Z$ according to the distance between
$\binv(u)$ and $q$ in~$T$, and mark $\ell$ vertices with the largest/smallest distance. We are now ready to present our marking algorithm.
\begin{enumerate}[(a)]
\item \label{mark:clique:triple} For every triple $(x_1,x_2,y) \in M\times M \times M$, mark up to $k+1$ vertices from $K \cap V(x_1,x_2,\neg y)$.
\item \label{mark:clique:Q} For every pair $(x_1,x_2) \in \binom{M}{2} \setminus E(G)$ and every node $q \in Q^{x_1x_2}$,
mark up to $k+1$ vertices of $K \cap V(x_1,x_2)$, preferring vertices
maximizing distance from $q$.
\item \label{mark:clique:px1} For every pair $(x,y) \in M \times M$,
mark up to $k+1$ vertices from $K \cap V(x,\neg y)$, preferring
vertices minimizing distance from $p^y$.
\item \label{mark:clique:px2} For every pair $(x,y) \in M \times M$,
mark up to $k+1$ vertices from $K \cap V(\neg x,\neg y)$,
preferring vertices minimizing distance from $p^{y}$.
\end{enumerate}
A direct calculation shows that we mark at most $(k+1)(|M|^3 + (k+3)|M|^2)$
vertices of $K$, and it is straightforward to implement the marking algorithm
to run in polynomial time. Figure~\ref{fig:CliqueMarkingProof} illustrates the different types of holes that are preserved by this marking procedure.

\subsection{Every unmarked vertex is irrelevant}

The remainder of this section is devoted to the proof that any unmarked vertex of $K$ is irrelevant.
For that we need the following simple technical step.
\begin{claim}\label{cl:cliques:towards}
Let $v \in K$ be an unmarked vertex, $y \in M$ be a vertex such that $vy \notin E(G)$, and let $P$ be a path between $v$ and $y$ with all internal vertices in $\Gc$.
Then, for every $x \in M$, and every vertex $v'$ marked for the pair
$(x,y)$ either in Point~\eqref{mark:clique:px1} if $vx \in E(G)$
or in Point~\eqref{mark:clique:px2} if $vx \notin E(G)$,
the vertex $v'$ is equal to 
or adjacent to at least one vertex of $V(P) \setminus \{v,y\}$.
\end{claim}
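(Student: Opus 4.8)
The plan is to prove Claim~\ref{cl:cliques:towards} by a contrapositive-style argument that uses the clique-tree structure of $\Gc$ together with the defining marking rules~\eqref{mark:clique:px1} and~\eqref{mark:clique:px2}. Fix an unmarked vertex $v \in K$, a vertex $y \in M$ with $vy \notin E(G)$, and a $vy$-path $P$ whose internal vertices lie in $\Gc$. Let $P^\circ := V(P) \setminus \{v, y\}$ denote the set of internal vertices of $P$. We may assume $P$ is induced (shortcut it if necessary; this only shrinks $P^\circ$, making the statement harder, so it suffices to prove the shortcut version). Let $x \in M$ be arbitrary and let $v'$ be any vertex marked for the pair $(x,y)$ in the relevant point (Point~\eqref{mark:clique:px1} if $vx \in E(G)$, Point~\eqref{mark:clique:px2} if $vx \notin E(G)$). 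We must show $v'$ is equal to, or adjacent to, some vertex of $P^\circ$. The core mechanism is that both $v$ and $v'$ lie in $K \subseteq \beta(p)$, so they belong to the (at most one) component $A^y$ of $G(\neg y)$ meeting $\beta(p)$ — provided $vy \notin E(G)$ and $v'y \notin E(G)$, which we will need to check — and the marking preferred vertices minimizing the $T$-distance to $p^y$, where $\beta(p^y) \supseteq N(A^y) \setminus M$.

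**Key steps.** First I would record that $v, v' \in K \cap V(\cdot, \neg y)$ (where the first slot is $x$ or $\neg x$ according to the two cases), so in particular $v'y \notin E(G)$; hence both $v$ and $v'$ lie in the component $A^y$ of $G(\neg y)$ that meets $\beta(p)$, and that component is well-defined and nonempty. By Proposition~\ref{prop:NC-clique}, $N(A^y) \setminus M$ is a clique in $\Gc$, contained in $\beta(p^y)$ by choice of $p^y$. Next, consider the $vy$-path $P$: since its internal vertices are in $\Gc$ and it connects $v \in A^y$ to $y \in M$, traversing $P$ from $v$ there must be a first vertex $w$ that either lies outside $A^y$ or is $y$ itself; the predecessor of $w$ on $P$, call it $w^-$, lies in $A^y$, while $w \in N(A^y)$, so $w \in \beta(p^y)$ (or $w = y$, but $y \in M$ cannot be internal, so in fact $w \in N(A^y) \setminus M \subseteq \beta(p^y)$ — here I need the case $w=y$, i.e. the last edge of $P$, which is handled since then $w^-$ is adjacent to $y$, meaning $w^- \in N(y)$, contradicting $w^- \in A^y \subseteq V(\neg y)$ unless $P$ has no internal vertices, a degenerate case to dispatch separately). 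So $P$ contains a vertex $w \in P^\circ$ with $w \in \beta(p^y)$. Now the crux: I want to argue that $\binv(v')$ lies "between" $\binv(v)$ and $p^y$ in $T$ in a suitable sense, or rather that $v'$ being among the $k+1$ vertices closest to $p^y$ forces $\binv(v')$ to separate $\binv(v)$ from $\binv(w)$ or otherwise intersect the path. Since $v, v', w$ all lie in the connected chordal graph $G[A^y \cup (N(A^y)\setminus M)]$ with its induced clique tree, and $v' \in \beta(p)$ while $w \in \beta(p^y)$, I would use that the unique path in $T$ from $\binv(v)$ to $\binv(w)$ passes near $p^y$, and that $v'$, chosen to minimize distance to $p^y$, has $\binv(v')$ meeting that path — which by the clique-tree property (bags are cliques) yields that $v'$ is adjacent to or equal to a vertex of $P^\circ \cap \beta(q)$ for the relevant node $q$.

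**The main obstacle.** The hard part will be making the "closest to $p^y$ forces intersection" argument precise — specifically, showing that among all vertices of $K \cap V(\cdot,\neg y)$, the $k+1$ that minimize distance to $p^y$ include one whose subtree $\binv(v')$ lies on (or adjacent to) the tree-path traced out by $P$ from $\binv(v)$ toward $\beta(p^y)$. The subtlety is that $v$ itself is unmarked, so $v$ was beaten in the ranking by $k+1$ other vertices; I would exploit exactly this, analogously to Marx's argument~\cite[Section 5]{Marx10}: since $v$ is unmarked, there exist $k+1$ marked vertices $v'$ whose subtrees are "at least as close" to $p^y$ as $\binv(v)$, and a counting/pigeonhole over the at most $k$ vertices a solution could delete is what ultimately uses this in the later irrelevance proof — but for \emph{this} claim I only need the structural fact that any single such $v'$ is adjacent to $P^\circ$. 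So the real work is the tree-topology lemma: if $\binv(v)$ and $\binv(v')$ are two subtrees with $v,v' \in \beta(p)$, $v' $ no farther from $p^y$ than $v$, and $P$ is an induced path from $v$ leaving $A^y$ through $\beta(p^y)$, then $\binv(v')$ contains a node whose bag meets $P^\circ$. I expect this to follow by locating the first edge of $P$ (from $v$) whose $T$-image leaves the subtree $T_{q}$ where $q$ is the node of $\binv(v')$ closest to $p^y$, applying Observation~\ref{observation:stpaths:use:adhesion} to get a vertex of $P$ in the adhesion of that edge, and noting that adhesion vertex shares a bag with $v'$ hence is adjacent to $v'$; one then checks that adhesion vertex is internal to $P$, i.e. not $v$ (it lies in a bag above $v$'s subtree-exit) and not $y$ (it is in $\Gc$). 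Carefully handling the two cases ($vx \in E(G)$ versus $vx \notin E(G)$, which only affects \emph{which} marking pool $v'$ comes from, not the geometry) and the degenerate case where $P$ has no internal vertices (then $vy \in E(G)$, excluded by hypothesis) rounds out the proof.
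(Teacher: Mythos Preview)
Your approach matches the paper's: reduce to an induced $P$, locate an internal vertex of $P$ in $\beta(p^y)$, use that $v$ being unmarked forces $d_T(\binv(v'),p^y)\le d_T(\binv(v),p^y)$, and finish with a tree-topology argument plus Observation~\ref{observation:stpaths:use:adhesion}. The paper streamlines your first step by simply taking $u$, the neighbor of $y$ on the induced path $P$: all other internal vertices are nonneighbors of $y$ and hence lie in $A^y$, so $u\in N(A^y)\setminus M\subseteq\beta(p^y)$ directly, avoiding your first-exit-from-$A^y$ discussion.

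The one place your sketch needs tightening is the tree-topology step. As written (``first edge of $P$ whose $T$-image leaves the subtree $T_q$ where $q$ is the node of $\binv(v')$ closest to $p^y$'') you are mixing edges of $P$ with edges of $T$, and anchoring on $\binv(v')$ when you should anchor on $\binv(v)$; note also that $p\notin T_q$ in general, so $P$ does not even start inside $T_q$. The paper's clean version: if $p^y\in\binv(v)$ then $p^y\in\binv(v')$ too and $v'$ shares bag $\beta(p^y)$ with $u$; otherwise take the first edge $e$ on the $T$-path from the root $p$ to $p^y$ that is not contained in $T[\binv(v)]$, with near endpoint $p'\in\binv(v)$. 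Since $P\setminus\{y\}$ is connected in $\Gc$ and meets both $\beta(p)$ and $\beta(p^y)$, some $w\in V(P)\setminus\{y\}$ lies in $\adh(e)\subseteq\beta(p')$, and $w\neq v$ since $v\notin\adh(e)$. Because both $\binv(v)$ and $\binv(v')$ contain the root $p$, the distance comparison forces $\binv(v')$ to extend at least as far along the $p$--$p^y$ path as $\binv(v)$, hence $p'\in\binv(v')$ and $w$ shares a bag with $v'$. That is the precise lemma you were reaching for.
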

\begin{claimproof}
First, note that since in Points~\eqref{mark:clique:px1} and~\eqref{mark:clique:px2}
we prefer vertices minimizing distance to $p^y$, and the vertex $v$
is a candidate for marking when $v'$ has been marked, the distance
from $\binv(v')$ to $p^y$ is not larger than the distance from $\binv(v)$
to $p^y$.

Without loss of generality we may assume that $P$ is an induced path: shortcutting
the path $P$ to an induced one would only make our task harder.
Let $u$ be the neighbor of $y$ on the path $P$. 
We have $u \in V(\Gc)$; since $vy \notin E(G)$, we have also $u \neq v$.
Furthermore, as $P$ is an induced path, all vertices on $P$ except
for $y$ and $u$ lie outside $N[y]$, and thus they lie in $A^y$
as $v \in A^y$. Consequently, $u \in N(A^y)$, and thus $p^y \in \binv(u)$.

If $p^y \in \binv(v)$, then also $p^y \in \binv(v')$
and $v' = u$ or $v'u \in E(G)$.
Otherwise, let $e$ be the first edge of $T$ on the unique
path from $p$ to $p^y$ that is not contained in $T[\binv(v)]$,
and let $p'$ be the endpoint of $e$ that belongs to $\binv(v)$.
Since $P \setminus \{y\}$ induces a connected subgraph of $\Gc$
containing vertices $v \in \beta(p)$ and $u \in \beta(p^y)$, there exists
a vertex $w \in V(P) \setminus \{y\}$ with $w \in \adh(e)$. As~$v \not \in \adh(e)$ we know~$w \neq v$. 
Furthermore, we have $p' \in \binv(v')$, since $\binv(v')$ is at least as close to $p^y$ as $\binv(v)$,
and both $\binv(v)$ and $\binv(v')$ contain the root $p$. Hence, $w = v'$ or $wv' \in E(G)$.
This finishes the proof of the claim.
\end{claimproof}

We are now ready to prove the last claim,
concluding the proof of Lemma~\ref{lem:red-cliques}.
\begin{claim}\label{cl:cliques:irrelevant}
Every unmarked vertex of $K$ is irrelevant.
\end{claim}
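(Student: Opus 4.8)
The plan is to prove, for an unmarked vertex $v \in K$ (one exists because the marking procedure marks at most $(k+1)(|M|^3+(k+3)|M|^2)<|K|$ vertices of $K$), that $(G,k,M,E^h)$ and $(G-v,k,M,E^h)$ have the same answer. One direction is immediate: given a solution $X$ to $(G,k,M,E^h)$, the set $X\setminus\{v\}$ is a solution to $(G-v,k,M,E^h)$, since $v\notin M$ lies in no pair of $E^h$ and $(G-v)-(X\setminus\{v\})$ is an induced subgraph of the chordal graph $G-X$. For the converse I would fix a solution $X$ to $(G-v,k,M,E^h)$; then $v\notin X$, $|X|\le k$, $X$ hits $E^h$, and $G-X-v$ is chordal, so it suffices to show $G-X$ is chordal. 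Assume not, and pick a hole $H$ in $G-X$; since $G-X-v$ is chordal we have $v\in V(H)$. Because $G-(M\setminus\{m\})$ is chordal for every $m$, the hole $H$ meets $M$ in at least two vertices; let $x$ and $y$ be the first vertices of $M$ reached when traversing $H$ away from $v$ in the two directions, so $x\neq y$, and the subpath $P^+$ of $H$ from $v$ to $x$ and the subpath $P^-$ of $H$ from $v$ to $y$ have all interior vertices in $\Gc$. Write $R$ for the remaining subpath of $H$ from $x$ to $y$. In each case below I will locate a marked vertex $v'\in K$ avoiding $X$ (always possible, since the relevant marking step marks at least $k+1$ vertices and $|X|\le k$) whose insertion reroutes $H$ into a hole of $G-X-v$, contradicting chordality.

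\textbf{Case 1: $vx,vy\in E(G)$.} Then $x,y$ are the two neighbours of $v$ on $H$, $xy\notin E(G)$, and $R=H-v$ is an induced path from $x$ to $y$ with at least one interior vertex. If some interior vertex $z$ of $R$ lies in $M$, then $v\in K\cap V(x,y,\neg z)$, so marking step~\eqref{mark:clique:triple} for the triple $(x,y,z)$ supplies a marked $v'\in V(x,y,\neg z)$ avoiding $X$; traversing the cycle $v',x,R,y,v'$ and invoking Observation~\ref{obs:find-hole} at $z$ (using $v'\notin N[z]$ and that $R$ is induced, with a small sub-case when $z$ is the $R$-neighbour of $x$ or of $y$, where $H$ is essentially a $C_4$) yields a hole in $G-X-v$. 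If instead $R$ is internally contained in $\Gc$, then $R$ witnesses that some node of $T$ is $xy$-good, and choosing an appropriate $q\in Q^{xy}$ one argues that every vertex adjacent to an interior vertex of $R$ is close to $q$ in $T$, whereas $v$ — adjacent to $x,y$ but to no interior vertex of the induced path $R$ — is at least as far from $q$; hence the marked $v'\in K\cap V(x,y)$ provided by step~\eqref{mark:clique:Q} for $q$, chosen maximizing distance from $q$, is non-adjacent to at least one interior vertex of $R$, so $\{v'\}\cup V(R)$ induces a hole of $G-X-v$.

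\textbf{Case 2: at most one of $vx,vy$ lies in $E(G)$.} Suppose first $vx\in E(G)$ and $vy\notin E(G)$, so $P^-$ is a nontrivial path from $v$ to $y$ with interior in $\Gc$. I would apply Claim~\ref{cl:cliques:towards} to $v$, $y$ and $P^-$: since $vx\in E(G)$, every vertex marked for the pair $(x,y)$ in step~\eqref{mark:clique:px1} is equal or adjacent to an interior vertex of $P^-$, so pick such a $v'$ avoiding $X$; it is moreover adjacent to $x$. Replacing the $v$-end of $P^-$ by the detour through $x$ and $v'$ produces a closed walk avoiding $v$ and $X$, and a short split on whether $xy\in E(G)$ (which would force $x,y$ consecutive on $H$) lets me apply Observation~\ref{obs:find-hole} anchored at $y$ (or at $x$) to extract the desired hole. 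The case $vx\notin E(G),vy\in E(G)$ is symmetric. Finally, if $vx,vy\notin E(G)$ then $P^+,P^-$ are both nontrivial; applying Claim~\ref{cl:cliques:towards} once to $(v,x,P^+)$ via step~\eqref{mark:clique:px2} (as $vy\notin E(G)$) and once to $(v,y,P^-)$ via step~\eqref{mark:clique:px2} (as $vx\notin E(G)$) gives marked vertices $v',v''\in K$ avoiding $X$, adjacent to interior vertices of $P^+$ and of $P^-$ respectively; since $v',v''\in K$ they are equal or adjacent, so the interiors of $P^+$ and $P^-$ are joined by a detour through $v',v''$ avoiding $v$, and Observation~\ref{obs:find-hole} anchored at $x$ — using $v',v''\notin N[x]$, which holds because both are marked in sets of the form $K\cap V(\neg x,\cdot)$ — delivers a hole in $G-X-v$.

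The routine parts are the equivalence bookkeeping in the first paragraph and tracking which marking step feeds each case. The main obstacle is everything after ``reroute'' in each case: one must verify that the rerouted closed walk, which by construction avoids $v$ and $X$, actually contains a hole, which is done by invoking Observation~\ref{obs:find-hole} at a carefully chosen vertex of $M$ on $H$ with the inserted vertex(es) $v'$ (and $v''$) lying outside its closed neighbourhood. This forces a separate treatment of the sub-case $xy\in E(G)$ in the one- and no-adjacency situations, and, in the $Q^{xy}$ sub-case of Case~1, a delicate clique-tree distance argument showing that the substitute $v'$, being no closer to the chosen $q\in Q^{xy}$ than $v$ is, must miss some interior vertex of $R$.
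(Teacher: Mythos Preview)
Your proposal is correct and follows the same approach as the paper: the identical three-way case split on how many $H$-neighbours of $v$ lie in $M$, with each case handled by the corresponding marking step together with Claim~\ref{cl:cliques:towards} and Observation~\ref{obs:find-hole} exactly as in the paper. Two cosmetic remarks: in the $Q^{xy}$ sub-case the paper actually proves the stronger fact that $v'$ is non-adjacent to \emph{every} interior vertex of $R$ (by showing $q_0\notin\binv(v')$ for the topmost node $q_0$ whose bag meets the interior of $R$, which is the ``appropriate'' choice of $q$'s ancestor you allude to), and your split on whether $xy\in E(G)$ in the one-adjacency case is unnecessary since anchoring Observation~\ref{obs:find-hole} at $y$ works uniformly.
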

\begin{claimproof}
Let $v \in K$ be an unmarked vertex.
It suffices to show that if $X$ is a solution to $(G-v,k,M,E^h)$,
then it is also a solution to $(G,k,M,E^h)$.
Assume the contrary; as $v \notin M$, it must hold that there is
a hole $H$ in $G-X$. Since $G-v-X$ is chordal, $v \in V(H)$.
By the properties of an \AChVD instance $(G,k,M,E^h)$, the hole $H$
contains at least two vertices of $M$.
We make a case distinction, depending on whether the neighbors of $v$
on the hole $H$ belong to $M$ or lie in $\Gc$ (see Fig.~\ref{fig:CliqueMarkingProof}).

\medskip 

\noindent\textbf{Case 1: both neighbors of $v$ on $H$ lie in $M$.}
Denote these two neighbors by $x_1,x_2$. As $H$ is a hole, $x_1x_2 \notin E(G)$; in particular, 
$x_1x_2 \notin E^h$.

Assume first that $x_1$ and $x_2$ are the only two vertices of $V(H) \cap M$.
Then $H$ consists of vertices $v$, $x_1$, $x_2$, and a path $P$
between $x_1$ and $x_2$ with all internal vertices in $\Gc$.
Consider the set $P_T$ of all nodes of $T$ whose bag contain at least
one internal vertex of $P$. Since $v \in K \subseteq \beta(p)$, 
and $H$ is a hole, we have $p \notin P_T$. Furthermore, by the connectivity
of $P$, $P_T$ induces a subtree of $T$. Let $q_0$ be the topmost (closest to $p$)
vertex of $P_T$; note that $q_0 \neq p$.

By the existence of the path $P$, the node $q_0$ is $x_1x_2$-good.
By the definition of $Q^{x_1x_2}$ there exists a node $q \in Q^{x_1x_2}$
that is a descendant of $q_0$.
Consider a vertex $v'$ marked at Point~\eqref{mark:clique:Q} for
the pair $x_1x_2$ and the node $q$ that does not belong to the solution $X$.
Since in Point~\eqref{mark:clique:Q} we mark vertices that maximize the distance
from $q$, the distance between $\binv(v')$ and $q$ is at least as large
as the distance between $\binv(v)$ and $q$.
Consequently, since the root $p$ belongs to $\binv(v')$ and to $\binv(v)$,
and $q_0$ is an ancestor of $q$, we have $q_0 \notin \binv(v')$.
We infer that $v'$ is not adjacent to any internal vertex of $P_T$,
and thus if we replace $v$ with $v'$ on the hole $H$, we obtain
a hole as well, contradicting the fact that $G-v-X$ is a chordal graph.

In the remaining case, there exists a third node $y \in V(H) \cap M$,
different from $x_1$ and $x_2$; in particular, $vy \notin E(G)$.
Consider a vertex $v'$ marked at Point~\eqref{mark:clique:triple}
for the triple $(x_1,x_2,y)$ that does not belong to $X$.
Consider the path $P := H-y$ between the neighbors of $y$ on the hole $H$,
and let $P'$ be the walk created from $P$ by replacing the vertex $v$
with $v'$. Since $v' \in N(x_1) \cap N(x_2)$, $P'$ is indeed a walk.
Since $v' \notin N(y)$, all internal vertices of $P'$ avoid $N[y]$.
Since no vertex of $P'$ belongs to $X$ nor equals $v$,
by Observation~\ref{obs:find-hole} the graph $G-v-X$ contains a hole,
a contradiction.

\widefigure{
\subfigure[$|N_H(v) \cap M| = 2$ and $|V(H) \cap M| = 2$.]{\label{fig:CliqueMarkingProofA}
\image{CliqueMarkingProofA}
}
\subfigure[$|N_H(v) \cap M| = 2$ and $|V(H) \cap M| > 2$.]{\label{fig:CliqueMarkingProofB}
\image{CliqueMarkingProofB}
}
\subfigure[$|N_H(v) \cap M| = 1$.]{\label{fig:CliqueMarkingProofC}
\image{CliqueMarkingProofC}
}
\subfigure[$|N_H(v) \cap M| = 0$.]{\label{fig:CliqueMarkingProofD}
\image{CliqueMarkingProofD}
}}{
\caption{Illustration for the correctness of the clique marking procedure (Claim~\ref{cl:cliques:irrelevant}). The four subfigures correspond to the four marking steps (\ref{mark:clique:triple})--(\ref{mark:clique:px2}). Each subfigure shows a schematic view of an instance with a modulator~$M$ and clique~$K$ in~$\Gc$ that is the target of the marking procedure. For each of the four types of holes highlighted in red, the corresponding marking step ensures that the constraints they impose on size-$k$ solutions are preserved after deleting the unmarked vertices. Vertex names correspond to those used in the proof of Claim~\ref{cl:cliques:irrelevant}.} \label{fig:CliqueMarkingProof}
}

\medskip

\noindent\textbf{Case 2: one neighbor of $v$ on $H$ lies in $M$.}
Let $u$ and $x$ be the two neighbors of $v$ on $H$, with $x \in M$
and $u \in V(\Gc)$. Let $y$ be the first vertex of $V(H) \cap M$
that we encounter if we traverse  $H$ starting from $v$ in the direction 
towards $u$. Since $V(H) \cap M$ contains at least two vertices,
$x \neq y$, in particular, $vy \notin E(G)$.
Let $P_u,P_x$ be the subpaths of $H$ between $v$ and $y$, with
$u \in V(P_u)$ and $x \in V(P_x)$.

Consider a vertex $v'$ marked at Point~\eqref{mark:clique:px1}
for the pair $(x,y)$ that does not belong to the solution $X$.
By Claim~\ref{cl:cliques:towards}, $(V(P_u) \setminus \{v\}) \cup \{v'\}$
induces a connected subgraph of $G$.
Since $v'y \notin E(G)$ and $v'x \in E(G)$,
$(V(P_x) \cup V(P_u) \setminus \{v\}) \cup \{v'\}$
contains a path from the two neighbors of $y$ on the hole $H$
whose internal vertices belong to $V(G) \setminus (N[y] \cup \{v\})$.
By Observation~\ref{obs:find-hole}, the graph $G-v-X$ contains a hole, a contradiction.

\medskip

\noindent\textbf{Case 3: none of the neighbors of $v$ on $H$ lies in $M$.}
Let $u_x$ and $u_y$ be the two neighbors of $v$ on $H$.
Let $x$ be the first vertex of $M$ that we encounter if we traverse
$H$ starting from $v$ in the direction towards $u_x$, and similarly
define $y$ for $u_y$. Since $V(H) \cap M$ contains at least two vertices,
$x \neq y$. Clearly, we have also $vx,vy \notin E(G)$.
Let $P_x$ be the subpath of $H$ between $v$ and $x$ containing $u_x$,
and let $w_x$ be the neighbor of $x$ on $P_x$.
Similarly define $P_y$ and $w_y$. Note that it is possible that $w_x=u_x$
or $w_y = u_y$.

Consider vertices $v_y$ and $v_x$ marked at Point~\eqref{mark:clique:px2}
for the pairs $(x,y)$ and $(y,x)$, respectively,
that do not belong to the solution $X$.
Note that it is possible that $v_y = v_x$.
By Claim~\ref{cl:cliques:towards}, both $(V(P_x) \setminus \{v\}) \cup \{v_x\}$
and $(V(P_y) \setminus \{v\}) \cup \{v_y\}$ induce connected graphs.
As both $v_x$ and $v_y$ belong to the clique $K$, it follows that
$(V(P_x) \cup V(P_y) \cup \{v_x,v_y\}) \setminus \{v\}$ induces a connected
subgraph of $G$; in particular, it contains a path from $w_x$ to $w_y$
whose internal vertices belong to $V(\neg x, \neg y)$.
This path, together with the subpath of $H$ between $x$ and $y$ not containing $v$,
 witnesses that there exists a walk in $G-v-X$ between two neighbors of 
$y$ on $H$ whose internal vertices do not belong to $N[y]$. 
By Observation~\ref{obs:find-hole}, $G-v-X$ contains a hole, a contradiction.
\end{claimproof}

Define~$\cliquebound := (k+1) (|M|^3 + (k+3)|M|^2) \leq \Oh(k|M|^3)$ as a shorthand for the bound of Lemma~\ref{lem:red-cliques}; note that instances with~$k \geq |M|$ trivially have a solution consisting of the entire set~$M$. Having proven Lemma~\ref{lem:red-cliques}, we may state the following reduction
rule.
\begin{reduction}\label{red:cliques}
Apply the algorithm of Lemma~\ref{lem:red-cliques} to any 
maximal clique of $\Gc$ whose size is larger than~$\cliquebound$. 
Depending on the outcome, either add the new forced pair to $E^h$ and to $E(G)$ or 
delete an irrelevant vertex from $G$.
\end{reduction}

Exhaustive application of Rule~\ref{red:cliques} ensures that the clique number~$\omega(\Gc)$ of the chordal graph~$\Gc$ is bounded by~$\cliquebound$. Since chordal graphs have perfect elimination orderings (cf.~\cite[\S 1.2]{BrandstadtLS99}), the number of maximal cliques is linear in the number of vertices. The maximal cliques can be identified efficiently from a clique tree, implying that Rule~\ref{red:cliques} can be applied exhaustively in polynomial time.

\section{Reducing the number of components of nonneighbors}\label{s:components}
The main goal in this section is to reduce the number of connected components of~$G(\neg x)$ for each~$x \in M$. Rather than reducing this number by deleting vertices, we will get rid of components of~$G(\neg x)$ by making them adjacent to~$x$. To ensure that this process does not change the answer of the instance, we have to understand how holes in~$G$ use such components. We therefore need to introduce some notation.

\begin{definition} \label{def:cnegxy}
Let~$x,y \in M$ with~$x \neq y$. Let~$\C_{\neg x}(y)$ be the connected components~$C$ of~$G(\neg x)$ such that~$V(C) \cap N_G(y) \neq \emptyset$ and~$N_{\Gc}(C) \setminus N_G(y) \neq \emptyset$.
\end{definition}

A component~$C$ of~$\C_{\neg x}(y)$ contains the interior vertices of an induced path~$P_{C}$ between~$y$ and a vertex~$v$ of~$N_{\Gc}(C) \setminus N_G(y)$, which has at least two edges. Observe that if we have two components~$C_1, C_2 \in \C_{\neg x}(y)$ for which~$v \in (N_{\Gc}(C_1) \cap N_{\Gc}(C_2)) \setminus N_G(y)$ is a common neighbor, then the concatenation of the two paths~$P_{C_1}$ and~$P_{C_2}$ is a hole through~$y$ and~$v$ (see Fig.~\ref{fig:ReducingNonneighborComponents0}). The key point here is that since~$C_1$ and~$C_2$ are different components of~$G(\neg x)$, no vertex of~$C_1$ is adjacent to a vertex of~$C_2$. The role of~$x$ in Definition~\ref{def:cnegxy} is to ensure this property.

Furthermore, we observe that such a hole, created from a concatenation of $P_{C_1}$ and $P_{C_2}$, contains only one vertex of $M$, namely $y$. 
This is a contradiction with the definition of an \AChVD instance.
Consequently, we have the following observation that is crucial to our goal of reducing the number of components of~$G(\neg x)$ for all~$x \in M$.
\begin{observation}\label{obs:forbidden}
For fixed vertices $x,y \in M$, the sets $\{N_{\Gc}(C) \setminus N_G(y)~|~C \in \C_{\neg x}(y)\}$ are pairwise disjoint.
\end{observation}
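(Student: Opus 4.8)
The plan is to prove the contrapositive: if two distinct components $C_1, C_2 \in \C_{\neg x}(y)$ shared a vertex $v \in (N_{\Gc}(C_1) \cap N_{\Gc}(C_2)) \setminus N_G(y)$, then we could build a hole in $G - (M \setminus \{y\})$, contradicting the defining property of an \AChVD instance that $G - (M \setminus \{v'\})$ is chordal for every $v' \in M$. The key structural fact to exploit is that $C_1$ and $C_2$ are \emph{distinct} connected components of $G(\neg x)$, so no vertex of $C_1$ is adjacent to any vertex of $C_2$ in $G$; moreover $v \notin N_G(y)$ so $v$ itself is a nonneighbor of $y$.

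First I would make the membership conditions explicit. By Definition~\ref{def:cnegxy}, for $i \in \{1,2\}$ the component $C_i$ satisfies $V(C_i) \cap N_G(y) \neq \emptyset$ and $N_{\Gc}(C_i) \setminus N_G(y) \neq \emptyset$; since we are assuming $v$ is a common such ``far'' neighbor, we have $v \in N_{\Gc}(C_i) \setminus N_G(y)$, i.e.\ $v$ is adjacent in $\Gc$ to some vertex of $C_i$ and $v \notin N_G(y)$. Fix a vertex $a_i \in V(C_i) \cap N_G(y)$ and a vertex $b_i \in V(C_i)$ with $b_i v \in E(G)$. Since $C_i$ is connected, pick a path from $a_i$ to $b_i$ inside $G[V(C_i)]$, prepend the edge $y a_i$ and append the edge $b_i v$; this is a $y$--$v$ walk whose internal vertices all lie in $V(C_i)$. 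Applying Observation~\ref{observation:walk:implies:inducedpath}, shortcut it to an induced $y$--$v$ path $P_i$ in $G$ whose internal vertices lie in $V(C_i)$ (note $y v \notin E(G)$ since $v \notin N_G(y)$, so $P_i$ has at least two edges; and the internal vertices avoid $N_G[y]$ and also $M$, since $V(C_i) \subseteq V(G) \setminus M$ and $C_i$ is a component of $G(\neg y)$'s... more simply, $C_i \subseteq V(\neg x)$ so it avoids $M$).

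Next I would argue that $P_1 \cup P_2$ is a hole. The two paths share exactly the endpoints $y$ and $v$. Their interiors lie in $V(C_1)$ and $V(C_2)$ respectively, and since $C_1, C_2$ are distinct components of $G(\neg x)$, there are no edges between $V(C_1)$ and $V(C_2)$ in $G$; in particular an interior vertex of $P_1$ is neither equal nor adjacent to an interior vertex of $P_2$. Each $P_i$ is induced and has length at least two, so the cycle formed by gluing $P_1$ and $P_2$ at $y$ and $v$ has length at least four and is chordless, hence is a hole $H$. Finally, all internal vertices of $H$ lie in $V(C_1) \cup V(C_2) \subseteq V(G) \setminus M$, so $V(H) \cap M = \{y\}$, and therefore $H$ is a hole in $G - (M \setminus \{y\})$. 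This contradicts the requirement that $G - (M \setminus \{y\})$ is chordal, completing the proof.

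The step I expect to be mildly delicate rather than truly hard is verifying that the glued cycle is genuinely chordless and of length $\geq 4$ — one must check not just that interiors of $P_1$ and $P_2$ are non-adjacent, but also that no interior vertex of $P_1$ is adjacent to $v$ or $y$ beyond the path edges (this is exactly inducedness of $P_i$, already guaranteed by Observation~\ref{observation:walk:implies:inducedpath}), and that the two paths are not so short that the union degenerates (ruled out since $yv \notin E(G)$ forces each $P_i$ to have an internal vertex). Everything else is bookkeeping; the real content is the single observation that distinct components of $G(\neg x)$ have no edges between them, which is precisely why $x$ appears in Definition~\ref{def:cnegxy}.
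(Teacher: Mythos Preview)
Your proposal is correct and follows essentially the same argument as the paper: assume a common vertex $v$ in $(N_{\Gc}(C_1) \cap N_{\Gc}(C_2)) \setminus N_G(y)$, build induced $y$--$v$ paths through each $C_i$, glue them into a hole using that distinct components of $G(\neg x)$ have no edges between them, and observe that this hole meets $M$ only in $y$, contradicting chordality of $G-(M\setminus\{y\})$. Your write-up is simply more explicit about the bookkeeping (the shortcutting via Observation~\ref{observation:walk:implies:inducedpath}, the length-$\geq 2$ check from $yv\notin E(G)$, and $v\notin M$ since $v\in N_{\Gc}(C_i)\subseteq V(\Gc)$) than the paper's two-paragraph sketch preceding the observation.
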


We are now ready to state our reduction rule.

\begin{reduction}\label{red:components}
Let~$(G,k,M,E^h)$ be exhaustively reduced under Rule~\ref{red:cliques}. First, exhaustively apply Reduction Template~\ref{template:toughness} to sets 
$$S_{\neg x,y} := M \cup \bigcup \{ N_G(C) \mid \mbox{$C$ is a component of~$G(\neg x)$ that contains a neighbor of~$y$}\},$$
for all~$x,y \in M$. Second, for every $x \in M$ do the following marking process.
\begin{enumerate}
	\item For all~$y \in M$ with~$xy \not \in E(G)$, mark up to~$(k+1) ((k+2) \cdot \cliquebound + |M|)^2 + 1$ components~$C$ of~$G(\neg x)$ that contain a neighbor of~$y$.\label{mark:redcomponents:nonadjacent}
	\item For each~$y \in M$ with~$xy \in E(G)$, mark up to~$k+1$ components~$C$ of~$\C_{\neg x}(y)$.\label{mark:redcomponents:relevantpaths}
	\item For each~$y_1, y_2 \in M$ with~$y_1 y_2 \not \in E(G)$, mark up to~$k+1$ components~$C$ of~$G(\neg x)$ for which~$V(C) \cap N_G(y_1) \neq \emptyset$ and~$V(C) \cap N_G(y_2) \neq \emptyset$.\label{mark:redcomponents:connectnonadj}
\end{enumerate}
If some component~$C$ of~$G(\neg x)$ is unmarked when the process finishes, then make~$x$ adjacent to all vertices of~$C$.
\end{reduction}

If~$G(\neg x)$ has at least~$(k+1)((k+2) \cdot \cliquebound + |M|)^2+(k+1)(|M| + |M|^2) + 2$ components, then Rule~\ref{red:components} will be triggered to make~$x$ adjacent to all vertices of one such component. Hence exhaustive application of the rule bounds the number of components of~$G(\neg x)$ by~$\Oh(k^5|M|^6)$.

\begin{lemma} \label{lem:nonneighbor:comps}
If Rule~\ref{red:components} reduces an instance~$(G,k,M,E^h)$ of \AChVD to $(G',k,M,E^h)$ by making~$x$ adjacent to all vertices of~$C$, then the instances are equivalent.
\end{lemma}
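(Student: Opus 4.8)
The plan is to prove the equivalence by showing that every solution to one instance is a solution to the other. Write $G' = (V(G), E(G) \cup \{xv : v \in V(C)\})$; since $C$ is a connected component of $G(\neg x)$, these are all genuinely new edges and they are all incident to $x \in M$. Consequently, for every $X \subseteq V(G)$ with $x \in X$ we have $G - X = G' - X$ and the annotation $E^h$ is untouched, so such an $X$ is a solution to one instance if and only if it is a solution to the other. It therefore remains to treat a candidate $X$ with $x \notin X$ and $|X| \le k$: the $E^h$-constraints and the budget carry over verbatim (pairs of $E^h$ live inside $M$, while $x$ gains neighbours only outside $M$), so the whole lemma reduces to the claim that $G - X$ is chordal if and only if $G' - X$ is chordal. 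Throughout I would use the following structural facts about $C$: by maximality of the component, $N_{\Gc}(C) \subseteq N_G(x)$; by Proposition~\ref{prop:NC-clique}, $N_{\Gc}(C)$ is a clique; hence every neighbour of $C$ lies in $M \cup N_G(x)$; and, as in any \AChVD instance, every hole of a subgraph of $G$ or of $G'$ contains at least two vertices of $M$, because $G-M$ and all $G-(M\setminus\{v\})$ (and their $G'$-analogues) are chordal.

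For the direction ``$G' - X$ chordal $\Rightarrow$ $G - X$ chordal'', suppose for contradiction that $H$ is a hole in $G - X$. Since $G \subseteq G'$, the cycle $H$ can fail to be a hole of $G' - X$ only if a new edge $xv$ with $v \in V(C)$ becomes a chord of $H$; hence $x \in V(H)$ and $V(H) \cap V(C) \neq \emptyset$. The plan is then to reroute: consider the maximal subpaths of $H$ whose vertices lie in $V(C)$, and replace each of them by a subpath through a marked connected component of $G(\neg x)$ disjoint from $X$. Using that $H$ is induced and that every neighbour of $C$ lies in $M \cup N_{\Gc}(C)$ with $N_{\Gc}(C)$ a clique, each $C$-subpath realises one of exactly three ``connection types'' between modulator vertices --- a detour via a single $y \in M$ with $xy \notin E(G)$, a detour inside $\C_{\neg x}(y)$ for some $y$ with $xy \in E(G)$, or a link between a neighbour of $y_1$ and a neighbour of $y_2$ for non-adjacent $y_1,y_2 \in M$ --- and these are precisely the three marking steps of Reduction~\ref{red:components}. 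Since $C$ is unmarked, each type admits enough marked representatives that $X$ cannot block all of them, and the preliminary application of Reduction Template~\ref{template:toughness} to the sets $S_{\neg x, y}$ guarantees that a marked representative not merely touches the relevant modulator vertices but actually carries the required connecting subpath once $X$ is removed. After shortcutting the rerouted closed walk to an induced cycle, the result avoids $V(C)$ altogether, so no new edge has both endpoints on it, and it is therefore still a hole in $G' - X$, contradicting chordality of $G' - X$.

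For the direction ``$G - X$ chordal $\Rightarrow$ $G' - X$ chordal'', suppose $H'$ is a hole in $G' - X$. As $G - X$ is chordal, $H'$ uses at least one new edge, so $x \in V(H')$ and at least one of the two neighbours of $x$ on $H'$ lies in $V(C)$; since $H'$ is induced, its interior avoids $N_{G'}[x] = N_G[x] \cup V(C)$. I would split into two cases. If both $H'$-neighbours $p,q$ of $x$ lie in $V(C)$, then the $p$--$q$ subpath of $H'$ leaves $V(C)$ at once, and its first and last interior vertices lie in $N_G(C)\setminus N_G(x) \subseteq M$; replacing $x$ by a $p$--$q$ path inside $C$ (which exists since $C$ is connected in $G - x$), or by a path through a surviving marked component when $X$ blocks every such path inside $C$, and shortcutting, yields a hole in $G - X$. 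If instead exactly one $H'$-neighbour $p$ of $x$ lies in $V(C)$ while the other neighbour $w$ satisfies $w \in N_G(x)$, then one replaces the new edge $xp$ by the path consisting of $x$, a vertex of $N_{\Gc}(C) \subseteq N_G(x)$, and an induced path inside $C \cup N_{\Gc}(C)$ ending at $p$ (again rerouting through a surviving marked component if $X$ blocks $C$), and shortcuts to a hole in $G - X$. Either way we contradict chordality of $G - X$.

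The main obstacle in both directions is the same, and is where the bulk of the work lies: turning ``$C$ is unmarked'' into a concrete substitution of the $V(C)$-portion of a hole by a portion routed through a marked component that survives the deletion of $X$, and then verifying that the resulting object is a genuine induced cycle on at least four vertices. This needs (i) the Reduction-Template step, so that sufficiently many marked components actually realise --- rather than merely touch --- the required connection, (ii) Observation~\ref{obs:forbidden} together with the clique structure of $N_{\Gc}(C)$, to rule out spurious chords between the modulator endpoints and the rerouted segments (and, when $H$ winds through $C$ more than once, to bound the number of such segments), and (iii) a separate treatment of the corner cases in which the offending hole through $x$ and $C$ has length exactly four, where shortcutting is not available.
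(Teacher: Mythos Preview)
Your backward direction (``$G'-X$ chordal $\Rightarrow$ $G-X$ chordal'') is roughly right, though you are missing the simplifying observation that \emph{every} vertex of $V(H)\setminus N_H[x]$ lies in $C$: if some $v\in V(H)\setminus (N_H[x]\cup V(C))$ existed, then $xv\notin E(G')$ and the long arc of $H$ between the $H$-neighbours of $v$ is already a walk in $G'-X$ avoiding $N_{G'}[v]$, so Observation~\ref{obs:find-hole} gives a hole in $G'-X$ immediately. With that claim in hand there is exactly one $C$-segment, and the relevant cases are only two (both $H$-neighbours of $x$ in $M$, handled by marking step~\eqref{mark:redcomponents:connectnonadj}; one in $M$ and one in $N_{\Gc}(C)$, handled by step~\eqref{mark:redcomponents:relevantpaths}); your ``connection type~1'' and the template do not arise here.

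The real gap is in your forward direction (``$G-X$ chordal $\Rightarrow$ $G'-X$ chordal''). You try to turn the $G'$-hole $H'$ into a $G$-hole by replacing the new edge(s) $xp$ (and possibly $xq$) with paths through $C$, falling back to ``a surviving marked component'' if $X$ blocks $C$. This cannot work: the vertices $p,q$ you want to reconnect live in $C$, and no other component of $G(\neg x)$ contains them, so rerouting through a different component does not reattach to $p,q$; and even when a $p$--$q$ path inside $C\setminus X$ exists, its vertices may be adjacent to the $M$-vertex following $p$ on $H'$, so shortcutting need not leave a cycle of length $\geq 4$.

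What the paper does instead is to abandon $H'$ almost entirely. From the observation you already made---that the successor $y$ of the $C$-neighbour $v$ of $x$ on $H'$ must lie in $M$ with $xy\notin E(G)$---one uses marking step~\eqref{mark:redcomponents:nonadjacent}: since the unmarked $C$ contains a neighbour of $y$, there are $(k+1)((k+2)\cdot\omega_{k,|M|}+|M|)^2+1$ other components of $G(\neg x)$ containing neighbours of $y$. The exhaustive application of Reduction Template~\ref{template:toughness} to $S_{\neg x,y}$ (this is where it is used, not in the other direction) forces the union of their $\Gc$-neighbourhoods to be large; together with the clique bound this yields an independent set of size $k+2$ there, and hence two marked components $C_1,C_2$ with neighbours $u_1,u_2\in N_{\Gc}(C_i)\setminus X$ that are mutually nonadjacent. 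Each $C_i\cup\{u_i\}$ then carries an induced $xy$-path, and the concatenation of these two paths is a fresh hole in $G-X$ through $x$ and $y$. The hole does not reuse $p$, $q$, or indeed any vertex of $H'$ beyond $x$ and $y$; this is the missing idea.
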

\begin{proof}
As adding edges to a nonchordal graph can make it chordal, and adding edges to a chordal graph can make it nonchordal, both directions of the equivalence are nontrivial.

($\Rightarrow$) Assume that~$(G,k,M,E^h)$ has a solution~$X$ of size at most~$k$, and assume for a contradiction that~$X$ is not a solution for the reduced instance. Since the instances have the same forced pairs, graph~$G - X$ is chordal but~$G' - X$ is not. Consider a hole~$H$ in~$G' - X$. Since~$H$ is not a hole in~$G - X$, while adding edges to a graph cannot make an existing cycle chordless, one of the edges that was added between~$x$ and~$C$ lies on~$H$. Let~$xv \in E(H)$ with~$v \in V(C)$ be such an edge. Orient~$H$ such that~$x$ is the predecessor of~$v$ on~$H$, and consider the successor~$y$ of~$v$. Since~$H$ is a hole,~$xy \not \in E(G')$ and therefore~$y \not \in V(C)$ since~$G'$ has all edges between~$x$ and~$V(C)$. Moreover,~$y \not \in N_G(C) \setminus M$ since~$N_G(C) \setminus M \subseteq N_G(x)$ as~$C$ is a component of~$G(\neg x)$. Hence~$y \in M$. Since~$C$ is an unmarked component of~$G(\neg x)$ that contains~$v \in N_G(y)$, in Point~(\ref{mark:redcomponents:nonadjacent}) we marked~$(k+1) ((k+2) \cdot \cliquebound + |M|)^2 + 1$ other components of~$G(\neg x)$ that contain a neighbor of~$y$. Consider the set~$\C'$ of components of~$G(\neg x)$ that contain a $G$-neighbor of~$y$ and define~$S := \bigcup _{C \in \C'} N_{\Gc}(C)$.

\begin{claim}\label{claim:addedges:independentset}
There is a subset~$S' \subseteq S$ of size~$k+2$ that is independent in~$G$.
\end{claim}
\begin{claimproof}
Proposition~\ref{proposition:independencenr:vs:cliquenr} implies~$\alpha(G[S]) \geq |S| / \omega(G[S])$. By the precondition that Rule~\ref{red:cliques} cannot be applied, we know~$\omega(G[S]) \leq \cliquebound$. It follows that if~$|S| \geq (k+2) \cdot \cliquebound$, an independent subset of size~$k+2$ exists.

Assume for a contradiction that~$|S| < (k+2) \cdot \cliquebound$. Let~$S^* := S \cup M$ and observe that each component of~$\C'$ occurs as a connected component of~$G - S^*$. Hence~$G - S^*$ has at least~$|\C'| \geq (k+1) ((k+2) \cdot \cliquebound + |M|)^2 + 1 > (k+1)|S^*|^2$ connected components. But this implies that applying Template~\ref{template:toughness} with the separator~$S^* = S_{\neg x, y}$ will remove an irrelevant component, contradicting the precondition that this template is exhaustively applied for~$S_{\neg x, y}$.
\end{claimproof}

\begin{claim}\label{claim:addedges:tworemain}
There are two distinct marked components~$C_1, C_2 \in \C'$ that avoid~$X$, along with distinct vertices~$u_1 \in N_G(C_1) \setminus (M \cup X)$ and~$u_2 \in N_G(C_2) \setminus (M \cup X)$ such that there are no edges in~$G$ between~$V(C_1) \cup \{u_1\}$ and~$V(C_2) \cup \{u_2\}$.
\end{claim}
\begin{claimproof}
Consider an independent subset~$S'$ of~$S$ of size~$k+2$, whose existence is guaranteed by Claim~\ref{claim:addedges:independentset}. By definition of~$S$, for each~$v \in S'$ there exists a component~$C_v \in \C'$ whose $G$-neighborhood contains~$v$. For~$v \neq v' \in S'$, the corresponding components~$C_v$ and~$C_{v'}$ are distinct. This follows from the fact that~$vv' \not \in E(G)$ as~$S'$ is independent in~$G$, together with the fact that for each component~$C$ of~$G(\neg x)$ the set~$N_{\Gc}(C)$ is a clique by Proposition~\ref{prop:NC-clique}. Moreover,~$v$ is not adjacent to any vertex in~$C_{v'}$, as otherwise~$v \in N_{\Gc}(C_{v'})$ would be adjacent to~$v' \in N_{\Gc}(C_{v'})$ as the neighborhood is a clique. Similarly,~$v'$ is not adjacent to any vertex in~$N_{\Gc}(C_v)$.

Consider the set~$S'$ of size~$k+2$. For each vertex~$u \in X$, discard~$v$ from~$S'$ if~$v=u$ or if~$u \in C_v \cap X$. As each vertex of~$X$ discards at most one vertex, two vertices~$u_1, u_2 \in S'$ exist that are not discarded. Let~$C_1$ and~$C_2$ in~$\C'$ be the corresponding components of~$G(\neg x)$ that contain a~$G$-neighbor of~$y$. By the discarding process we know that~$X$ contains no vertex of~$\{u_1, u_2\} \cup V(C_1) \cup V(C_2)$. The argumentation above shows that no vertex of~$\{u_1\} \cup V(C_1)$ is adjacent in~$G$ to a vertex of~$\{u_2\} \cup V(C_2)$. 
\end{claimproof}

\widefigure{
\subfigure[Observation~\ref{obs:forbidden}.]{\label{fig:ReducingNonneighborComponents0}
\image{ReducingNonneighborComponents}
}
\subfigure[$(\Rightarrow)$, Lemma~\ref{lem:nonneighbor:comps}.]{\label{fig:ReducingNonneighborComponents1}
\image{ReducingNonneighborComponents1}
}
\subfigure[$(\Leftarrow)$,~$|V(H) \cap M| = 2$.]{\label{fig:ReducingNonneighborComponents2}
\image{ReducingNonneighborComponents2}
}
\subfigure[$(\Leftarrow)$,~$|V(H)| \cap M| = 1$.]{\label{fig:ReducingNonneighborComponents3}
\image{ReducingNonneighborComponents3}
}
}{
\caption{Illustrations for Section~\ref{s:components}. All figures show an instance of \AChVD with a graph~$G$ and modulator~$M$. Gray boxes highlight components of~$G(\neg x)$. Solid lines indicate edges in the graph. Red edges are used to highlight a hole. For the last three figures, dotted lines represent edges between~$v$ and~$C$ added by the reduction rule to obtain~$G'$. \ref{fig:ReducingNonneighborComponents0} Illustration of Observation~\ref{obs:forbidden}. Components~$C_1$ and~$C_2$ of~$G(\neg x)$ contain neighbors~$a$ and~$c$ of~$y$. The nonneighbor~$b$ of~$y$ that belongs to~$N_{\Gc}(C_1) \cap N_{\Gc}(C_2)$ implies the existence of the red hole using only one vertex from~$M$. \ref{fig:ReducingNonneighborComponents1} If adding edges between~$x$ and~$C$ creates a hole in~$G' - X$ (in red) that was not in~$G - X$, then a hole in~$G - X$ can be found through the marked components. \ref{fig:ReducingNonneighborComponents2} Case~1 of Lemma~\ref{lem:nonneighbor:comps}. If~$G' - X$ is chordal but~$G - X$ contains a hole through~$x$ (in red), then another hole can be found in~$G' - X$ through marked components. \ref{fig:ReducingNonneighborComponents3} Case~2 of Lemma~\ref{lem:nonneighbor:comps}.} \label{fig:ReducingNonneighborComponents}
}

Claim~\ref{claim:addedges:tworemain} allows us to derive a contradiction. Consider the marked components~$C_1, C_2$ with the corresponding vertices~$u_1$ and~$u_2$, whose existence is guaranteed by that claim. Since each~$C_i$ contains a neighbor of~$y$, while each~$u_i$ is in the $G$-neighborhood of a component of~$G(\neg x)$ and is therefore adjacent to~$x$, it follows that each~$C_i \cup \{u_i\}$ contains the interior vertices of an induced~$xy$-path in~$G$. As the claim guarantees that these paths are vertex-disjoint, not adjacent to each other, and avoid~$X$, their concatenation forms a hole in~$G-X$ (see Fig.~\ref{fig:ReducingNonneighborComponents1}). The hole avoids~$C$, since its interior uses marked components and~$C$ is unmarked. But then this is also a hole in~$G'-X$, a contradiction. This concludes the proof of the forward direction of the equivalence.

($\Leftarrow$) Assume that~$G'-X$ is chordal but~$G-X$ is not, and let~$H$ be a hole in~$G-X$. Since~$G-X$ is obtained from~$G'-X$ by removing all edges between~$v$ and~$V(C) \setminus X$, hole~$H$ goes through~$x$ and the edges from~$x$ to~$C$ form chords on~$H$ in~$G'$. Consider~$N_H[x]$ containing~$x$ and its two neighbors on~$H$. As~$H$ is a hole in~$G-X$, for all~$v \in V(H) \setminus N_H[x]$ we have~$xv \not \in E(G)$. 

\begin{claim} \label{claim:allbutneighbors:in:c}
All vertices of~$V(H) \setminus N_H[x]$ belong to~$C$.
\end{claim}
\begin{claimproof}
If~$v \in V(H) \setminus (N_H[x] \cup V(C))$, then~$xv \not \in E(G)$ implies~$xv \not \in E(G')$. The subpath of~$H$ connecting the neighbors of~$v$ on~$H$ forms a walk connecting~$N_{G'}(v)$ in~$G'-X$ whose interior avoids~$N_{G'}[v]$, implying by Observation~\ref{obs:find-hole} that~$G' - X$ has a hole. 
\end{claimproof}

We do a case distinction on where the neighbors of~$x$ on~$H$ appear in the graph.

\noindent\textbf{Case 1: both neighbors of $x$ on $H$ lie in $M$.} Denote these two neighbors by~$y_1, y_2$. As~$H$ is a hole in~$G-X$ we have~$y_1 y_2 \not \in E(G)$, and as~$G'$ and~$G$ differ only in edges between~$x$ and~$V(C)$ we have~$y_1 y_2 \not \in E(G')$. By Claim~\ref{claim:allbutneighbors:in:c}, all vertices of~$V(H) \setminus \{y_1, y_2, x\}$ belong to~$C$, showing that both~$y_1$ and~$y_2$ have a $G$-neighbor in~$C$. Hence~$C$ was a candidate for marking in Point~(\ref{mark:redcomponents:connectnonadj}), and as~$C$ was not marked there is a marked component~$C'$ of~$G(\neg x)$ containing a $G$-neighbor of~$y_1$ as well as a neighbor of~$y_2$, such that~$X$ avoids~$C'$. There is an induced path~$P$ connecting~$y_1$ to~$y_2$ such that the interior of~$P$ is contained in~$C'$. As the interior of~$P$ belongs to a component of~$G(\neg x)$, the interior is not adjacent to~$x$ in~$G$. The concatenation of~$P$ with~$(y_1, x, y_2)$ is therefore a hole in~$G - X$ (see Fig.~\ref{fig:ReducingNonneighborComponents2}). Since no vertex belongs to~$C$, this is also a hole in~$G' - X$; a contradiction.

\noindent\textbf{Case 2: one neighbor of $x$ on $H$ lies in $M$.} Let~$u_x$ and~$y$ be the two neighbors of~$x$ on~$H$, with~$y \in M$ and~$u_x \in V(\Gc)$. Let~$v$ be the neighbor of~$u_x$ on~$H$ that is not~$x$, and let~$u_y$ be the neighbor of~$y$ that is not~$x$. Since the hole contains~$(y,x,u_x)$ we know~$yu_x\not \in E(G)$. By Claim~\ref{claim:allbutneighbors:in:c} we have~$v \in V(C)$ and therefore~$u_x \in N_{\Gc}(C)$. Claim~\ref{claim:allbutneighbors:in:c} also ensures~$u_y \in V(C)$. 

So~$C$ is a component of~$G(\neg x)$ that contains a neighbor~$u_y$ of~$y$ and the $\Gc$-neighborhood of~$C$ contains a nonneighbor~$u_x$ of~$y$. It follows that~$C \in \C_{\neg x}(y)$ and hence $C$ was eligible for marking in Point~(\ref{mark:redcomponents:relevantpaths}), but was not marked because some other~$k + 1$ components of~$G(\neg x)$ were marked. Consider the set~$\C_M$ of marked components, and discard all components $C$ that contain a vertex of~$X$ either in $C$ or in $N_{\Gc}(C) \setminus N_G(y)$.
By Observation~\ref{obs:forbidden}, every vertex of~$X$ discards at most one
component of~$\C_M$, implying that some marked component~$C' \in \C_M$ is not discarded and~$C'$ contains a $G$-neighbor of~$y$.
Let~$u'_x$ be a vertex of~$N_{\Gc}(C') \setminus (N_G(y) \cup X)$, which exists by the discarding process, and note that this implies~$u'_x x \in E(G)$. There is an induced path from~$y$ to~$u'_x$ whose interior belongs to~$C'$, and therefore the path avoids~$X$. Since the interior belongs to~$C'$, which is a component of~$G(\neg x)$, the interior is not adjacent to~$x$. This path forms a hole in~$G - X$ together with~$x$ (see Fig.~\ref{fig:ReducingNonneighborComponents3}), and as the hole avoids~$C$ it is also a hole in~$G' - X$; a contradiction.

\noindent\textbf{Case 3: none of the neighbors of $x$ on $H$ lie in $M$.} This case is the easiest. Claim~\ref{claim:allbutneighbors:in:c} ensures that the vertices of~$H$ that are not~$x$ or its neighbors, belong to~$C \subseteq \Gc$. If the neighbors of~$x$ on~$H$ do not lie in~$M$, then they also lie in~$\Gc$ so that~$x$ is the only vertex from~$H$ that belongs to~$M$. But then~$G - (M \setminus \{x\})$ is not chordal. So~$(G,k,M,E^h)$ is not a valid instance of \AChVD; a contradiction.
\end{proof}

\section{Reducing a single component}\label{s:single-component}
\newcommand{\qtop}{q^\uparrow}
\newcommand{\qbot}{q^\downarrow}
\newcommand{\utop}{u^\uparrow}
\newcommand{\ubot}{u^\downarrow}
\newcommand{\rtop}{r^\uparrow}
\newcommand{\rbot}{r^\downarrow}
\newcommand{\wtop}{w^\uparrow}
\newcommand{\wbot}{w^\downarrow}
\newcommand{\Ptop}{P^\uparrow}
\newcommand{\Pbot}{P^\downarrow}
\newcommand{\sstop}{s^\uparrow}
\newcommand{\ssbot}{s^\downarrow}

In this section we assume that we are working
with an \AChVD instance $(G,k,M,E^h)$ for which none of the so-far defined reduction rules apply.
Recall that we denote $\Gc = G-M$.
We fix some clique tree $(T,\beta)$ of $\Gc$.
We mark a set $Q_0 \subseteq V(T)$ of important bags as follows.
\begin{enumerate}[1.]
\item For every pair $xy \in \binom{M}{2} \setminus E(G)$,
for every maximal clique of $G(x,y)$ we mark a node of $T$ that contains
this clique in its bag.
%Since Reduction~\ref{red:common-neighbours}
%is not applicable and $G(x,y)$ is a chordal graph, we can mark at most
%$k+1$ nodes of $T$ for every pair $xy$.
\item For every $x \in M$ and every connected component $A$ of
$G(\neg x)$, we mark a node of $T$ whose bag contains $N_{\Gc}(A)$;
recall that by Proposition~\ref{prop:NC-clique} the set
$N_{\Gc}(A)$ induces a clique in $\Gc$.
%Since Reduction~\ref{red:components} is inapplicable, 
%we mark at most TODO nodes of $T$ for ever $x \in M$.
\end{enumerate}
Let $Q$ be the set $Q_0$ together with the root of $T$
and all lowest common ancestors
in $T$ of every pair of nodes from $Q_0$. Standard arguments
show that $|Q| \leq 1+2|Q_0|$. We define $S_Q = \bigcup_{s \in Q} \beta(s)$.

\begin{reduction} \label{red:reduce:comp:wrt:q}
Apply Reduction Template~\ref{template:toughness} with the separator~$S_Q \cup M$.
\end{reduction}

\begin{proposition} \label{prop:count:components:after:rules}
If Rules~\ref{red:common-neighbours},~\ref{red:xy-good},~\ref{red:cliques},~\ref{red:components}, and~\ref{red:reduce:comp:wrt:q} have been applied exhaustively, then~$|S_Q \cup M| \in \Oh(k^6 |M|^{10})$ and~$G - (S_Q \cup M) = \Gc - S_Q$ has~$\Oh(k^{13} |M|^{20})$ connected components.%~$\Oh(k^{15} |M|^{18})$ connected components.
\end{proposition}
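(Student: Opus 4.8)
The plan is a direct counting argument in three stages: first bound $|Q_0|$; then deduce the bound on $|S_Q \cup M|$ from the fact that every bag of the clique tree of $\Gc$ is a clique of $\Gc$ and hence has size at most $\omega(\Gc) \le \cliquebound$; and finally bound the number of connected components of $\Gc - S_Q = G - (S_Q \cup M)$ using that Reduction~\ref{red:reduce:comp:wrt:q} is no longer applicable.

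For the first type of marked nodes in $Q_0$, fix a pair $xy \in \binom{M}{2} \setminus E(G)$. The graph $G(x,y)$ is an induced subgraph of the chordal graph $\Gc$, hence chordal; since Reduction~\ref{red:common-neighbours} does not apply we have $\alpha(G(x,y)) \le k+1$, and since Reduction~\ref{red:cliques} does not apply we have $\omega(G(x,y)) \le \omega(\Gc) \le \cliquebound$. Proposition~\ref{proposition:independencenr:vs:cliquenr} then gives $|V(G(x,y))| \le \alpha(G(x,y)) \cdot \omega(G(x,y)) \le (k+1)\cliquebound = \Oh(k^2 |M|^3)$. As an $n$-vertex chordal graph has at most $n$ maximal cliques, the pair $xy$ contributes $\Oh(k^2 |M|^3)$ marked nodes, so over all $\binom{|M|}{2}$ pairs we mark $\Oh(k^2 |M|^5)$ nodes of the first type. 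For the second type, exhaustive application of Reduction~\ref{red:components} bounds the number of connected components of $G(\neg x)$ by $\Oh(k^5 |M|^6)$ for every $x \in M$ (as noted after that rule), and we mark one node per component, contributing $\Oh(k^5 |M|^7)$ nodes. Hence $|Q_0| = \Oh(k^5 |M|^7)$ and $|Q| \le 1 + 2|Q_0| = \Oh(k^5 |M|^7)$. Since each $\beta(s)$ with $s \in Q$ is a clique of $\Gc$ and $\omega(\Gc) \le \cliquebound = \Oh(k |M|^3)$, we obtain $|S_Q| \le |Q| \cdot \cliquebound = \Oh(k^6 |M|^{10})$ and therefore $|S_Q \cup M| \le |S_Q| + |M| = \Oh(k^6 |M|^{10})$.

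For the last part, Reduction~\ref{red:reduce:comp:wrt:q} runs Reduction Template~\ref{template:toughness} with the separator $S_Q \cup M$, which marks at most $k+2$ components of $G - (S_Q \cup M)$ for each pair in $\binom{S_Q \cup M}{2}$ and deletes all unmarked components; hence, as this rule is not applicable, the number of connected components of $\Gc - S_Q = G - (S_Q \cup M)$ is at most $(k+2)\binom{|S_Q \cup M|}{2} = \Oh(k) \cdot \Oh(k^{12} |M|^{20}) = \Oh(k^{13} |M|^{20})$, as claimed. (Reduction~\ref{red:xy-good} plays no direct role in this count; it is listed only to state that the instance is fully reduced.)

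I expect the one genuinely delicate point to be the bound on the number of maximal cliques of $G(x,y)$: it requires combining the non-applicability of two different reduction rules (Reductions~\ref{red:common-neighbours} and~\ref{red:cliques}) with Proposition~\ref{proposition:independencenr:vs:cliquenr} and the elementary fact that an $n$-vertex chordal graph has at most $n$ maximal cliques. The remaining steps are bookkeeping, the only subtlety being that when the rules ``have been applied exhaustively'' the set $S_Q$ — together with the underlying fixed clique tree and the choices of marked nodes — is the one associated with the final instance, so that the stated bounds refer to this $S_Q$.
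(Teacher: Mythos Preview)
Your proof is correct and follows essentially the same approach as the paper's: bound $|Q_0|$ via the two marking steps using Proposition~\ref{proposition:independencenr:vs:cliquenr} together with the inapplicability of Reductions~\ref{red:common-neighbours},~\ref{red:cliques}, and~\ref{red:components}; multiply by the clique-size bound $\cliquebound$ to get $|S_Q|$; and finish with the $(k+2)\binom{|S_Q \cup M|}{2}$ bound coming from Template~\ref{template:toughness}. Your side remarks about Reduction~\ref{red:xy-good} playing no direct role and about $S_Q$ referring to the final instance are accurate and do not diverge from the paper's reasoning.
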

\begin{proof}
We first bound the size of~$Q_0$ and~$Q$. For every pair $xy \in \binom{M}{2} \setminus E(G)$, we add a bag to~$Q_0$ in the first step for each maximal clique of~$G(x,y)$. Since Rule~\ref{red:common-neighbours} is not applicable and~$xy \not \in E(G)$, we have~$\alpha(G(x,y)) < k + 2$. Proposition~\ref{proposition:independencenr:vs:cliquenr} shows that~$|V(x,y)| \leq \alpha(G(x,y)) \cdot \omega(G(x,y))$, and since Rule~\ref{red:cliques} is not applicable we know~$\omega(G(x,y)) \leq \omega(\Gc) \leq \cliquebound$. Hence~$|V(x,y)| \leq (k+2) \cdot \cliquebound$. Since the number of maximal cliques in the chordal graph~$G(x,y)$ does not exceed its order, for each nonadjacent pair~$x,y$ in~$M$ we add at most~$|V(x,y)|$ bags to~$Q_0$. Hence the first step contributes at most~$|M|^2(k+2) \cdot \cliquebound \in \Oh(k^2|M|^5)$ bags to~$|Q_0|$.

In the second step, we add a bag to~$Q_0$ for each connected component of~$G(\neg x)$ for each~$x \in M$. Since Rule~\ref{red:components} is not applicable, we know that each~$x \in M$ yields~$\Oh(k^5 |M|^6)$ connected components in~$G(\neg x)$. The contribution of the second step therefore dominates the size of~$Q_0$ and we have~$|Q_0| \in \Oh(k^5 |M|^7)$, which implies~$|Q| \in \Oh(k^5 |M|^7)$. % \todo{I see $k^5 |M|^7$ here. Am I right? Bart: Yes. (The bound in the previous paragraph also had $k$ and $M$ mixed up.)} 

Since each bag is a clique in~$\Gc$, it contains~$\cliquebound \in \Oh(k |M|^3)$ vertices since Rule~\ref{red:cliques} is exhaustively applied. It follows that~$|S_Q| \leq |Q| \cdot \cliquebound \in \Oh(k^6 |M|^{10})$ vertices, which also forms a size bound for the entire separator~$S_Q \cup M$. As Rule~\ref{red:reduce:comp:wrt:q} is exhaustively applied, it follows that the number of components of~$\Gc - S_Q$ is~$\Oh(k \cdot |S_Q \cup M|^2) \in \Oh(k^{13} |M|^{20})$. %~$\Oh(k |S_Q \cup M|^2) \in \Oh(k^{15} |M|^{18})$.
\end{proof}

As Proposition~\ref{prop:count:components:after:rules} shows that~$|S_Q \cup M|$ and the number of components of~$G - (S_Q \cup M)$ is bounded polynomially in the parameter, it only remains to bound the size of each component. Our main focus in this section is therefore to bound the size of a single connected component $A$ of $\Gc - S_Q$.
Let $Q^A = \bigcup_{v \in A} \binv(v)$; since $\Gc[A]$ is connected,
$T[Q^A]$ is connected as well.
By the construction of $Q$, there exists at most two nodes $q \in N_T(Q_A)$ such that if $p$ is the unique neighbor of $q$ in $Q_A$ then
$\beta(p) \cap \beta(q) \neq \emptyset$.
%\todo{I don't think this is necessarily true if~$\Gc$ is disconnected; see TeX comment.} 
% If \Gc is disconnected, then there is freedom in how to merge the cliquetrees for different components together into a cliquetree for the entire graph. If you take components of \Gc in which no bags are marked, then you can attach them to Q^A wherever you like without triggering new markings from LCA-closure. All these attachments give neighborhoods in the clique tree, causing the number of neighbors to become larger than 2. One way to fix this would be to say: ``if \Gc is disconnected, then take a clique tree of the following type: starting from cliquetrees of the individual components, pick a first one and root it somewhere, then attach the other trees to the root of the first one.'' However this is not very elegant. If all we need is eq:comp:NA, however, then we don't have to do this and can suffice by updating the claim that N_T(Q_A) has only 2 nodes of T to saying, for example, that there are only two edges sticking out of Q^A that have nonempty adhesions.
Furthermore, one of these nodes $q$ is the parent
of the topmost vertex of $Q^A$, and we denote it by $\qtop$.
If the second one exists, we denote it by $\qbot$; otherwise,
we add a dummy node $\qbot$ with an empty bag as a child of one of the
leaves of $T[Q^A]$.
Consequently, we have that
\begin{equation}\label{eq:comp:NA}
N_G(A) \subseteq M \cup \beta(\qtop) \cup \beta(\qbot).
\end{equation}
Let $\Ptd$ be the unique path from $\qtop$ to $\qbot$ in the tree $T$. Refer to Figure~\ref{fig:SingleComponentStructure} for an illustration of these concepts.
\begin{figure}[t]
\begin{center}
\subfigure[$(T,\beta)$.]{\label{fig:SingleComponentStructure1}
\bigimage{SingleComponentStructure1}
}
\subfigure[$G$ with modulator~$M$.]{\label{fig:SingleComponentStructure2}
\image{SingleComponentStructure2}
}
\caption{Illustration of the structure of components~$A$ of~$\Gc - S_Q$ analyzed in Section~\ref{s:single-component}. \ref{fig:SingleComponentStructure1}~Clique tree~$(T,\beta)$ of the graph~$\Gc$. The marked bags~$Q$ are drawn in yellow; note their closure under taking least common ancestors. The bags~$Q^A$ that contain a vertex of~$A$ are drawn gray. The subtree induced by~$Q^A$ has two neighbors in~$T$,~$\qtop$ and~$\qbot$. The unique simple~$\qtop \qbot$-path~$\Ptd$ in~$T$ is highlighted with thick edges. \ref{fig:SingleComponentStructure2} Local structure in the corresponding graph~$G$ with modulator~$M$. The component~$A$ of~$\Gc-S_Q$ is visualized by a box. Its neighborhood in~$\Gc$ is a subset of the vertices in~$\beta(\qtop)$ and~$\beta(\qbot)$, both of which are cliques in~$\Gc$. These two cliques can overlap, as they do here in vertex~$u'$. By Observation~\ref{obs:comp:uniform}, all vertices in~$A$ have the same neighborhood in~$M$, which is visualized by a box. The thick edge between the two boxes represents all possible edges between the two sets. By Claim~\ref{cl:comp:permitted-holes}, any permitted hole~$H$ using a vertex of~$A$ uses~$A$ to form an induced path between some~$\utop \in \beta(\qtop)$ and~$\ubot \in \beta(\qbot)$. For any vertex~$v$ used on such a path, the subtree~$\binv(v)$ uses a vertex of~$\Pi$.} \label{fig:SingleComponentStructure}
\end{center}
\end{figure}
We start with the following simple observation.
\begin{observation}\label{obs:comp:uniform}
For every vertex $v \in A$ and two distinct neighbors $x \in N(v) \cap M$ and $u \in N(v)$ we have $xu \in E(G)$.
In particular, we have $N(v) \cap M = N(A) \cap M$ (i.e., every vertex in $A$ has exactly the same set of neighbors in $M$)
and $N(A) \cap M$ is a clique in $G$.
\end{observation}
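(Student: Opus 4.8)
The plan is to observe that every configuration forbidden by the first assertion would place the vertex $v$ inside the separator $S_Q$, contradicting $v \in A$, which by construction is a connected component of $\Gc - S_Q$. So I would suppose, towards a contradiction, that there are $v \in A$, $x \in N(v) \cap M$, and $u \in N(v) \setminus \{x\}$ with $xu \notin E(G)$, and split into two cases according to whether $u \in M$.

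If $u \notin M$, then $u \in V(\Gc)$ and, since $xu \notin E(G)$, the vertex $u$ lies in some connected component $A'$ of $G(\neg x)$. As $v$ is adjacent to $u$, lies in $V(\Gc)$, and is not in $V(G(\neg x))$ (because $v \in N(x)$), we get $v \in N_{\Gc}(A')$. But $N_{\Gc}(A')$ is a clique by Proposition~\ref{prop:NC-clique}, and the step of the marking of $Q_0$ that handles components of $G(\neg x)$ puts a bag containing $N_{\Gc}(A')$ into $Q_0 \subseteq Q$, so $N_{\Gc}(A') \subseteq S_Q$ and hence $v \in S_Q$, a contradiction. If instead $u \in M$, then $\{x,u\} \in \binom{M}{2} \setminus E(G)$ and $v \in (V(G) \setminus M) \cap N(x) \cap N(u) = V(x,u)$; the step of the marking of $Q_0$ that handles nonadjacent pairs from $M$ puts, for every maximal clique of the chordal graph $G(x,u)$, a containing bag into $Q$, and since every vertex of a graph lies in some maximal clique we conclude $V(x,u) \subseteq S_Q$, so again $v \in S_Q$, a contradiction. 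This proves the first assertion.

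For the ``in particular'' part, the inclusion $N(v) \cap M \subseteq N(A) \cap M$ for $v \in A$ is immediate; for the reverse, take $x \in N(A) \cap M$ witnessed by $xv' \in E(G)$ for some $v' \in A$, fix an arbitrary $v \in A$, and a walk $v' = w_0, w_1, \ldots, w_\ell = v$ inside the connected graph $\Gc[A]$. A one-line induction using the first assertion shows $x \in N(w_i)$ for all $i$: if $x \in N(w_i)$, then $w_{i+1} \in N(w_i) \setminus \{x\}$ (as $w_{i+1} \notin M$), so the first assertion applied to $w_i \in A$ gives $xw_{i+1} \in E(G)$. Hence all vertices of $A$ share the neighborhood $N(A) \cap M$ within $M$, and then for any distinct $x,u \in N(A) \cap M$ and a fixed $v \in A$ we have $x,u \in N(v) \cap M$, so the first assertion applied to $v$ yields $xu \in E(G)$; thus $N(A) \cap M$ is a clique in $G$. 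The only mildly subtle point is the case $u \in M$: one must remember that $A$ is a component of $\Gc - S_Q$ and not of any $G(\neg x)$, and recognize that $v \in V(x,u)$ is precisely the situation that the marking of maximal cliques of $G(x,u)$ was designed to absorb; the rest is routine bookkeeping about which bags are marked and about the definitions of $N_{\Gc}(\cdot)$ and $V(x,u)$.
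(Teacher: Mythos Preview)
Your proof is correct and follows essentially the same approach as the paper: a contradiction argument split into the two cases $u\in M$ and $u\notin M$, each showing that the relevant marking step of $Q_0$ forces $v\in S_Q$. The paper's proof is terser (it dispatches the ``in particular'' clause with the single phrase ``follows from the connectivity of $A$''), whereas you spell out the path-induction and the clique conclusion explicitly, but the underlying argument is identical.
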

\begin{proof}
First, assume the contrary of the first claim.
If $u \in M$, then $v \in V(x,u)$, but since $xu \notin E(G)$, $v$ belongs to $\bigcup_{q \in Q_0} \beta(q)$ by the first marking step. Hence~$v \in S_Q$ and cannot belong to a component~$A$ of~$\Gc - S_Q$; contradiction. 
If $ u \not \in M$, then $v \in N_{\Gc}(V(\neg x))$ since $u \in V(\neg x)$, and thus $v \in \bigcup_{q \in Q_0} \beta(q)$ by the second marking step.
The second claim follows from the connectivity of $A$.
\end{proof}

\subsection{Permitted holes through \texorpdfstring{$A$}{A}}

In this section we analyze permitted holes in $G$ that contain
some vertices from $A$.

\begin{claim}\label{cl:comp:permitted-holes}
Let $H$ be a permitted hole in $G$ that passes through $A$
and let $v \in V(H) \cap A$ be arbitrary.
Then:
\begin{enumerate}
\item both neighbors of $v$ on $H$ lie in $\Gc$;
\item $H$ contains a subpath between a vertex of $\beta(\qtop)$
and a vertex of $\beta(\qbot)$ that contains $v$ and whose internal
vertices all lie in $A$;
\item $\binv(v)$ contains at least one internal vertex of $\Ptd$.
\end{enumerate}
\end{claim}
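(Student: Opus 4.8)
My plan is to prove the three items in order, each feeding the next. Throughout I use that $\Gc=G-M$ is chordal, so every hole $H$ in $G$ satisfies $V(H)\cap M\neq\emptyset$ and in particular $V(H)\not\subseteq V(\Gc)$.

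For the first item I argue by contradiction from the uniformity of the $M$-neighbourhood of $A$. Write $w_1,w_2$ for the two neighbours of $v$ on $H$; since $H$ is a hole, $w_1\neq w_2$ and $w_1w_2\notin E(G)$ (the pair $w_1w_2$ sits at distance two on the induced cycle $H$). If some $w_i$ belonged to $M$, then $w_i\in N(v)\cap M$ and $w_{3-i}$ is a neighbour of $v$ distinct from $w_i$, so Observation~\ref{obs:comp:uniform} would force the chord $w_iw_{3-i}\in E(G)$, a contradiction. Hence both neighbours of $v$ on $H$ lie in $\Gc$; the same argument applies to every vertex of $V(H)\cap A$, which I use again below.

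For the second item I view $H$ cyclically, take the maximal set $R=\{a_1,\dots,a_r\}$, $r\geq 1$, of consecutive vertices of $H$ that all lie in $A$ and includes $v$, and let $u_1,u_2$ be the vertices of $H$ flanking $R$ (these exist as $R\neq V(H)$). Maximality gives $u_1,u_2\notin A$, and the first item, applied to $a_1$ and $a_r$, gives $u_1,u_2\in N_{\Gc}(A)$, hence $u_1,u_2\in\beta(\qtop)\cup\beta(\qbot)$ by~\eqref{eq:comp:NA}. Let $P'=(u_1,a_1,\dots,a_r,u_2)$ be the corresponding subpath of $H$; its interior lies in $A$ and contains $v$, and $V(P')\subseteq V(\Gc)$. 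Since $V(H)\not\subseteq V(\Gc)$, some vertex of $H$ lies outside $P'$, so deleting the consecutive vertices $a_1,\dots,a_r$ from the cycle $H$ leaves a $u_1u_2$-path with an internal vertex; this forces $u_1\neq u_2$ and that $u_1u_2$ is not an edge of $H$, hence $u_1u_2\notin E(G)$. As $\beta(\qtop)$ and $\beta(\qbot)$ are cliques while $u_1u_2\notin E(G)$, the two vertices cannot both lie in either bag, so after renaming we get $u_1\in\beta(\qtop)$ and $u_2\in\beta(\qbot)$; in particular $\qbot$ is not the dummy node, and $P'$ is the required subpath.

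For the third item I feed $P'$ into the clique-tree machinery. As $P'$ is an induced $u_1u_2$-path in the chordal graph $\Gc$ with $v$ as an internal vertex, Proposition~\ref{proposition:inducedpath:on:minimalpath} produces a node $r\in\binv(v)$ lying on $\MPtd(u_1,u_2)$. Since $u_1u_2\notin E(\Gc)$, the subtrees $\binv(u_1)\ni\qtop$ and $\binv(u_2)\ni\qbot$ are node-disjoint, so the unique minimal path $\MPtd(u_1,u_2)$ connecting them is a subpath of the path $\Ptd$ from $\qtop$ to $\qbot$; hence $r$ lies on $\Ptd$. Finally $r\in\binv(v)\subseteq\bigcup_{w\in A}\binv(w)=Q^A$, whereas $\qtop,\qbot\notin Q^A$, so $r$ is neither endpoint of $\Ptd$, i.e.\ $r$ is an internal vertex of $\Ptd$. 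I expect the bookkeeping in the second item to be the main obstacle — ruling out $u_1=u_2$, the possibility that $u_1u_2$ is an edge of $H$, and the dummy $\qbot$ — since each of these would force $V(H)\subseteq V(\Gc)$, impossible as $\Gc$ is chordal; items one and three are then essentially direct applications of Observation~\ref{obs:comp:uniform} and Proposition~\ref{proposition:inducedpath:on:minimalpath}.
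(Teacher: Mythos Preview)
Your proof is correct and follows essentially the same route as the paper: Observation~\ref{obs:comp:uniform} for item~1, the maximal $A$-subpath and~\eqref{eq:comp:NA} for item~2, and Proposition~\ref{proposition:inducedpath:on:minimalpath} for item~3. Your bookkeeping in item~2 is in fact slightly more careful than the paper's (you only use that $H$ meets $M$ once, whereas the paper invokes $|V(H)\cap M|\geq 2$), and your observation that $r\in Q^A$ while $\qtop,\qbot\notin Q^A$ is a clean way to see that $r$ is an internal node of~$\Ptd$.
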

\begin{claimproof}
The first claim follows from Observation~\ref{obs:comp:uniform}: if one of the neighbors of $v$ on $H$ lies in $M$, then
it is adjacent to the second neighbor of $v$ on $H$, a contradiction.

For the second claim, let $P$ be a maximal subpath of $H$ containing $v$
such that all internal vertices lie in $A$; recall that at least two vertices
of $H$ lie in $M$, so the endpoints of $P$ are distinct.
Furthermore, from the first claim, applied to all internal vertices of $P$,
we infer that the endpoints of $P$ lie in $\Gc$.
Since $N(A) \subseteq M \cup \beta(\qtop) \cup \beta(\qbot)$ by~\eqref{eq:comp:NA}, %\todo{Maybe $N(A) \subseteq M \cup \beta(\qtop) \cup \beta(\qbot)$ should be established explicitly above?}
they lie in $\beta(\qtop) \cup \beta(\qbot)$.
Since $H$ needs to contain at least two vertices of $M$, and $P$ does not
contain any such vertices but contains $v \in A$, the two endpoints of $P$
are nonadjacent, and thus one endpoint of $P$ lies in $\beta(\qtop)$
and one lies in $\beta(\qbot)$ as both sets are cliques.

For the last claim, let~$P'$ be a subpath of~$H$ between a vertex~$\utop \in \beta(\qtop)$ and~$\ubot \in \beta(\qbot)$ that contains~$v$, whose existence we just established. The minimal path~$\MPtd(\utop,\ubot)$ connecting~$\binv(\utop)$ and~$\binv(\ubot)$ in~$T$ is a subpath of~$\Pi$. Since~$P'$ is an induced path in~$\Gc$, Proposition~\ref{proposition:inducedpath:on:minimalpath} shows that~$\binv(v)$ contains a node of~$\MPtd(\utop,\ubot) \subseteq \Pi$. This must be an internal node of~$\Pi$, since~$v \in A$ does not occur in~$\beta(\qtop)$ or~$\beta(\qbot)$.
\end{claimproof}

\subsection{Important vertices in \texorpdfstring{$A$}{A}} \label{subsec:importantvertices}

We now define a set $Z$ of important vertices in $A$.
Intuitively, these are the only vertices that a ``reasonable'' solution
may delete in $A$.

We define by induction sets $Q_i \subseteq V(\Ptd)$ for $i \in \{0,1,2,\ldots\}$.
Let $Q_0 = \{\qtop, \qbot\}$. 
Given $Q_i$, we first define $Z_i = \bigcup_{q \in Q_i} \beta(q)$ and
then $Q_{i+1}$ as follows:
we start with $Q_{i+1} = Q_i$ and then, for every $u \in Z_i$
we add to $Q_{i+1}$ the node on $V(\Ptd) \cap \binv(u)$
that is closest to $\qtop$ and the node that is closest to $\qbot$.
In what follows we will work with sets $Q_0$, $Q_1$, and $Q_2$ (see Figure~\ref{fig:IrrelevantVertexMarking1}).

Furthermore, we construct a set of edges $R \subseteq E(\Ptd)$ as follows:
for every two nodes $q_1,q_2 \in Q_2$ that are consecutive
on $\Ptd$ (i.e., no node of $\Ptd$ between them belongs to $Q_2$),
we add to $R$ an edge $e$ of the subpath of $\Ptd$ between $q_1$ and $q_2$
that minimizes $|\adh(e) \cap A|$ (see Figure~\ref{fig:IrrelevantVertexMarking2}).

We define the set
$$Z := A \cap \left(Z_2 \cup \bigcup_{e \in R} \adh(e)\right).$$ %\todo{Don't we also want the adhesions of~$e \in R$ to be in~$Z$?}
Since every clique in $\Gc$ has size $\Oh(k|M|^3)$
(by Reduction~\ref{red:cliques}), we have
$|Q_i| = \Oh(k^i |M|^{3i})$ and $|Z_i| = \Oh(k^{i+1} |M|^{3i+3})$ for every fixed $i$,
and, consequently, $|Z| = \Oh(k^3|M|^9)$.

\widefigure{
\bigimage{IrrelevantVertexMarking1}
}{
\caption{Illustration of the process of marking nodes on~$\Pi$ to determine sets~$Q_i \subseteq V(\Pi)$ and~$Z_i \subseteq V(A) \subseteq V(\Gc)$. Path~$\Pi$ in~$(T,\beta)$ laid out horizontally. Only the occurrence of the vertices in~$\beta(\qtop) \cup \beta(\qbot)$ in bags of~$T$ is shown. The bags also contain other vertices, which are hidden. Nodes~$Q_0 = \{\qtop,\qbot\}$ are drawn in yellow. The set~$Q_{i+1}$ contains~$Q_i$, and for each vertex~$v$ in a bag~$q \in Q_i$ it contains the leftmost and rightmost occurrence of~$v$ on~$\Pi$. The gray bags are added to~$Q_1$ by this process. For example, the first gray node from the left is added to~$Q_1$ because it has the rightmost occurrence of~$1 \in \beta(\qtop \in Q_0)$, i.e., the occurrence of~$1$ on~$\Pi$ that is closest to~$\qbot$ in~$T$.
} \label{fig:IrrelevantVertexMarking1}
}
\widefigure{
\bigimage{IrrelevantVertexMarking2}
}{
\caption{The set~$Q_2$ on the $\qtop \qbot$-path~$\Pi$ is drawn in yellow. For some bags on the path, the vertices from~$A$ they contain are drawn within their circles. Between consecutive nodes from~$Q_2$ we find an edge~$e$ minimizing~$|\adh(e) \cap A|$ and add it to~$R$. The red edge is added to~$R$, since its adhesion is~$\{1,2,6,7\} \cap \{1,4,5\} = \{1\} \in A$, while the other edges between~$q'$ and~$q''$ all have at least two vertices from~$A$ in their adhesion.} \label{fig:IrrelevantVertexMarking2}
}

An important property of the set $Q_1$ is the following.
\begin{claim}\label{cl:comp:v-not-in-Q1}
Let $H$ be a permitted hole containing a vertex $v \in A \setminus Z_1$.
Then both neighbors of the vertex $v$ on the hole $H$ lie in $A$ as well.
In particular, if $P$ is the path obtained from Claim~\ref{cl:comp:permitted-holes} for $H$ and $v$,
then $v$ is not adjacent to either of the endpoints of the path~$P$.
\end{claim}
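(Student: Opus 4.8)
The goal is to show that if $v \in A \setminus Z_1$ lies on a permitted hole $H$, then both neighbors of $v$ on $H$ lie in $A$. By Claim~\ref{cl:comp:permitted-holes}(1) we already know that both neighbors of $v$ on $H$ lie in $\Gc$, and that $H$ contains a subpath $P$ between a vertex $\utop \in \beta(\qtop)$ and a vertex $\ubot \in \beta(\qbot)$ passing through $v$, all of whose internal vertices lie in $A$. Thus it suffices to rule out the possibility that $v$ is adjacent on $H$ to one of the two endpoints of $P$, i.e.\ that $v$ is one of the two vertices of $P$ immediately following $\utop$ or immediately preceding $\ubot$. Equivalently, I must show that if $v$ is a neighbor of $\utop$ or of $\ubot$ on the induced path $P$, then $v \in Z_1$.

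\textbf{Key step.} Suppose $v$ is adjacent to $\utop \in \beta(\qtop)$ on $P$; the case of $\ubot \in \beta(\qbot)$ is symmetric. Since $\qtop \in Q_0$, by definition $\utop \in \beta(\qtop) \subseteq Z_0$. By the inductive construction of $Q_1$ from $Z_0$, the set $Q_1$ contains the node of $V(\Pi) \cap \binv(\utop)$ closest to $\qtop$ and the node closest to $\qbot$. Now I use that $v$ is adjacent to $\utop$: the subtrees $\binv(v)$ and $\binv(\utop)$ share a common node, say $r$. Moreover, by Claim~\ref{cl:comp:permitted-holes}(3) applied to the permitted hole $H$ and vertex $v$, the subtree $\binv(v)$ contains an internal node of $\Pi$. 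I claim that in fact $\binv(v)$ contains one of the two nodes of $V(\Pi) \cap \binv(\utop)$ that were added to $Q_1$, which would give $v \in \beta(q)$ for some $q \in Q_1$, hence $v \in Z_1$ as desired. To see this, note that $\binv(v) \cap \binv(\utop)$ is a (connected) subtree, and since $\binv(v)$ meets the path $\Pi$ at an internal node while $\binv(\utop)$ meets $\Pi$ at a set of consecutive nodes containing $\qtop$ (because $\top(\utop)$ is an ancestor of $\qtop$ or equals it, as $\utop$ sits in the bag of the neighbor $\qtop$ of $Q^A$), the intersection $\binv(v) \cap \binv(\utop) \cap V(\Pi)$ is nonempty and therefore contains the node of $V(\Pi) \cap \binv(\utop)$ closest to $\qbot$ — which is precisely one of the nodes added to $Q_1$.

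\textbf{Main obstacle.} The delicate point is the last sentence: verifying that the internal node of $\Pi$ guaranteed by Claim~\ref{cl:comp:permitted-holes}(3) for $v$, together with the edge $v\utop$, forces $\binv(v)$ to actually hit $\binv(\utop) \cap V(\Pi)$ at a node that was marked into $Q_1$, rather than merely somewhere on $\Pi$. This requires a careful case analysis of the relative position along $\Pi$ of $\binv(\utop)$, $\binv(\ubot)$, and $\binv(v)$: since $\utop \in \beta(\qtop)$ and $v \in A$, the subtree $\binv(\utop)$ occupies an initial segment of $\Pi$ starting at $\qtop$, and the shared node of $\binv(v)$ and $\binv(\utop)$ (which exists by adjacency) together with the internal node of $\Pi$ in $\binv(v)$ forces $\binv(v)$ to ``straddle'' the boundary of the segment $\binv(\utop) \cap V(\Pi)$; by connectivity of $\binv(v)$ it then contains the extreme marked node. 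One must also handle the degenerate cases where $\qbot$ is the dummy node with empty bag, but there $\ubot$ cannot exist on an actual hole, so only the $\qtop$-side argument is relevant. Once this positional bookkeeping is done, the conclusion $v \in Z_1$ contradicts $v \in A \setminus Z_1$, so neither endpoint of $P$ is a neighbor of $v$ on $H$; combined with Claim~\ref{cl:comp:permitted-holes}(2), both neighbors of $v$ on $H$ are internal vertices of $P$ and hence lie in $A$.
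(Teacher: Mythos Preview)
Your approach has a genuine gap. You try to show directly that if $v$ is adjacent to $\utop$ then $v \in Z_1$, by arguing that $\binv(v)$ must contain the node $q^*$ of $V(\Ptd) \cap \binv(\utop)$ closest to $\qbot$ (which lies in $Q_1$). This stronger statement is false. Consider $\Ptd = (\qtop, p_1, p_2, p_3, q^*, \ldots, \qbot)$ with $\utop$ appearing in the bags of $\qtop, p_1, p_2, p_3, q^*$ and $v \in A$ appearing only in the bags of $p_1, p_2, p_3$. Then $v$ is adjacent to $\utop$, the subtree $\binv(v)$ meets $\Ptd$ in internal nodes, yet $\binv(v) \cap V(\Ptd)$ lies strictly inside the segment $\binv(\utop) \cap V(\Ptd)$ and contains neither $\qtop$ nor $q^*$. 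If no other vertex of $Z_0$ has an extreme node among $p_1,p_2,p_3$, then none of these belong to $Q_1$ and hence $v \notin Z_1$. Your ``straddling'' heuristic fails precisely here: adjacency of $v$ to $\utop$ together with $\binv(v)$ meeting $\Ptd$ does \emph{not} force $\binv(v)$ to reach the far endpoint $q^*$ of the $\utop$-segment.

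What makes the claim true is not that $v \in Z_1$, but that $P$ is an induced path. The paper's argument exploits this differently: since $v \notin Z_1$, the segment $\binv(v) \cap V(\Ptd)$ avoids all nodes of $Q_1$, and therefore every bag along this segment contains exactly the same subset of $Z_0$ (the extreme nodes of each $Z_0$-vertex on $\Ptd$ are in $Q_1$, so no $Z_0$-vertex can enter or leave along this segment). Now both neighbors of $v$ on $P$ occur in some bag of $\binv(v) \cap V(\Ptd)$ (by the Helly property for subtrees, since each neighbor meets both $\binv(v)$ and $V(\Ptd)$). Hence if one neighbor, say $\utop$, lies in $Z_0$, then $\utop$ lies in \emph{every} bag of $\binv(v) \cap V(\Ptd)$, so in particular in a bag shared with the \emph{other} neighbor of $v$ on $P$; this produces a chord on $P$, a contradiction. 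The key idea you are missing is this chord argument---the induced-ness of $P$ is essential, and mere adjacency of $v$ to $\utop$ is not enough to reach a contradiction.
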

\begin{claimproof}
We apply Claim~\ref{cl:comp:permitted-holes} to the hole $H$ with vertex $v$,
obtaining a path $P$ with endpoints $\utop \in \beta(\qtop)$ and $\ubot \in \beta(\qbot)$, and a node $s \in V(\Ptd) \cap \binv(v)$.
Note that $s \notin Q_1$.
Let $\sstop$ be the first node
of $Q_1$ encountered if we traverse $\Ptd$ from $s$ in the direction of $\qtop$,
and similarly define $\ssbot$ (recall that $\qtop,\qbot \in Q_1$, so these
nodes are well-defined).

The crucial observation, stemming from the definition of $Q_1$, is that the bag of every node of $V(\Ptd) \cap \binv(v)$
contains exactly the same subset of vertices of $Z_0$.
Furthermore, this subset of vertices forms a clique in $\Gc$, as they appear together in $\beta(s)$.
By Claim~\ref{cl:comp:permitted-holes}, each neighbor $w$ of $v$ on $P$ lies in $A \cup Z_0$.
We have that $\binv(w)$ contains a node of $V(\Ptd)$: this follows from Claim~\ref{cl:comp:permitted-holes} if $w \in A$
and for $w \in Z_0$ it follows from the facts that $w \in \beta(\qtop) \cup \beta(\qbot)$ and $w$ is adjacent to $v \in A$.
Furthermore, since $w$ is adjacent to $v$, we have that $\binv(w)$ actually needs to contain a vertex of $V(\Ptd) \cap \binv(v)$. % The claim that \binv(w) contains a nodle of \Ptd does not follow from Claim cl:comp:permitted-holes when w \in Z_0; but then it follows from the fact that w \in \qtop or \qbot and to make an induced path to the ``other side'' it must also occur somewhere in the middle, which gives a vertex on \Ptd. Should we clarify this?
Consequently, every neighbor $w \in N(v) \cap Z_0$ is equal or adjacent to both neighbors of $v$ on $P$,
which implies that $w$ cannot lie on $P$.
\end{claimproof}

\subsection{Vertices outside \texorpdfstring{$Z$}{Z} can be made undeletable}

Our goal now is to show the following lemma that says that
the vertices of $A \setminus Z$ can be made undeletable. In the following,~$\triangle$ denotes symmetric difference.
\begin{lemma}\label{lem:comp:Z}
For every solution $X$ to $(G,k,M,E^h)$ there exists a solution
$X'$ with $|X'| \leq |X|$, $X \triangle X' \subseteq A$,
and $X' \cap A \subseteq Z$.
\end{lemma}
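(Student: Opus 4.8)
The plan is to take an arbitrary solution $X$ and "push'' the part of $X$ lying in $A$ onto the set $Z$ without increasing the size and without breaking chordality or the forced-pair constraints. Since $Z \supseteq A \cap Z_2$ and $Z$ contains the adhesions $\adh(e)$ for $e \in R$, the edges in $R$ naturally cut $A$ into pieces indexed by consecutive nodes of $Q_2$ on $\Ptd$; I would argue piece-by-piece. Concretely, let $X_A := X \cap A$ and suppose some vertex of $X_A$ lies outside $Z$. Because the edges of $R$ have small adhesion into $A$, and because $Z$ already contains everything in $Z_2$ together with every $\adh(e) \cap A$, replacing $X_A$ on one ``segment'' of $A$ (between two consecutive nodes of $Q_2$) by the adhesion $\adh(e) \cap A$ of the corresponding $R$-edge is a size-non-increasing swap: the segment contributes at least one vertex of $X$ precisely when the hole it could carry needs to be hit, and $\adh(e)\cap A$ is a separator in $\Gc$ between the two sides of that segment, hence hits every induced path through it. The forced-pair constraints are untouched since $X \triangle X' \subseteq A$ and $E^h \subseteq E(G[M])$.

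The key steps, in order: (1) Set up the segmentation of $A$ by $R$ and observe, via the clique-tree structure, that for each $e \in R$ the set $\adh(e) \cap A$ separates in $\Gc$ the vertices of $A$ on the $\qtop$-side of $e$ from those on the $\qbot$-side. (2) Show that any permitted hole $H$ through $A$ and through a vertex $v \in A \setminus Z$ must, by Claim~\ref{cl:comp:permitted-holes} and Claim~\ref{cl:comp:v-not-in-Q1}, enter and leave $A$ ``far'' from $v$, i.e. $v$ is an internal vertex of the $\beta(\qtop)$-to-$\beta(\qbot)$ subpath $P$ and is not adjacent to its endpoints; hence such a hole crosses each $R$-edge separating $v$ from $\qtop$ (resp.\ $\qbot$) and therefore passes through $\adh(e) \cap A \subseteq Z$. (3) Define $X' := (X \setminus A) \cup (X_A \cap Z) \cup \bigl(\bigcup_{e \in R'} \adh(e)\cap A\bigr)$, where $R' \subseteq R$ is the set of edges whose segment contains a vertex of $X_A \setminus Z$; bound $|X'| \le |X|$ by a charging argument: each segment $s$ with $X_A \cap s \not\subseteq Z$ contains at least one deleted vertex not already counted in $Z$, which pays for the $\le |\adh(e)\cap A|$ newly added vertices — here one must use the minimality in the choice of $e \in R$ (it minimizes $|\adh(e)\cap A|$ over the subpath) against the fact that $|\beta(p) \cap A|$ for every node $p$ on that subpath is a lower bound for how many vertices of $A$ any hole-hitting set needs there... more carefully, one shows $X$ must already contain $\ge |\adh(e) \cap A|$ vertices ``local to'' that segment, so the swap does not increase the count. (4) Verify $G - X'$ is chordal: any hole in $G - X'$ would be a permitted hole (non-permitted holes are killed by $E^h$, which is untouched), it cannot avoid $A$ since $X \setminus A = X' \setminus A$ and $G - X$ is chordal, so it passes through $A$; if it uses only vertices of $A \cap Z$ and $X'\supseteq X$ on those it is killed, and if it uses a vertex of $A \setminus Z$ then by step (2) it passes through some $\adh(e)\cap A \subseteq X'$, contradiction.

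The main obstacle I expect is step (3), the size bound $|X'| \le |X|$: one must show that for every segment on which $X$ deletes a vertex outside $Z$, the solution $X$ is in fact forced to spend at least $|\adh(e) \cap A|$ deletions that ``belong'' to that segment and are lost in the swap. This requires producing, inside each segment, a packing of $|\adh(e) \cap A|$ vertex-disjoint induced subpaths through $A$ — or more precisely arguing that $\min_{e} |\adh(e) \cap A|$ over the subpath between two consecutive $Q_2$-nodes equals the minimum vertex cut through that part of $A$, so that any chordal deletion set restricted to that part has size at least that minimum cut (this is where the clique-tree/laminar structure of adhesions on a path is used: the adhesions along a path in a clique tree are ``nested-like'' in the sense that a shortest path in $\Gc$ through the segment crosses every adhesion, and menger-type duality in chordal graphs gives the matching packing). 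Handling the two boundary cliques $\beta(\qtop), \beta(\qbot)$ and the possibility that $\adh(\qtop)$ and $\adh(\qbot)$ overlap in a common vertex $u'$ (as in Figure~\ref{fig:SingleComponentStructure}) will require a little extra care, but does not change the essence of the argument.
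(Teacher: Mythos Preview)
Your overall strategy is right, and your use of Claims~\ref{cl:comp:permitted-holes} and~\ref{cl:comp:v-not-in-Q1} for steps~(2) and~(4) is essentially what the paper does. The genuine gap is the size bound in step~(3).

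The Menger-type assertion ``any chordal deletion set restricted to that part has size at least that minimum cut'' is not true as stated: a chordal deletion set need not separate the two ends of a segment; it only has to hit holes, and those might be hit outside $A$, or by a single well-placed vertex. What is actually true (and what the paper proves as Claim~\ref{cl:comp:X-cuts-edge}) is more specific: if $v\in X\cap(A\setminus Z)$ is essential---removing it from $X$ creates a permitted hole $H$---then $X$ contains an \emph{entire} adhesion $\adh(e_X)\cap A$ for some edge $e_X$ in $v$'s segment. The proof is a bypass argument: if every edge of the segment had a vertex of $A\setminus X$ in its adhesion, those vertices would induce a connected detour around $v$ inside $G-X$, and splicing this detour into $H$ would produce a hole avoiding $X$. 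Even granting that claim, your global swap does not obviously satisfy $|X'|\le|X|$: you remove only $X_A\setminus Z$, but the witnessing sets $\adh(e_{X,i})\cap A\subseteq X$ may lie largely in $X_A\cap Z$, which you keep, so they do not pay for the newly added $\bigcup_{e\in R'}(\adh(e)\cap A)$.

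The paper sidesteps this accounting problem by performing a \emph{single} local swap and iterating via a minimal-counterexample argument. Take $X$ minimizing $|X\cap(A\setminus Z)|$ among counterexamples, pick one $v\in X\cap(A\setminus Z)$, obtain $e_X$ as above (one also checks $v\in\adh(e_X)$), and set $X':=(X\setminus(\adh(e_X)\cap A))\cup(\adh(e)\cap A)$ where $e$ is the $R$-edge of $v$'s segment. Now $|X'|\le|X|$ is immediate from the minimality of $|\adh(e)\cap A|$, and $X'$ has strictly fewer vertices in $A\setminus Z$; the verification that $G-X'$ is chordal is precisely your step~(4), localised to that one segment. Replacing your global construction by this one-segment-at-a-time swap is the missing idea.
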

\begin{proof}
Assume the contrary, and let $X$ be a counterexample
with minimum number of vertices from $A \setminus Z$.
Let $v \in X \cap (A \setminus Z)$. 
By the minimality of $X$, we have that $G-(X \setminus \{v\})$
contains a hole $H$ passing through $v$.
Since $X$ is a solution to $(G,k,M,E^h)$ and $v \notin M$, this hole is permitted.
We apply Claim~\ref{cl:comp:permitted-holes}: let $P$ be a subpath of
$H$ whose existence is asserted in the second claim, and let
$r \in \binv(v) \cap V(\Ptd)$.

Let $x$ be an arbitrary vertex of $V(H) \cap M$.
Note that by Claim~\ref{cl:comp:permitted-holes} we have $vx \notin E(G)$
and, consequently, by Observation~\ref{obs:comp:uniform},
$A \cap N(x) = \emptyset$.

Since $v \notin Z$ we have~$\binv(v) \cap Q_2 = \emptyset$, and in particular $r \notin Q_2$. Let $\rtop$ be the first vertex
of $Q_2$ encountered if we traverse $\Ptd$ from $r$ in the direction of $\qtop$,
and similarly define $\rbot$ (recall that $\qtop,\qbot \in Q_2$, so these
nodes are well-defined).
We claim the following:
\begin{claim}\label{cl:comp:X-cuts-edge}
$X$ contains $\adh(e_X) \cap A$
for some edge $e_X$ that lies on $\Ptd$ between $\rtop$ and $\rbot$.
\end{claim}
\begin{claimproof}
Assume the contrary.
Construct a set $B \subseteq A \setminus X$ as follows:
for every edge $e$ that lies on $\Ptd$ between $\rtop$ and $\rbot$,
insert into $B$ an arbitrary vertex of $\adh(e) \cap (A \setminus X)$.
Since every bag of $T$ is a clique, $G[B]$ is connected. 

Let $\Ptop$ be the subpath of $P$ between $v$ and the endpoint of $P$ in $\beta(\qtop)$
and $\Pbot$ be the subpath of $P$ between $v$ and the endpoint of $P$ in $\beta(\qbot)$.
Since $\rtop$ lies on the unique path in $T$ between $\binv(v)$
and $\qtop$, $\Ptop$ contains a vertex $\wtop \in \beta(\rtop)$;
note that $\wtop \neq v$ as $v \notin \beta(\rtop)$.
Similarly define $\wbot \in \beta(\rbot)$.
By the construction of $B$, $\wtop,\wbot \in N[B]$.

We infer that $G[V(H) \setminus \{v\} \cup B]$ contains a walk
between the neighbors of $x$ on the hole $H$ with internal vertices
in $V(G) \setminus N[x]$ (recall that $B \subseteq A$ and $A \cap N[x] = \emptyset$). By Observation~\ref{obs:find-hole}, $G-X$ contains a hole, which is easily seen to be permitted; a contradiction.
\end{claimproof}

By construction of~$R$ in the beginning of \S\ref{subsec:importantvertices}, there is an edge~$e$ on~$R$ that lies between $\rtop, \rbot \in Q_2$ on $\Ptd$. 
Let $X' = (X \setminus (\adh(e_X) \cap A)) \cup (\adh(e) \cap A)$,
where the edge $e_X$ is given by Claim~\ref{cl:comp:X-cuts-edge}.
By the choice of $e$ in the process of constructing $R$, we have
$|X'| \leq |X|$. Clearly, also $X \triangle X' \subseteq A$.
Since~$\adh(e \in R) \cap A \subseteq Z$, to prove that $X'$ contains strictly less vertices of $A \setminus Z$ than $X$
it suffices to show the following.
\begin{claim}
If $e_X$ is an edge whose existence is asserted by Claim~\ref{cl:comp:X-cuts-edge},
then $v \in \adh(e_X)$.
\end{claim}
\begin{claimproof}
By Claim~\ref{cl:comp:v-not-in-Q1} for $H$ and $v$, neither $\utop$ nor $\ubot$ is adjacent to $v$, in particular, they do not belong to $\beta(r)$.
By the definition of $Q_1$, neither of them belongs to any bag on $\Ptd$ between $\rtop$ and $\rbot$, exclusive.

Consequently, $\utop,\ubot \notin \adh(e_X)$ and thus
$P$ contains a vertex of $\adh(e_X) \cap A$. Since $H$ is a hole
in $G-(X \setminus \{v\})$, and $\adh(e_X) \cap A \subseteq X$,
this vertex is the vertex $v$.
\end{claimproof}

\widefigure{
\bigimage{IrrelevantVertexMarking3}
}{
\caption{Illustration of the situation for Claim~\ref{cl:comp:P1-ends}. The set~$Q_2 \supseteq Q_1 \supseteq Q_0$ on the $\qtop \qbot$-path~$\Pi$ is drawn in yellow. A marked edge~$e \in R$ between~$\rtop$ and~$\rbot$ is drawn in red. An edge~$e_X$ for which~$\adh(e_X) \cap A \subseteq X$ is also shown.} \label{fig:IrrelevantVertexMarking3}
}

To reach a contradiction, it remains to show that $X'$ is a solution to $(G,k,M,E^h)$ as well.
Since $X \triangle X' \subseteq A$, $X'$ satisfies all the constraints
imposed by the set $E^h$. Thus, it remains to prove that $G-X'$ is chordal.
Assume the contrary, and let $H'$ be a hole in $G-X'$.

Since $G-X$ is chordal, $H'$ needs to contain a vertex $v' \in A \cap (\adh(e_X) \setminus \adh(e))$. Since $e$ and $e_X$ are edges of $\Ptd$ between two consecutive
vertices of $Q_2$ (namely, $\rtop$ and $\rbot$), 
it follows from the definition of $Q_2$ that $\adh(e) \cap Z_1 = \adh(e_X) \cap Z_1$.
We infer that $v' \notin Z_1$. Refer to Figure~\ref{fig:IrrelevantVertexMarking3} for an illustration.

We apply Claim~\ref{cl:comp:permitted-holes} to the hole $H'$ with vertex $v'$,
obtaining a path $P'$ with endpoints $\utop \in \beta(\qtop)$ and $\ubot \in \beta(\qbot)$, and a node $s \in V(\Ptd) \cap \binv(v')$.
Note that $s \notin Q_1$.
Let $\sstop$ be the first node
of $Q_1$ encountered if we traverse $\Ptd$ from $s$ in the direction of $\qtop$,
and similarly define $\ssbot$ (recall that $\qtop,\qbot \in Q_1$, so these
nodes are well-defined).
Observe that since $v' \in \adh(e_X)$, but $\sstop,\ssbot \notin \binv(v')$,
we have that $\rtop$, $\rbot$, $e$, and $e_X$ lie on the subpath of $\Ptd$ between $\sstop$ and $\ssbot$; possibly~$\rtop = \sstop$ or~$\rbot = \ssbot$.
We now show the following corollary of Claim~\ref{cl:comp:v-not-in-Q1}.
\begin{claim}\label{cl:comp:P1-ends}
The vertex $\utop$ is not present in any bag on $\Ptd$ between $\sstop$ (exclusive) and $\qbot$ (inclusive), i.e., $\binv(\utop) \cap V(\Ptd)$
is contained in the subpath of $\Ptd$ between $\sstop$ and $\qtop$.
Symmetrically,
the vertex $\ubot$ is not present in any bag on $\Ptd$ between $\ssbot$ (exclusive) and $\qtop$ (inclusive), i.e., $\binv(\ubot) \cap V(\Ptd)$
is contained in the subpath of $\Ptd$ between $\ssbot$ and $\qbot$.
\end{claim}
\begin{claimproof}
By Claim~\ref{cl:comp:v-not-in-Q1} for $H'$ and $v' \not \in Z_1$, neither $\utop$ nor $\ubot$ is adjacent to $v'$, in particular, they do not belong to $\beta(s)$.
By the definition of $Q_1$, neither of them belongs to any bag on $\Ptd$ between $\sstop$ and $\ssbot$, exclusive. 
The statement for $\utop$ follows from the fact that $\binv(\utop) \cap V(\Ptd)$ is connected and contains $\qtop$, and similarly for $\ubot$.
\end{claimproof}
Since all internal vertices of $P'$ lie in $A$, from Claim~\ref{cl:comp:P1-ends} it follows that $P'$ needs to contain a vertex
of $\adh(e) \cap A$, as $e$ lies between $\sstop$ and $\ssbot$ on $\Ptd$. Consequently, $P'$ is hit by $X'$, a contradiction.
This concludes the proof of the lemma.
\end{proof}

\subsection{Reducing \texorpdfstring{$A \setminus Z$}{A - Z}}

Our goal now is to reduce $A \setminus Z$. While by Lemma~\ref{lem:comp:Z} we know that in a yes-instance there is a solution
disjoint with $A \setminus Z$, we need to be careful as some permitted holes may contain vertices from $A \setminus Z$.
We first filter out a simple case.

\begin{observation}\label{obs:comp:v-outside-Ptd}
Any vertex of $A \setminus \bigcup_{q \in V(\Ptd)} \beta(q)$ is irrelevant.
\end{observation}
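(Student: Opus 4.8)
The plan is to show that the lone vertex $v$ in question is \emph{irrelevant}, i.e.\ that the instances $(G,k,M,E^h)$ and $(G-v,k,M,E^h)$ have the same answer, where $v \in A$ is any vertex with $\binv(v) \cap V(\Ptd) = \emptyset$. The key fact I would isolate first is that \emph{no permitted hole of $G$ passes through $v$}. Indeed, if $H$ were a permitted hole with $v \in V(H)$, then since $v \in A$ the hole $H$ passes through $A$, so Claim~\ref{cl:comp:permitted-holes} (third item) applies to $H$ and $v$ and forces $\binv(v)$ to contain an internal node of $\Ptd$ -- contradicting $\binv(v) \cap V(\Ptd) = \emptyset$.

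Granting this, the forward direction is immediate: if $X$ is a solution to $(G,k,M,E^h)$, then $X \setminus \{v\}$ is a solution to $(G-v,k,M,E^h)$, because $|X \setminus \{v\}| \le k$, the set $X \setminus \{v\}$ still satisfies every constraint of $E^h$ (as $v \notin M$), and $(G-v) - (X \setminus \{v\})$ is an induced subgraph of the chordal graph $G-X$, hence chordal.

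For the backward direction, let $X$ be a solution to $(G-v,k,M,E^h)$, so in particular $v \notin X$; I claim $X$ is already a solution to $(G,k,M,E^h)$. It has size at most $k$ and meets all $E^h$ constraints, so the only thing to check is that $G-X$ is chordal. Suppose not and let $H$ be a hole in $G-X$. Since $(G-v)-X = (G-X)-v$ is chordal, $H$ must pass through $v$. Moreover $H$ is permitted: if $V(H)$ contained a pair $\{x,y\} \in E^h$, then one of $x,y$ lies in $X$ because $X$ satisfies the $E^h$ constraints, contradicting $V(H) \cap X = \emptyset$. Thus $H$ is a permitted hole through $v$, contradicting the key fact above; so $G-X$ is chordal.

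The proof is short and essentially a direct invocation of Claim~\ref{cl:comp:permitted-holes}; the only subtlety -- the ``main obstacle'', such as it is -- is the bookkeeping around permitted versus arbitrary holes: a hole in $G-X$ need not a priori be permitted, so one must observe that the membership constraints of $E^h$ imposed on $X$ already kill every non-permitted hole, which is what reduces the argument to the case handled by Claim~\ref{cl:comp:permitted-holes}.
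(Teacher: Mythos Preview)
Your proof is correct and follows essentially the same approach as the paper: both hinge on the observation (via Claim~\ref{cl:comp:permitted-holes}) that no permitted hole passes through such a vertex $v$, and then use that any hole in $G-X$ for a solution $X$ of the $(G-v)$-instance must be permitted because $X$ already hits every pair in $E^h$. The paper's version is more terse and omits the (trivial) forward direction, but the argument is the same.
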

\begin{proof}
Let $v \in A \setminus \bigcup_{q \in V(\Ptd)} \beta(q)$. By Claim~\ref{cl:comp:permitted-holes},
no permitted hole passes through $v$. Consequently, any solution $X$ to $(G-v,k,M,E^h)$ is a solution
to $(G,k,M,E^h)$ as well: as $v \notin M$, $X$ satisfies all constraints imposed by $E^h$, while $G$ and $G-v$
contain the same permitted holes.
\end{proof}
This allows us to formulate the following reduction rule.
\begin{reduction}\label{red:comp:irrelevant}
If $A \setminus \bigcup_{q \in V(\Ptd)} \beta(q)$ is nonempty, remove any vertex from this set.
\end{reduction}

Other vertices of $A \setminus Z$ need to be handled differently.
For a vertex $v \in V(G)$, by \emph{bypassing the vertex $v$} we mean the following operation: we first turn $N_G(v)$ into a clique, and then delete $v$
from $G$. We first observe that bypassing a vertex preserves being chordal.
\begin{observation}\label{obs:bypass-chordal}
A graph constructed from a chordal graph by bypassing a vertex is chordal as well.
\end{observation}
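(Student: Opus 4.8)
The plan is to produce, for a chordal graph $G$ and a vertex $v$, an explicit tree decomposition of the graph $G'$ obtained by bypassing $v$ in which \emph{every bag induces a clique of $G'$}. Since a graph is chordal precisely when it admits such a tree decomposition (the routine strengthening of the clique-tree characterization of chordal graphs used earlier; cf.~\cite{BrandstadtLS99}), this yields the claim. Recall that $V(G') = V(G) \setminus \{v\}$, that $G'$ retains every edge of $G - v$, and that $N_G(v)$ is a clique in $G'$ by construction. The case $N_G(v) = \emptyset$ is trivial, as then $G' = G - v$ is an induced subgraph of $G$; so assume $N_G(v) \neq \emptyset$.

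First I would fix a clique tree $(T,\beta)$ of $G$ and set $T_v := \binv(v)$, a connected subtree of $T$. The structural fact to record is that every bag $\beta(p)$ with $p \in T_v$ is a clique of $G$ containing $v$, so that $\beta(p) \setminus \{v\} \subseteq N_G(v)$; dually, every $u \in N_G(v)$ shares a bag with $v$, so $\binv(u)$ meets $T_v$. Using this, I define a new bag function on the \emph{same} tree $T$ by $\beta'(p) := \beta(p)$ for $p \notin T_v$ and $\beta'(p) := N_G(v)$ for $p \in T_v$. Since $v$ occurs in no bag outside $T_v$ and has been removed from the bags inside $T_v$, we have $\beta'(p) \subseteq V(G')$ for every $p$.

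Next I would verify that $(T,\beta')$ is a tree decomposition of $G'$ with all bags cliques of $G'$. The clique property is immediate: bags outside $T_v$ are cliques of $G-v$, and bags inside $T_v$ equal $N_G(v)$, which is a clique of $G'$. Vertex coverage and edge coverage are routine bookkeeping: any non-$v$ vertex lying in a bag on $T_v$ is a neighbor of $v$ and hence lands in $N_G(v)$; any edge of $G - v$ had a common bag in $(T,\beta)$ which survives (either unchanged, or replaced by the superset $N_G(v)$); and any edge of $\binom{N_G(v)}{2}$ lies inside $\beta'(p) = N_G(v)$ for any $p \in T_v$. For the connectivity condition one computes, for $w \in V(G')$, the set $\{p : w \in \beta'(p)\}$: if $w \notin N_G(v)$ this set equals $\binv(w)$ (a bag on $T_v$ containing $w$ would force $w \in N_G(v)$), which is connected; if $w \in N_G(v)$ this set equals $T_v \cup \binv(w)$, a union of two connected subtrees of $T$ that intersect (a bag contains both $w$ and $v$), hence again connected.

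I do not expect a genuine obstacle: the only point needing care is the connectivity bookkeeping, and the single observation that makes the same-tree relabelling $\beta \mapsto \beta'$ legitimate is that no bag on $T_v$ contains a non-$v$ vertex outside $N_G(v)$. Should one prefer to invoke only the clique-tree characterization verbatim, an equally short alternative is to first make $N_G[v]$ a clique — turning $v$ simplicial — by the analogous relabelling with $\beta'(p) := N_G[v]$ on $T_v$, and then delete $v$, using that induced subgraphs of chordal graphs are chordal; the verification is essentially the same.
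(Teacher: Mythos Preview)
Your proof is correct and is essentially the same argument as the paper's: both keep the tree $T$ of a clique tree of $G$ unchanged and replace every bag containing $v$ by $N_G(v)$, then observe that the result is a tree decomposition of the bypassed graph in which every bag is a clique. You spell out the vertex/edge/connectivity verifications in more detail than the paper does, but the underlying construction is identical.
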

\begin{proof}
Let $G_1$ be a chordal graph with clique tree $(T_1,\beta_1)$, let $v \in V(G_1)$, and let $G_2$ be constructed from $G_1$ by bypassing $v$.
Consider the following tree decomposition $(T_2,\beta_2)$: we take $T_2 = T_1$ and then, for every $p \in V(T_2)$
we put $\beta_2(p) = N(v)$ if $v \in \beta_1(p)$ and $\beta_2(p) = \beta_1(p)$ otherwise.
A direct check shows that $(T_2,\beta_2)$ is a tree decomposition of $G_2$ in which every bag induces a clique.
Thus, $G_2$ is chordal, as claimed.
\end{proof}

\begin{lemma}\label{lem:comp:A-Z}
Assume all previous reductions are inapplicable.
Let $v \in A \setminus Z$ and let $G'$ be constructed from $G$ by first turning $N_G(v)$ into a clique, and then removing the vertex $v$.
Then the instances $(G,k,M,E^h)$ and $(G',k,M,E^h)$ are equivalent.
\end{lemma}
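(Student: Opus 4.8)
The plan is to establish the two directions of the equivalence separately, using the machinery already developed. Throughout, I write $G'$ for the graph obtained from $G$ by turning $N_G(v)$ into a clique and deleting $v$; note that by Observation~\ref{obs:bypass-chordal} the operation of bypassing $v$ preserves chordality of $G-M$ and, more generally, of any subgraph not containing $v$ after the newly added edges are inserted. First observe that $v \notin M$ and that, by Observation~\ref{obs:comp:v-outside-Ptd} being inapplicable, $v \in \beta(q)$ for some $q \in V(\Ptd)$; also $N_G(v) \cap M = N(A) \cap M$ is a clique in $G$ by Observation~\ref{obs:comp:uniform}, so the only genuinely new edges in $G'$ are between vertices of $N_{\Gc}(v)$, or between a vertex of $N_{\Gc}(v)$ and a vertex of $N(A) \cap M$ (but these latter already hold in $G$ by Observation~\ref{obs:comp:uniform}). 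Hence $G'$ differs from $G$ only by the deletion of $v$ and by adding edges inside $N_{\Gc}(v) \subseteq A \cup \beta(\qtop) \cup \beta(\qbot)$.

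For the forward direction ($\Rightarrow$), suppose $X$ is a solution to $(G,k,M,E^h)$. By Lemma~\ref{lem:comp:Z} we may assume $X \cap A \subseteq Z$, so in particular $v \notin X$. I claim $X$ (restricted to $V(G')$, i.e.\ $X$ itself since $v \notin X$) is a solution to $(G',k,M,E^h)$. The constraints from $E^h$ are satisfied since $X \cap M$ is unchanged. Suppose for contradiction $G'-X$ contains a hole $H'$. Since $G-X$ is chordal and the only difference is the removal of $v$ together with edges inside $N_{\Gc}(v)$, the hole $H'$ must use at least one of the newly added edges, say $ab$ with $a,b \in N_{\Gc}(v)$ consecutive on $H'$ and $ab \notin E(G)$. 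Then replacing the edge $ab$ by the path $a,v,b$ (legal since $a,b \in N_G(v)$ and $v \notin X$) yields a closed walk in $G-X$ through $v$; shortcut it to a hole in $G-X$ using Observation~\ref{observation:walk:implies:inducedpath} / Observation~\ref{obs:find-hole} — the key point being that $v$ is not adjacent in $G$ to any internal vertex of $H' - \{a,b\}$ other than possibly other newly-added-edge endpoints, and by induction on the number of bypass-edges used we reduce to the case $H'$ uses exactly one such edge, giving a hole in $G-X$, a contradiction. (One must check the resulting hole is permitted: it uses no new vertices of $M$, so since $H'$ was permitted and bypassing only affects vertices of $A$, the constraint set is respected.)

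For the reverse direction ($\Leftarrow$), suppose $X'$ is a solution to $(G',k,M,E^h)$; I must produce a solution $X$ to $(G,k,M,E^h)$ of size at most $k$. The natural candidate is $X := X'$ itself (note $v \notin X'$ is not guaranteed, but if $v \in X'$ replace it by an arbitrary vertex of $\adh(e) \cap A$ for a marked edge $e \in R$, keeping size and using Lemma~\ref{lem:comp:Z}-type reasoning; alternatively handle $v \in X'$ by noting $X' \setminus \{v\}$ plus nothing already works since deleting $v$ in $G'$ is vacuous). The content is: $G-X$ is chordal. Suppose not, and let $H$ be a hole in $G-X$. If $v \notin V(H)$, then $H$ is a hole in $G'-X$ too unless it uses an edge $ab$ of $G$ with $a,b \in N_{\Gc}(v)$ that got absorbed — but such edges are present in $G'$, so $H$ survives verbatim: contradiction. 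Hence $v \in V(H)$, and $H$ is permitted since $v \notin M$. Apply Claim~\ref{cl:comp:permitted-holes} to $H$ and $v$: both neighbors of $v$ on $H$ lie in $\Gc$, and $H$ contains a subpath $P$ from some $\utop \in \beta(\qtop)$ to some $\ubot \in \beta(\qbot)$ through $v$ with internal vertices in $A$. Now the two neighbors $a,b$ of $v$ on $H$ lie in $N_{\Gc}(v)$, hence $ab \in E(G')$; replace $v,a,b$ on $H$ by the edge $ab$ to get a shorter closed walk — but this might create a triangle/chord rather than a hole. The right move is: since $v \in A \setminus Z$, apply the argument of Lemma~\ref{lem:comp:Z} — actually $X$ is already disjoint from $A \setminus Z$ after the adjustment, so $v \notin X$, and since $H$ is a hole in $G-X$ passing through $v \in A \setminus Z$, Claim~\ref{cl:comp:v-not-in-Q1} applies (as $A \setminus Z \subseteq A \setminus Z_1$): both neighbors of $v$ on $H$ lie in $A$ and are nonadjacent to $\utop,\ubot$. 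Thus $a,b \in A$, $ab \notin E(G)$, and the edge $ab \in E(G')$ lets us contract $a,v,b$ to $a,b$; the resulting cycle $H \setminus \{v\}$ is a hole in $G'-X$ (it is chordless: any chord would have to involve the new edge $ab$, but $a,b$ were already consecutive, or would be a chord of $H$), contradicting chordality of $G'-X$.

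The main obstacle I expect is the reverse direction's bookkeeping: carefully verifying that contracting $v$ out of the hole $H$ yields a genuine hole (chordless, length $\geq 4$) in $G'-X$ — this requires knowing that the two neighbors of $v$ on $H$ are nonadjacent in $G$ (so the contracted cycle does not collapse to a triangle) and that no other vertex of $H$ was adjacent to $v$ in $G$ (so no previously-present chord appears). Both facts come from $H$ being an induced cycle and from Claim~\ref{cl:comp:v-not-in-Q1}, which is exactly why the reduction is stated only for $v \in A \setminus Z$ rather than arbitrary $v \in A$. A secondary technical point is the handling of the case $v \in X'$ in the $(\Leftarrow)$ direction and the symmetric bypass-edge-absorption subtlety in the $(\Rightarrow)$ direction, both of which are routine once Lemma~\ref{lem:comp:Z} is invoked.
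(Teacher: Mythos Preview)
Your forward direction is needlessly complicated. You open by noting that Observation~\ref{obs:bypass-chordal} says bypassing preserves chordality, but then you abandon it and argue by contradiction with a hand-wavy ``induction on the number of bypass-edges''. The paper's argument is one line: by Lemma~\ref{lem:comp:Z} there is a solution $X$ with $v \notin X$; since $G' - X$ is obtained from the chordal graph $G-X$ by bypassing $v$, Observation~\ref{obs:bypass-chordal} gives $G'-X$ chordal. Also, your aside about whether $v \in X'$ in the reverse direction is moot: $X' \subseteq V(G')$ and $v \notin V(G')$.

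The real gap is in your reverse direction, in the case $v \notin V(H)$. You write that $H$ ``survives verbatim'' in $G'-X$ because its edges are present in $G'$. Edges being present is not the issue; chords are. Going from $G$ to $G'$ adds edges inside $N_G(v)$, so if $V(H)$ contains two vertices $u,w \in N_G(v)$ that were nonadjacent in $G$, then $uw$ is a chord of $H$ in $G'$ and $H$ is no longer a hole there. You cannot conclude $v \in V(H)$; this case is genuinely possible and must be handled.

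The paper closes this gap with an extra structural claim (Claim~\ref{cl:comp:cliquify-A}): any two nonadjacent $G$-neighbors of $v$ lie in $A \setminus Z_1$. This uses that $v \notin Z_2$ and the way $Q_2$ was built from $Q_1$. With this in hand, both cases --- $v \in V(H)$, or $v \notin V(H)$ but a new edge is a chord --- yield a vertex $w \in V(H) \cap (A \setminus Z_1)$. The paper then applies Claim~\ref{cl:comp:permitted-holes} to $H$ and $w$ (not to $v$) to extract the subpath $P$ from $\utop$ to $\ubot$, observes that $G'[V(P) \setminus \{v\}]$ is connected, and uses Claim~\ref{cl:comp:v-not-in-Q1} for $w$ to conclude $\utop\ubot \notin E(G')$, so a shortest $\utop\ubot$-path $P'$ in $G'[V(P)\setminus\{v\}]$ has length at least two and replacing $P$ by $P'$ yields a hole in $G'-X$. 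Your argument for the case $v\in V(H)$ is close to this in spirit, but you are missing both the second case and the claim that makes it tractable.
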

\begin{proof}
In one direction, assume $(G,k,M,E^h)$ is a yes-instance.
Lemma~\ref{lem:comp:Z} asserts that there exists a solution $X$ that does not contain $v$.
Observation~\ref{obs:bypass-chordal}, applied to $G-X$ and vertex $v$ implies that $G'-X$ is chordal as well.
Consequently, $X$ is a solution to $(G',k,M,E^h)$ as well.

The other direction is significantly more involved.
Intuitively, we want to rely on Claim~\ref{cl:comp:v-not-in-Q1} to show that $v$ can only interact with permitted holes
``locally'', and thus any permitted hole in $G$ has its counterpart in $G'$: possibly shorter but still being a hole.
We start with showing that the new edges in $G'$ are actually only in $G[A]$.
\begin{claim}\label{cl:comp:cliquify-A}
If $u,w$ are two distinct nonadjacent neighbors of $v$ in $G$, then $u,w \in A \setminus Z_1$.
\end{claim}
\begin{claimproof}
Pick $u,w$ as in the statement. Observation~\ref{obs:comp:uniform} implies that $u,w \notin M$.
Since Reduction~\ref{red:comp:irrelevant} is not applicable to $v$, there exists $r \in \binv(v) \cap V(\Ptd)$.
Since $v \notin Z$, we have $r \notin Q_2$. Let $\rtop$ be the first vertex
of $Q_2$ encountered if we traverse $\Ptd$ from $r$ in the direction of $\qtop$,
and similarly define $\rbot$ (recall that $\qtop,\qbot \in Q_2$, so these
nodes are well-defined).

By contradiction, assume $u \in Z_1$.
Note that all nodes of $\binv(v) \cap V(\Ptd)$ lie on $\Ptd$ between $\rtop$ and $\rbot$, exclusive.
By the definition of $Q_2$, every the bag at every node of $\binv(v) \cap V(\Ptd)$ contains exactly the
same subset of $Z_1$; in particular, it contains $u$.
Since Reduction~\ref{red:comp:irrelevant} does not apply to $w$, $w$ is present in some bag on $\Ptd$;
since $vw \in E(G)$, $w$ is present in some bag of $\binv(v) \cap V(\Ptd)$.
We infer that $u$ and $w$ meet in some bag of $\binv(v) \cap V(\Ptd)$, a contradiction to the assumption
that they are nonadjacent.
\end{claimproof}

Assume now $(G',k,M,E^h)$ is a yes-instance, and let $X$ be a solution. We claim that $X$ is a solution to $(G,k,M,E^h)$ as well;
clearly it satisfies all the constraints imposed by $E^h$.
By contradiction, let $H$ be a hole in $G-X$; clearly, $H$ is permitted.
Since $H$ is not a hole in $G'-X$, it contains either the vertex $v$ or one of the edges of $E(G') \setminus E(G)$
is a chord of $H$ in $G'-X$. We infer that in both cases $H$ contains a vertex $w \in A \setminus Z_1$:
in the first case we take $w=v$, in the second case it follows from Claim~\ref{cl:comp:cliquify-A}.

We apply Claim~\ref{cl:comp:permitted-holes} to the hole $H$ and the vertex $w$, obtaining
a path $P$ with endpoints $\utop \in \beta(\qtop)$ and $\ubot \in \beta(\qbot)$. By the construction of $G'$, 
the subgraph $G'[V(P) \setminus \{v\}]$ is connected. By Claim~\ref{cl:comp:v-not-in-Q1}, the vertex $w$
is nonadjacent to neither $\utop$ nor $\ubot$, thus $\utop\ubot \notin E(G')$
and the shortest path $P'$ between $\utop$ and $\ubot$ in $G'[V(P) \setminus \{v\}]$ is of length at least two.
We infer that if we replace $P$ with $P'$ on the hole $H$, we obtain a hole in $G'-X$, a contradiction.
\end{proof}

Lemma~\ref{lem:comp:A-Z} justifies the following reduction rule.
\begin{reduction}\label{red:comp:cliquify}
If there exists a vertex $v \in A \setminus Z$, bypass then $v$.
\end{reduction}

To conclude, observe that if the reductions introduced in this section are inapplicable,
we have $|A|=\Oh(k^3|M|^9)$ for every connected component $A$ of $G-(M \cup S_Q) = \Gc - S_Q$. Since Proposition~\ref{prop:count:components:after:rules} shows that the number of such components is bounded by~$\Oh(k^{13} |M|^{20})$ after exhaustive applications of all the reduction rules and~$|S_Q| \in \Oh(k^6 |M|^{10})$, we obtain a final size bound of~$\Oh(k^{16} |M|^{29})$ vertices in an exhaustively reduced instance. All reduction rules can be applied in polynomial time. Each application either reduces the number of vertices, or adds an edge to the graph. It follows that an instance can be exhaustively reduced by~$\Oh(|V(G)| + |E(G)|)$ applications of a reduction rule, which means the entire kernelization runs in polynomial time. This concludes the proof of Theorem~\ref{thm:AChVD-kernel}.

\section{Approximation algorithm}\label{s:approximation}
This section is devoted to the proof of Theorem~\ref{thm:apx}. We start with an informal overview. Let $(G,k)$ be a \ChVD instance.
First, we can assume that $\log n \leq k \log k$, as otherwise
the algorithm of Cao and Marx~\cite{CaoM15} solves $(G,k)$ in polynomial time.

We observe that in a yes-instance there exists a balanced vertex separator
that consists of a clique and at most $k$ vertices.
To this end, it suffices to take
a minimum solution and a ``central'' maximal clique of the remaining chordal graph.
We can find an approximate variant of such a balanced vertex cut in the following fashion.
Since we can assume that $G$ is $C_4$-free (e.g., by greedily deleting
whole $C_4$'s in $G$), it has $\Oh(n^2)$ maximal cliques~\cite[Proposition 2]{Farber89a}. We iterate over
all choices of a maximal clique $K$ in $G$, and apply the approximation
algorithm for balanced vertex cut of~\cite{FeigeHL08} to $G-K$.
In this manner, we can always find a balanced vertex cut in $G$
consisting of a clique and $\Oh(k \sqrt{\log k})$ vertices.

We iteratively apply the above partitioning algorithm to a connected component
of $G$ that is not chordal, and remove the output cut.
A simple charging argument shows that in a yes-instance
such a procedure finishes after $\ell = \Oh(k \log n)$ steps.
In the end, we obtain a partition
$$V(G) = A_0 \uplus K_1 \uplus K_2 \uplus \ldots \uplus K_\ell \uplus X_0,$$
where $G[A_0]$ is chordal, every $K_i$ is a clique, and $|X_0| = \Oh(\ell k \sqrt{\log k})$.
We put $X_0$ into the solution we are constructing.
To obtain the desired approximation algorithm, it is sufficient to tackle
the special case where we are given a partition $V(G) = A \uplus B$,
where $G[A]$ is chordal and $B$ is a clique in $G$. We can initially apply an algorithm for this special case 
to the graph~$G[A_0 \uplus K_1]$, obtaining a solution~$X_1$. We apply it again to~$G[((A_0 \uplus K_1) - X_1) \uplus K_2]$ to obtain~$X_2$, and then we repeat.
After~$\ell$ iterations of finding an approximate solution and adding back the next clique, we have found an approximate solution to the original input.

To find an approximate solution in the mentioned special case, we again apply the balanced partitioning approach, but with a 
different toolbox. Let $L$ be a ``central'' maximal clique of $G[A]$, 
that is, a maximal clique such that every connected component of $G[A \setminus L]$
is of size at most $|A|/2$.
If we find a small (bounded polynomially in the value of an LP relaxation)
set of vertices
that hits all holes passing through $L$,
then we can delete such a set, discard $L$ (as no holes now pass through $L$)
and recurse on connected components of remaining graph.

Thus, we can now focus only on holes passing through $L$.
We obtain an orientation $\Gdown$ of $G$ as follows:
we root a clique tree of $G[A]$ in $L$, and orient every edge of $G$ ``downwards'',
that is, $uv \in E(G)$ becomes $(u,v) \in E(\Gdown)$
if $\top(u)$ is an ancestor of $\top(v)$,
breaking ties arbitrarily for $\top(u) = \top(v)$. 
We argue that the task of hitting all holes passing through $L$ can be reduced
to a \multicut instance in $\Gdown$, and we apply an approximation algorithm
of Gupta~\cite{Gupta03} to find a small solution.

The rest of this section is organized as follows.
First, in Section~\ref{s:apx:defs} we recall the definition
of the \multicut problem, and formally define LP relaxations
of \multicut and \ChVD. 
We also recall the approximation algorithm of Gupta~\cite{Gupta03}
that we will use.
In Section~\ref{s:apx:clique} we focus on approximating
\ChVD in a special case of graphs $G$ where $V(G)$ partitions
into $A \uplus B$ such that $G[A]$ is chordal and $B$ is a clique.
Finally, we wrap up the approximation algorithm in Section~\ref{s:apx:final}.

\subsection{Multicut and LP relaxations}\label{s:apx:defs}

An instance of the \multicut problem consists of a directed graph $G$
and a family $\terms \subseteq V(G) \times V(G)$ of \emph{terminal pairs}.
A set $X \subseteq V(G)$ is a feasible solution to a multicut instance
$(G,\terms)$ if, for every $(s,t) \in \terms$, there is no path from 
$s$ to $t$ in $G-X$.
The \multicut problem asks for a feasible solution of minimum cardinality.
That is, we focus on the unweighted node-deletion variant of \multicut,
and note that the set $X$ may contain terminals.

Given a \multicut instance $(G,\terms)$, a \emph{fractional solution} is
a function $\LPsol:V(G) \to [0,+\infty)$ such that for every $(s,t) \in \terms$
and every path $P$ from $s$ to $t$ in $G$, we have
$$\LPsol(P) := \sum_{v \in V(P)} \LPsol(v) \geq 1.$$
The equation above is a linear constraint on values of $\LPsol$
that defines the polytope of fractional solutions. Clearly, a feasible solution
to the \multicut instance $(G,\terms)$ induces a fractional solution that assigns
$1$ to the vertices of the solution. 
Furthermore, the separation oracle of 
the aforementioned polytope is straightforward to implement by a shortest-path algorithm such as Dijkstra's.
Consequently, using the ellipsoid method, in polynomial time
we can find a fractional solution of total value arbitrarily close to minimum possible.

Very similarly we define a fractional solution to \ChVD on an undirected graph $G$:
it is a function $\LPsol:V(G) \to [0,+\infty)$ such that for every hole $H$
in $G$ we have
$$\LPsol(H) := \sum_{v \in V(H)} \LPsol(v) \geq 1.$$
Again, the equation above defines a polytope containing all feasible
solutions to \ChVD on $G$, and there is a simple separation oracle
for this polytope.\footnote{Iterate over all possibilities of three consecutive vertices $v_1,v_2,v_3$ on the hole $H$, and compute the shortest path from $v_1$ to $v_3$
in $G-N[v_2]$ with respect to the distances defined by the current values of $\LPsol$.}

For a fractional solution $\LPsol$ (either to \multicut or \ChVD), we
use the notation $\LPsol(X)$ for a vertex set $X \subseteq V(G)$ for
$\sum_{v \in X} \LPsol(v)$. Furthermore, we use $|\LPsol|$ for $\LPsol(V(G))$
and $\LPsol(H) = \LPsol(V(H))$ for a subgraph $H$ of $G$ (e.g., a path).

\multicut is NP-hard even in undirected graphs with three terminal
pairs~\cite{DahlhausJPSY94}. 
The best known approximation ratio for undirected graphs is $\Oh(\log n)$~\cite{GargVY96}, but for general directed graphs only $\Oh(\sqrt{n})$-approximation
is known~\cite{Gupta03}.
We will need the following result from the work of Gupta~\cite{Gupta03}.
\begin{theorem}[\cite{Gupta03}]\label{thm:gupta}
Given a \multicut instance $(G,\terms)$ and a fractional solution $\LPsol$,
one can compute in polynomial time an integral solution of size $\Oh(|\LPsol|^2)$.
\end{theorem}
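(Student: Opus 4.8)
The plan is to prove this by a greedy, iterated application of vertex-weighted \emph{region growing} with respect to the metric induced by the fractional solution; write $W := |\LPsol|$ throughout, and let $d_H(u,v)$ denote the minimum of $\sum_{w \in V(P)} \LPsol(w)$ over paths $P$ from $u$ to $v$ in a digraph $H$ (so $\LPsol$-feasibility says $d_G(s,t) \ge 1$ for every $(s,t) \in \terms$). First I would preprocess: put every vertex $v$ with $\LPsol(v) \ge \tfrac14$ into the solution. There are only $\Oh(W)$ such vertices, and after removing them every remaining vertex carries little $\LPsol$-mass, so mass accumulates ``slowly'' along shortest paths. Consequently, for any still-connected terminal pair $(s,t)$ the vertices within $\LPsol$-distance $<\tfrac12$ of $s$ already carry $\LPsol$-mass at least $\tfrac14$ (walk along a shortest $s$-$t$ path, which has $\LPsol$-length $\ge 1$, until the accumulated mass first reaches $\tfrac14$; since each vertex is light, the accumulated mass at that point is still $<\tfrac12$, and every vertex visited so far lies within distance $<\tfrac12$ of $s$).

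The core is a single \textbf{cut step}. Given $(s,t)$ still connected in the current digraph $H$, consider for a threshold $\theta \in (0,1)$ the outgoing frontier $C(\theta) := \{v \in V(H) : d_H(s,v) - \LPsol(v) < \theta \le d_H(s,v)\}$. A short induction along a putative surviving path shows that after deleting $C(\theta)$ the vertex $s$ can only reach vertices at $\LPsol$-distance strictly less than $\theta$; in particular it cannot reach $t$, since $d_H(s,t) \ge 1 \ge \theta$. This is exactly the place where the vertex weights on the endpoints have to be handled with care. Since a fixed $v$ lies in $C(\theta)$ only for $\theta$ in an interval of length $\LPsol(v)$, we get $\int_0^{1} |C(\theta)|\, d\theta \le W$, so by averaging there is a ``good'' set of thresholds of measure $\ge \tfrac34$ on which $|C(\theta)| = \Oh(W)$; and since the enclosed ball $B(\theta^*) := \{v : d_H(s,v) < \theta^*\}$ has $\LPsol$-mass $\ge \tfrac14$ once $\theta^* \ge$ some $\theta_0 \le \tfrac12$ (by the preprocessing), we can pick a threshold $\theta^*$ that is both good and satisfies $\theta^* \ge \theta_0$. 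We delete $C(\theta^*)$ and iterate on what remains.

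Correctness is then immediate: each deleted $C(\theta^*)$ permanently disconnects the pair it was created for, and enlarging the deletion set never reconnects a pair, so when the loop halts no terminal pair is connected. For the size bound, each iteration deletes $\Oh(W)$ vertices, so it suffices to bound the number of iterations by $\Oh(W)$, and since each enclosed ball $B(\theta^*)$ carries $\LPsol$-mass $\ge \tfrac14$, it is enough to argue that the balls of successive iterations can be kept vertex-disjoint (total $\LPsol$-mass being at most $W$). \emph{This is the main obstacle}, and where the directed setting really bites: the outgoing frontier $C(\theta^*)$ controls only the edges \emph{leaving} $B(\theta^*)$, not those entering it, so one cannot simply ``peel off'' the ball as in the undirected case; instead one has to recurse only on the outside piece and argue — Gupta's actual contribution — that this still leaves the processed balls essentially disjoint and each freshly processed source's ball contributes a genuinely new $\Omega(1)$ chunk of $\LPsol$-mass. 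Granting this, there are $\Oh(W)$ iterations, for a total solution of size $\Oh(W) + \Oh(W)\cdot \Oh(W) = \Oh(W^2) = \Oh(|\LPsol|^2)$. Polynomial running time is routine: $\theta \mapsto |C(\theta)|$ is piecewise constant with polynomially many breakpoints (the values $d_H(s,v)$ and $d_H(s,v)-\LPsol(v)$), the distances are computed by Dijkstra, and the loop runs $\Oh(W) \le \Oh(n)$ times.
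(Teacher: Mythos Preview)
The paper does not prove Theorem~\ref{thm:gupta}; it is quoted as a black box from~\cite{Gupta03}, with only the remark that the edge-deletion statement there transfers to the node-deletion setting via the standard reduction. So there is no argument in the paper to compare yours against.

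As for your sketch itself: the preprocessing, the definition of the frontier $C(\theta)$, the averaging argument giving a threshold with $|C(\theta)| = \Oh(W)$, and the $\tfrac14$-mass lower bound on the enclosed ball are all correct. But you explicitly flag the decisive step and then do not carry it out: you write ``Granting this'' for the claim that the balls of successive iterations can be kept essentially disjoint in the directed setting. That is not a side condition --- it is the whole point. Everything you have actually proved yields $\Oh(W)$ deleted vertices and $\Omega(1)$ $\LPsol$-mass per iteration, but without disjointness the per-iteration mass does not bound the number of iterations, and the argument as written is consistent with arbitrarily many iterations and hence no size bound at all. Supplying that missing piece (handling the asymmetry that $C(\theta^*)$ only blocks edges leaving the ball, so the ball cannot simply be discarded) is precisely the content of Gupta's paper, so what you have is an outline of the approach rather than a proof.
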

Although~\cite{Gupta03} tackles the edge-deletion variant of \multicut,
note that in directed graphs the edge- and node-deletion variants are 
equivalent via standard reductions.

Finally, we will frequently need the following observation
\begin{observation}[{\cite[Theorem 1]{GilbertRE84}}]\label{obs:central-clique}
For every graph $G$, its tree decomposition $(T,\beta)$, and a function
$f:V(G) \to \mathbb{R}_{\geq 0}$,
there exists a bag $t \in V(T)$ such that every connected component $C$
of $G-\beta(t)$ satisfies $f(V(C)) \leq f(V(G))/2$.
In particular, for every chordal graph $G$ 
and a function  $f:V(G) \to \mathbb{R}_{\geq 0}$
there exists a maximal clique $X$ in $G$
such that every connected component $C$ of $G-X$
satisfies $f(V(C)) \leq f(V(G))/2$.
\end{observation}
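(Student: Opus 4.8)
The plan is to prove the first, more general statement (a balanced separator coming from a single bag of an arbitrary tree decomposition) and then read off the chordal ``in particular'' as an immediate corollary via a clique tree. Throughout, for $U \subseteq V(G)$ I write $f(U) = \sum_{v \in U} f(v)$; since $f$ is nonnegative, $f$ is monotone under inclusion and subadditive, which is all the arithmetic I will need.

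First I would attach a weight to each edge of $T$. For an edge $e = pq \in E(T)$, deleting $e$ splits $T$ into two subtrees $T^e_p$ and $T^e_q$ (containing $p$ and $q$ respectively); let $V^e_p = \bigcup_{r \in V(T^e_p)} \beta(r)$ and $V^e_q = \bigcup_{r \in V(T^e_q)} \beta(r)$. The structural fact I would establish is that $V^e_p \cup V^e_q = V(G)$ while $V^e_p \cap V^e_q = \beta(p) \cap \beta(q) = \adh(e)$ --- the second equality being the usual connectivity argument (a vertex appearing on both sides has its subtree $\binv(\cdot)$ crossing the edge $e$). Hence $f(V^e_p \setminus \adh(e)) + f(V^e_q \setminus \adh(e)) = f(V(G) \setminus \adh(e)) \le f(V(G))$, so \emph{at most one} of the two sides of $e$ can have $f$-weight (after removing $\adh(e)$) strictly exceeding $f(V(G))/2$. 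I orient $e$ towards that side if it exists, and leave $e$ unoriented otherwise; by the inequality just noted this orientation is well defined.

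Next I would run the standard ``walk to a sink'' argument: start at an arbitrary node of $T$ and repeatedly follow an outgoing oriented edge. Because each edge is oriented in at most one direction and $T$ is acyclic, this walk never repeats a node, so it terminates at some node $t$ with no outgoing oriented edge. By the orientation rule, for every edge $e = tq$ incident to $t$ the far side satisfies $f(V^e_q \setminus \adh(e)) \le f(V(G))/2$. Finally I connect this to $G$: given a connected component $C$ of $G - \beta(t)$, we have $V(C) \cap \beta(t) = \emptyset$ and $G[V(C)]$ is connected, so Observation~\ref{observation:connectedsets:onesubtree} puts every bag meeting $V(C)$ inside one subtree $T^e_q$ of $T - t$ (for the relevant incident edge $e = tq$); thus $V(C) \subseteq V^e_q$, and since $V(C)$ avoids $\beta(t) \supseteq \adh(e)$ we get $V(C) \subseteq V^e_q \setminus \adh(e)$, whence $f(V(C)) \le f(V(G))/2$ by monotonicity. (If $T$ is a single node the statement is vacuous.) For the chordal corollary, apply the above to a clique tree $(T,\beta)$ of $G$, which exists by~\cite[Theorem 1.2.3]{BrandstadtLS99}; then $X := \beta(t)$ is a maximal clique with the required property.

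The one genuinely delicate point is the bookkeeping with $\adh(e)$: one must subtract the adhesion from each side before comparing weights, because otherwise a heavy shared adhesion could make \emph{both} sides exceed $f(V(G))/2$ and the ``at most one heavy side'' claim would collapse. This is precisely what keeps the sink argument alive, and it dovetails with the last step since any component of $G - \beta(t)$ is automatically disjoint from $\adh(e) \subseteq \beta(t)$. Everything else is routine invocation of the connectivity axiom of tree decompositions and of Observation~\ref{observation:connectedsets:onesubtree}.
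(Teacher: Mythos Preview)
Your argument is correct and is essentially the standard proof of this folklore/cited fact. Note that the paper does not supply its own proof of this observation: it is stated with a citation to~\cite{GilbertRE84} and used as a black box, so there is nothing to compare against beyond confirming that your edge-orientation/sink-walk argument is the classical one (which it is). The care you take with subtracting the adhesion before comparing weights is exactly the point that makes the orientation well defined, and your invocation of Observation~\ref{observation:connectedsets:onesubtree} to confine each component to a single subtree of $T-t$ is the right final step.
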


\subsection{Adding a clique to a chordal graph}\label{s:apx:clique}

In this section we develop an approximation algorithm for the following special
case.

\begin{lemma}\label{lem:apx:clique}
Assume we are given a graph $G$,
a fractional solution $\LPsol$ to \ChVD on $G$,
and a partition $V(G) = A \uplus B$, such 
that $G[A]$ is chordal and $G[B]$ is complete.
Then one can in polynomial time find an integral solution to \ChVD on $G$
of size $\Oh(|\LPsol|^2 \log |\LPsol|)$.
\end{lemma}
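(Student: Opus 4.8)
The plan is to follow a recursive balanced-separator strategy driven by the fractional solution $\LPsol$: use Observation~\ref{obs:central-clique} to split the chordal part, reduce "hit all holes through the chosen separator'' to \multicut on a downward orientation of $G$ and apply Theorem~\ref{thm:gupta}, then recurse on the pieces. I begin with a normalisation: put into the solution every vertex $v$ with $\LPsol(v)\ge 1/8$; there are at most $8|\LPsol|$ such vertices, so this costs only $\Oh(|\LPsol|)$, and after deleting them and restricting $\LPsol$ I may assume $\LPsol(v)<1/8$ for every $v$. The point is that a hole $H$ of $G$ uses at most two vertices of the clique $B$, and at most two vertices of any clique $L\subseteq A$ that I will pick below; with the weight bound these contribute less than $1/2$ of $\LPsol(H)\ge 1$, so more than $1/2$ of the weight of every hole lies on $A\setminus L$. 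I may also assume $G$ is connected, recursing on components otherwise ($\LPsol$ splits accordingly).

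For one recursion step, apply Observation~\ref{obs:central-clique} to the chordal graph $G[A]$ with weight function $v\mapsto\LPsol(v)$ to obtain a maximal clique $L$ of $G[A]$ such that every connected component of $G[A]-L$ has $\LPsol$-weight at most $\LPsol(A)/2\le|\LPsol|/2$. The crucial sub-task is to hit all holes of $G$ passing through $L$ using only $\Oh(|\LPsol|^2)$ vertices. For this I fix a clique tree of $G[A]$ rooted at a bag containing $L$, extend it to a downward orientation $\Gdown$ of all of $G$ (orienting each edge from its endpoint represented closer to the root), and show that hitting all holes through $L$ is equivalent to solving a \multicut instance $(\Gdown,\terms)$ whose terminal pairs encode the two descending branches that a hole through $L$ makes below $L$: a hole through $L$, with its (one or two, consecutive) $L$-vertices removed, yields a directed $s$--$t$ path for some $(s,t)\in\terms$ in $\Gdown$ restricted to the undeleted vertices, and conversely any feasible multicut solution destroys every hole through $L$. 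Since such an $s$--$t$ path consists of vertices of a hole through $L$ minus that hole's $L$-vertices, the normalisation makes its weight under a fixed constant multiple of $\LPsol$ (restricted to $A$) exceed $1$; hence that scaled restriction is a fractional \multicut solution of value $\Oh(\LPsol(A))=\Oh(|\LPsol|)$, and Theorem~\ref{thm:gupta} produces an integral hitting set $Y$ with $|Y|=\Oh(\LPsol(A)^2)$. I add $Y$ to the solution.

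After deleting $Y$ the graph has no hole through $L$, and every remaining hole uses at most two vertices of the clique $B$; since those vertices must be consecutive on the hole, the rest of the hole is a single path confined to one connected component $C$ of $G[A\setminus(L\cup Y)]$ (this is a short case analysis in the spirit of Observation~\ref{obs:find-hole}). So I recurse on each $(G[C\cup(B\setminus Y)],\,C,\,B\setminus Y)$ with the restricted fractional solution and return the union of all solutions found together with $Y$ and the normalisation set. The chordal part more than halves in $\LPsol$-weight at each level, and, since all vertex weights are below $1/8$, a component $C$ that still carries a hole has $\LPsol(C)>1/2$; hence the recursion has depth $\Oh(\log|\LPsol|)$. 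At a fixed depth the chordal parts $C$ are pairwise disjoint subsets of the original $A$, so $\sum_C\LPsol(C)\le|\LPsol|$, and the sets produced at that depth have total size $\sum_C\Oh(\LPsol(C)^2)=\Oh\bigl((\sum_C\LPsol(C))^2\bigr)=\Oh(|\LPsol|^2)$; summing over the $\Oh(\log|\LPsol|)$ levels and adding the $\Oh(|\LPsol|)$ normalisation vertices yields $\Oh(|\LPsol|^2\log|\LPsol|)$. The main obstacle is the \multicut reduction of the second paragraph: designing the terminal set $\terms$ so that directed $s$--$t$ paths in $\Gdown$ correspond exactly, in both directions, to holes through $L$ with their $L$-vertices stripped off — in particular handling the one- and two-vertex-of-$L$ cases and the placement of $B$-vertices in the orientation — after which feasibility of the scaled restriction of $\LPsol$ is routine given the normalisation.
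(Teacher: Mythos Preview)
Your high-level plan---normalise, pick a balanced clique $L$ via Observation~\ref{obs:central-clique}, reduce ``holes through $L$'' to \multicut, apply Theorem~\ref{thm:gupta}, recurse---matches the paper, and your recursion and charging analysis are fine. The gap is exactly where you flag the ``main obstacle'': the \multicut reduction you sketch does not work as stated, and the paper's reduction is structurally different from what you propose.

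First, a hole through $L$ with its $L$-vertices removed is \emph{not} a single directed path in $\Gdown$. Such a hole consists of two arms descending from $L$ down to $B$; in the root-oriented $\Gdown$ both arms are directed \emph{away} from $L$, so after removing the $L$-vertices you get two downward paths joined at (one or two) $B$-vertices, not one directed path. Orienting edges incident to $B$ does not help: $B$ lies outside the clique tree of $G[A]$, so ``endpoint closer to the root'' is undefined there. The paper's $\Gdown$ is an orientation of $G[A]$ only, and the terminal pairs live in $A\times A$.

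Second, and more seriously, your argument that a constant multiple of $\LPsol$ is a fractional \multicut solution is circular. You write that ``such an $s$--$t$ path consists of vertices of a hole through $L$'', but that is only the \emph{one} $s$--$t$ path that came from the hole. A fractional \multicut solution must lower-bound the weight of \emph{every} directed $s$--$t$ path in $\Gdown$, and an arbitrary such path need not lie inside any hole. No natural choice of $\terms$ based on ``endpoints of the two arms'' gives this for free, so feasibility of the scaled $\LPsol$ is not routine.

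The paper sidesteps both issues by defining $\terms = \{(u,v)\in A\times A : \dist_{\Gdown,\LPsol}(u,v)\ge 1/10\}$, which makes $10\LPsol$ a fractional \multicut solution \emph{by definition}; no correspondence with holes is needed for that direction. The work is in the other direction, showing that an integral \multicut solution $X$ hits every hole through $L$, and this uses a \emph{rerouting} argument your proposal lacks. If a hole $H$ through $L$ survives in $G-X$, each of its two arms (with endpoints in $L$ and $B$ stripped) is a directed path in $\Gdown-X$; since $X$ is a multicut for $\terms$, any two vertices on the same arm are at $\LPsol$-distance $<1/10$ in $\Gdown$. One replaces the interior of each arm by such a short detour and checks, via the clique-tree structure, that the detour avoids the neighbourhood of the surviving $L$-vertex. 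The resulting closed walk contains a hole of total $\LPsol$-weight below $1$, contradicting that $\LPsol$ is a fractional \ChVD solution. Without this rerouting idea (or a substitute playing the same role), the reduction does not close.
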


As an intermediate step towards Lemma~\ref{lem:apx:clique}, we show the following.
\begin{lemma}\label{lem:apx:clique2}
Let $G$, $\LPsol$, $A$, and $B$ be as in the statement of Lemma~\ref{lem:apx:clique}. Furthermore, let $L$ be a maximal clique of $G[A]$. 
Furthermore, assume that $\LPsol(v) = 0$ for
every $v \in B$ and $\LPsol(v) < 1/10$ for every $v \in A$.
Then one can in polynomial time find a set $X \subseteq V(G)$
of size at most $\Oh(|\LPsol|^2)$ such that
$G-X$ does not contain a hole passing through $L$.
\end{lemma}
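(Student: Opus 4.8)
The plan is to reduce the task to a \multicut instance on a suitable orientation of $G$ and then invoke Theorem~\ref{thm:gupta}. First I would fix a clique tree $(T,\beta)$ of the chordal graph $G[A]$ and root it at a node $p$ with $\beta(p)=L$ (possible since the maximal cliques of $G[A]$ are exactly the bags of a clique tree), which gives a well-defined $\top$-function on $A$; I extend it to $B$ by declaring every vertex of $B$ to lie ``at the bottom'' (below all bags of $T$), and orient $G$ downwards to obtain $\Gdown$, breaking ties arbitrarily inside $L$ and inside the clique $B$ and orienting every $A$--$B$ edge from $A$ to $B$. The governing structural fact is that along this rooted clique tree an induced path of $G[A]$ is unimodal: by Proposition~\ref{proposition:inducedpath:on:minimalpath} its internal vertices are represented on $\MPtd$ of its endpoints, so the path ascends to a unique topmost vertex and then descends. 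Consequently a hole $H$ through $L$ — which must use exactly one or two vertices of the clique $B$, the rest being an induced path of $G[A]$ that meets $L$ — decomposes in $\Gdown$ into two internally vertex-disjoint descending paths emanating from its topmost vertex $z_H\in L$ and terminating at the one or two $B$-vertices used by $H$. I would encode this by a \multicut instance $(\Gdown',\terms)$, where $\Gdown'$ is $\Gdown$ enriched with the standard in/out vertex split (to move between node- and edge-\multicut) and with crossing gadgetry at the clique $B$ that closes the two descending half-paths of a hole into a single directed $s$--$t$ path, the terminal pairs $\terms$ ranging over the candidate vertices at which a hole can touch $L$ and $B$.

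Next I would establish the two directions of the reduction. The routine direction is that an integral \multicut solution $X'$ projects to $X\subseteq V(G)$ with $|X|\le|X'|$ such that $G-X$ has no hole through $L$: any such hole would, by the decomposition above, leave an uncut $s$--$t$ path for one of the designated terminal pairs in $\Gdown'-X'$. The technical heart is the other direction: that a suitably scaled copy of $\LPsol$, lifted to $\Gdown'$ by copying a vertex's value onto all of its images and assigning value $0$ to all gadget and $B$-vertices (legitimate because $\LPsol$ vanishes on $B$), is a feasible fractional \multicut solution of total value $\Oh(|\LPsol|)$. For this, I would take any directed $s$--$t$ path of $\Gdown'$ for a terminal pair $(s,t)$, read off the corresponding closed walk of $G$ that passes through $L$ and through at most two vertices of $B$, and extract from it an actual hole of $G$ contained in that walk, by repeatedly shortcutting to induced subpaths via Observation~\ref{observation:walk:implies:inducedpath} and applying Observation~\ref{obs:find-hole} at a $B$-vertex; since $\LPsol$ is feasible for holes and gives weight $0$ to $B$, the $\LPsol$-weight of the walk, hence of the path, is at least $1$. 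The hypothesis $\LPsol(v)<1/10$ for $v\in A$ enters precisely here: since it forces every hole of $G$ to have at least eleven vertices in $A$, it rules out the degenerate short configurations, so no terminal pair is joined by a ``fake'', $\LPsol$-cheap path.

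Finally I would apply Theorem~\ref{thm:gupta} to $(\Gdown',\terms)$ with this fractional solution, obtaining an integral \multicut solution of size $\Oh(|\LPsol|^2)$, and project it back to the desired set $X$ with $|X|=\Oh(|\LPsol|^2)$ and $G-X$ free of holes through $L$; all steps (finding the clique tree, building $\Gdown'$ and $\terms$, producing the explicit fractional solution from $\LPsol$, and running Gupta's algorithm) are polynomial. The step I expect to be the main obstacle is the exact design of $\Gdown'$ and $\terms$: it must simultaneously make every hole through $L$ correspond to a cuttable terminal pair and make every directed $s$--$t$ path rich enough that the hole-extraction argument lower-bounds its $\LPsol$-weight by $1$. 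Reconciling the \emph{closed} nature of holes with the \emph{open} ($s$-to-$t$) nature of \multicut paths is the delicate point, and it is what forces the vertex splitting, the $B$-crossing gadget, and the use of the $\LPsol(v)<1/10$ hypothesis.
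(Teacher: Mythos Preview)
Your high-level plan matches the paper's---orient $G[A]$ downwards from a clique-tree root at $L$, reduce to directed \multicut, and apply Theorem~\ref{thm:gupta}---but the reduction you sketch has a genuine gap at exactly the point you flag as the obstacle, and the paper resolves it by a different mechanism than the one you propose.

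You want to choose $\terms$ so that every hole through $L$ yields an $s$--$t$ path in $\Gdown'$, and then certify that (a constant multiple of) $\LPsol$ is fractionally feasible by extracting a hole from an \emph{arbitrary} $s$--$t$ path. That extraction step does not go through. Once you add a $B$-crossing gadget to close the two descending half-paths into one directed path, you also create many $s$--$t$ paths whose projection to $G$ is merely a closed walk through $L$ and $B$; the two half-paths may share vertices or be adjacent along their interiors, and then $G$ restricted to the walk can be chordal (the $A$-part is an induced subgraph of the chordal $G[A]$, and the one or two $B$-vertices may fail the hypothesis of Observation~\ref{obs:find-hole} because the walk need not avoid their closed neighborhood). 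In that case there is no hole to extract and no lower bound on the $\LPsol$-weight of the path, so your fractional feasibility claim fails. The $\LPsol(v)<1/10$ hypothesis does not help here: it bounds the weight of any single vertex, but says nothing about whether a given closed walk contains a hole.

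The paper avoids this by letting the terminal set depend on $\LPsol$ rather than on the combinatorics of holes. It works entirely inside $\Gdown$ on $A$ (no $B$, no gadgets), and sets
\[
\terms=\{(u,v)\in A\times A:\ \dist_{\Gdown,\LPsol}(u,v)\ge 1/10\},
\]
so that $10\LPsol$ is a feasible fractional \multicut solution \emph{by definition}. All the work moves to the other direction: if a hole $H$ through $L$ survives in $G-X$, one shows each of its two $L$--$B$ half-paths is monotone in $\Gdown$; since $X$ is an integral \multicut for $\terms$, the surviving interior endpoints of each half-path are at $\LPsol$-distance below $1/10$, so one replaces the interiors by short detours and reassembles a hole $H'$ with $\LPsol(H')<1$, contradicting that $\LPsol$ is a fractional \ChVD solution. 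The hypotheses $\LPsol|_B\equiv 0$ and $\LPsol(v)<1/10$ on $A$ are used only in this reconstruction, to bound the contribution of the at most four retained endpoint vertices. This ``terminals $=$ far pairs'' trick is the missing idea in your plan.
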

\begin{proof}
We fix a clique tree $(T,\beta)$ of $G[A]$, and root it in a node $\root$
with $\beta(\root)=L$.
We define a partial order $\preceq_0$ on $A$ as $v \prec_0 u$ if $\top(v)$
is an ancestor of $\top(u)$, and pick $\preceq$ to be an arbitrary total
order extending $\preceq_0$. 
We obtain an orientation $\Gdown$ of $G[A]$
by orienting the edges of $G$ so that $\preceq$ becomes a topological
ordering of $\Gdown$ (i.e., we orient $uv \in E(G)$ as $(u,v)$ if $u \preceq v$).
In $\Gdown$, we define a distance measure $\dist_{\Gdown,\LPsol}$ as distances
with respect to the cost function $\LPsol$; that is, $\dist_{\Gdown,\LPsol}(u,v)$
is the minimum value of $\LPsol(P)$ over all $uv$-paths $P$ in $\Gdown$.
Finally, we define
$$\terms = \{(u,v) \in A \times A : \dist_{\Gdown,\LPsol}(u,v) \geq 1/10\}.$$

Clearly, $10\LPsol$ is a fractional solution to the \multicut instance $(\Gdown,\terms)$.
By applying the algorithm of Theorem~\ref{thm:gupta}, we obtain
a set $X \subseteq V(G)$ of size $\Oh(|\LPsol|^2)$ that is a solution
to \multicut on $(\Gdown,\terms)$. To finish the proof of the lemma,
it suffices to show that $X$ is also good for our purposes, that is,
$G-X$ does not contain any hole passing through $L$.

Assume the contrary, and let $H$ be any such hole in $G-X$.
Since $G[A]$ is chordal, $H$ contains at least one vertex of $B$.
Since both $L$ and $B$ are cliques in $G$, $H$ contains at most two vertices
of each of these cliques.
Consequently, there exists a unique decomposition of $H$ into two paths 
$P^1,P^2$, such that every path $P^i$ has an endpoint $x^i \in L$, and endpoint
$y^i \in B$, and all internal vertices in $A \setminus L$. 
We have $V(P^1) \cup V(P^2) = V(H)$.
Note that it is possible that $x^1=x^2$ or $y^1=y^2$.

Let $u,v$ be two consecutive internal vertices on $P^i$, such that
$u$ is closer to $x^i$ on $P^i$ than $v$. 
Observe that $\top(u)$ is an ancestor of $\top(v)$, as otherwise 
$v$ is adjacent in~$G$ to the predecessor of $u$ on $P^i$, a contradiction
to the assumption that $H$ is a hole. Consequently, $P^i-\{x^i,y^i\}$ 
is a directed path in $\Gdown$.

Let $u^i_j$ be the $j$-th internal vertex on $P^i$ starting from $x^i$,
that is, the consecutive vertices on $P^i$ are $x^i,u^i_1,u^i_2,\ldots,u^i_{\ell^i}, y^i$, where $\ell^i = |P^i|-1$.

%\begin{claim}
%Here is an alternative way to finish the proof from this point on; I believe it is simpler. I think the key difference to the existing approach is that you don't have to care whether the two ``replacement paths'' touch each other or not. It is enough that they have small weight and form a connection between the predecessor and successor of~$x^1$ that avoids~$N(x^1)$. \todo{Do you think this is a worthwhile simplification? It seems the requirement that LP assigns 0 to everything in the clique $B$ is not needed here.}
%\end{claim}
%\begin{claimproof}
Since~$X$ is a multicut for terminals~$\terms$ and~$P^i - \{x^i, y^i\}$ is a directed path in~$\Gdown - X$ for~$i \in \{1,2\}$, for any~$1 \leq a < b \leq \ell^i$ the pair~$(u^i_a, u^i_b)$ is not a terminal pair. By definition of~$\terms$, this implies that there is a path~$P$ in~$\Gdown$ from~$u^i_a$ to~$u^i_b$ with~$\LPsol(P) < 1/10$. We define two alternative paths~$\hat{P}^i$ to derive a contradiction. If~$P^i$ has at most two internal vertices, then~$\hat{P}^i := P^i$. Otherwise, define~$\hat{P}^i$ as~$(x^i, u^i_1, Q^i, y^i)$ where~$Q^i$ is a path in~$\Gdown$ from~$u^i_2$ to~$u^i_{\ell^i}$ with~$\LPsol(Q^i) < 1/10$. No vertex on~$Q^i$ is adjacent to~$x^1$ in~$G$: since~$u^i_2$ is not adjacent to~$x^1$ on the hole~$H$, we know~$u^i_2$ and~$x^1$ do not occur in a common bag. This implies that~$x^1$, which is in the root bag, does not belong to~$\top(u^i_2)$ and does not occur in the subtree rooted there; all vertices that can be reached from~$u^i_2$ in~$\Gdown$ have root bags that are descendants of~$\top(u^i_2)$. 

The concatenation of~$\hat{P}^1$ with~$\hat{P}^2$ forms a (possibly nonsimple) walk in~$G$ that visits~$x^1$ exactly once. Consider the walk~$W$ obtained by removing~$x^1$ from this cycle. It connects the successor of~$x^1$ on~$H$ to the predecessor of~$x^1$ on~$H$. All internal vertices of the walk are nonadjacent to~$x^1$, which follows from our claims on~$Q^i$ and the fact that~$H$ is a hole. Hence we can shortcut~$W$ to an induced path~$P$ connecting the predecessor and successor of~$x^i$ on~$H$, to obtain a hole~$H'$ in~$G$. Since~$\LPsol(v) < 1/10$ for each vertex~$v \in G$, and~$\LPsol(Q^i) < 1/10$, it follows that~$\LPsol(\hat{P^i}) < 3/10+1/10=4/10$. Hence~$\LPsol(H') < 8/10 < 1$, a contradiction to the assumption that $\LPsol$ is a fractional solution to \ChVD on $G$.

\end{proof}

\begin{proof}[Proof of Lemma~\ref{lem:apx:clique}]
First, we take $X_0 = \{v \in V(G) : \LPsol(v) \geq 1/20\}$, put $X_0$
into the solution we are constructing, and delete $X_0$ from the graph.
Clearly, $|X_0| \leq 20|\LPsol|$; from this point, we can assume that
$\LPsol(v) < 1/20$ for every $v \in V(G)$.

Second, we modify $\LPsol(v)$ as follows: we put $\LPsol_2(v) := 2\LPsol(v)$
for every $v \in A$ and $\LPsol_2(v) := 0$ for every $v \in B$.
Since $\LPsol(v) < 1/20$ for every $v \in V(G)$, the length of the shortest
hole in $G$ is at least $21$, while at most two vertices of such a hole can lie
in the clique $G[B]$. Consequently, $\LPsol_2$ is a fractional solution
to \ChVD on $G$ with the additional properties that $\LPsol_2(v) = 0$
for every $v \in B$ and $\LPsol_2(v) < 1/10$ for every $v \in A$.

We now perform the following iterative procedure.
Let $\mathcal{C}$ be the set of connected components of $G[A]$.
We pick a component $C \in \mathcal{C}$ with largest value $\LPsol_2(C)$.
If $\LPsol_2(C) < 1$, we infer that $G[C \cup B]$ is chordal; by the choice 
of $C$, this implies that the entire graph $G$ is chordal, and we are done.
Otherwise, 
let $L$ be a maximal clique of $G[C]$ such that $\LPsol_2(D) \leq \LPsol_2(C)/2$
for every connected component of $G[C]-L$ (Observation~\ref{obs:central-clique}).
We apply the algorithm of Lemma~\ref{lem:apx:clique2} to
$G[C \cup B]$ with partition $C \uplus B$, the clique $L$, and 
a fractional solution $\LPsol_2|_C$, obtaining a solution $X_C$
of size $\Oh(\LPsol_2(C)^2)$. We put $X_C$ into the solution we are constructing,
and delete it from the graph.
At this point, no hole in $G$ passes through $L$; we delete $L$ from the graph as well.

Consider a vertex $v \in C \setminus (L \cup X_C)$ after this step.
Let $D$ be a connected component of $G[A]$ containing $v$. Since we deleted
the entire clique $L$, we have $\LPsol_2(D) \leq \LPsol_2(C)/2$. 
Consequently, every vertex $v$ can be contained in a component $C$ chosen
by the algorithm at most $\lceil \log (1+|\LPsol_2|) \rceil$ times.
Whenever an algorithm considers a component $C$, we charge 
a cost of $\Oh(\LPsol_2(v) |\LPsol_2|)$ to every vertex $v \in C$.
Since the set $X_C$ is of size $\Oh(\LPsol_2(C)^2) \leq \Oh(\LPsol_2(C) |\LPsol_2|)$,the total size of all sets $X_C$ in the course of the algorithm is
bounded by the total charge. On the other hand, by the previous argument,
every vertex $v$ is charged a value of $\Oh(\LPsol_2(v) |\LPsol_2| \log |\LPsol_2|)$.
Consequently, the total size of all sets $X_C$ is $\Oh(|\LPsol_2|^2 \log |\LPsol_2|)$.
%\todo{True, but a little brief. How about relating these bounds to the recursion tree of the algorithm: its depth is logarithmic by the first claim, and for every level of the tree the sum of the solution sizes obtained there is $\Oh(|\LPsol_2|^2)$ by the second claim?}
\end{proof}

\subsection{The approximation algorithm}\label{s:apx:final}

Let $(G,k)$ be a \ChVD instance.
Our main goal in this section is to show that, after deleting a small
number of vertices from $G$, we can decompose $V(G)$ into a chordal graph
and a small number of cliques. Then, we can use Lemma~\ref{lem:apx:clique}
to add these cliques to the chordal graph one by one.

We start with a bit of preprocessing.
As discussed before, it is straightforward to implement a separation
oracle for the LP relaxation of \ChVD. Using the ellipsoid method,
we can compute a fractional solution $\LPsol$ to \ChVD on $G$ of cost
arbitrarily close to the optimum. For our purposes, a $2$-approximation $\LPsol$
suffices. If $|\LPsol| > 2k$, we conclude that $(G,k)$ is a no-instance.
Otherwise, we proceed with $\LPsol$ further.
We greedily take into the constructed solution and delete from $G$
any vertex $v$ with
$\LPsol(v) \geq 1/4$; note that there are at most $4|\LPsol| \leq 8k$
such vertices. 
Thus, from this point we can assume that $\LPsol(v) < 1/4$ for every $v \in V(G)$;
in particular, $G$ is $C_4$-free.

Our main partitioning tool is an $\Oh(\sqrt{\log \textrm{opt}})$-approximation algorithm for balanced vertex cut
by Feige, Hajiaghayi, and Lee~\cite{FeigeHL08}.
Given an $n$-vertex graph $G$, a \emph{balanced vertex cut} is a set $X \subseteq V(G)$ such that every connected component of $G-X$ has at most $2n/3$ vertices.
We use the approximation algorithm of~\cite{FeigeHL08} in
the following lemma.
\begin{lemma}\label{lem:apx:cut}
Given a \ChVD instance $(G,k)$, we can in polynomial time find
a set $Z \subseteq V(G)$ and a set $K \subseteq Z$ such that every connected
component of $G-Z$ has size at most $3n/4$, 
$G[K]$ is complete, and $|Z \setminus K| = \Oh(k \sqrt{\log k})$, or correctly
conclude that $(G,k)$ is a no-instance to \ChVD.
\end{lemma}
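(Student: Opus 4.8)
My plan is to combine two ingredients: the structural observation that a yes-instance of \ChVD has a balanced separator that is a clique plus few vertices, and the approximation algorithm for balanced vertex cut of Feige, Hajiaghayi, and Lee~\cite{FeigeHL08}. First I would establish the structural fact. Suppose $(G,k)$ is a yes-instance, so there is a set $X^\star$ with $|X^\star|\le k$ such that $G-X^\star$ is chordal. If $G$ is connected (otherwise argue componentwise, or simply note the statement is about components of $G-Z$ anyway), apply Observation~\ref{obs:central-clique} to $G-X^\star$ with the counting measure $f\equiv 1$ on the larger part: there is a maximal clique $K^\star$ of $G-X^\star$ such that every connected component of $(G-X^\star)-K^\star$ has at most $(n-|X^\star|)/2 \le n/2$ vertices. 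Then $Z^\star := X^\star \cup K^\star$ is a balanced separator of $G$ in a strong sense (every component of $G-Z^\star$ has at most $n/2 \le 2n/3$ vertices), $G[K^\star]$ is complete, and $|Z^\star \setminus K^\star| \le |X^\star| \le k$.

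Next I would turn this into an algorithm. Since $G$ is $C_4$-free (guaranteed by the preprocessing that precedes this lemma in Section~\ref{s:apx:final}), $G$ has $\Oh(n^2)$ maximal cliques by~\cite[Proposition 2]{Farber89a}, and these can be enumerated in polynomial time. I would iterate over every maximal clique $K$ of $G$ as a candidate for $K^\star$. For each such $K$, run the balanced vertex cut approximation of~\cite{FeigeHL08} on $G-K$, asking for a set $Y$ whose removal leaves all components of $(G-K)-Y$ of size at most $\frac23 (n-|K|)$ — or, to be safe with the constant, rescale the balance parameter so that the output gives components of size at most $3n/4$ in $G$. When $K = K^\star$, the set $X^\star$ is a feasible (if not balanced-optimal) cut of $G-K^\star$ with at most $k$ vertices witnessing that the optimum balanced cut of $G-K^\star$ has size $\le k$; hence the $\Oh(\sqrt{\log\textsc{opt}})$-approximation returns $Y$ with $|Y| = \Oh(k\sqrt{\log k})$ (using $\textsc{opt}\le k$). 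Set $Z := K \cup Y$, so $|Z\setminus K| = |Y| = \Oh(k\sqrt{\log k})$, $G[K]$ is complete, and every component of $G-Z$ has size $\le 3n/4$. Among all choices of $K$ we keep the first one for which the returned cut is small enough (say $|Y| \le c\,k\sqrt{\log k}$ for the appropriate constant $c$ from~\cite{FeigeHL08}); if no choice of $K$ yields such a cut, we correctly conclude that $(G,k)$ is a no-instance, since the yes-case guarantees one exists when $K=K^\star$.

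A couple of routine points need care. The approximation algorithm of~\cite{FeigeHL08} is stated for the bicriteria balanced separator problem, so I would cite the precise guarantee: it finds a cut with at most $2n/3$ (or any fixed fraction, after adjusting constants) vertices on each side whose size is within an $\Oh(\sqrt{\log\textsc{opt}})$ factor of the minimum size of a cut achieving a possibly different fixed balance; the standard slack in the two balance constants is what lets me go from the $n/2$ bound in the structural argument to the $3n/4$ bound in the statement. Also, $\textsc{opt}$ here is the size of the optimal \emph{cut}, which is at most $k$ in the yes-case, so $\sqrt{\log\textsc{opt}} \le \sqrt{\log k}$ up to constants. The main obstacle — really the only nontrivial point — is matching the balance parameters and the definition of "optimum" in the FHL guarantee to what the structural lemma delivers; once that bookkeeping is set up, the rest is the clique-enumeration trick plus a direct application of the two cited results.
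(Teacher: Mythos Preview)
Your approach is the same as the paper's, and the overall plan is correct. Two technical points need sharpening, and the second is not just bookkeeping.

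First, $K^\star$ is a maximal clique of $G-X^\star$, not of $G$, so the iteration ``when $K=K^\star$'' may never occur. The fix is easy and is what the paper does: some maximal clique $K$ of $G$ contains $K^\star$, and since removing $K\supseteq K^\star$ only makes the components of $G-(X^\star\cup K)$ smaller, the argument goes through for that $K$.

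Second, the balance-parameter issue you defer cannot be resolved by rescaling alone. The structural fact gives components of $G-(X^\star\cup K)$ of size at most $n/2$, but for $X^\star$ to be an $\alpha$-balanced cut of $G-K$ (for any fixed $\alpha<1$) you need $n/2\le \alpha\,|V(G)\setminus K|$, i.e., $|K|\le n(1-\tfrac{1}{2\alpha})$. If $|K|$ is close to $n$, the ratio $\tfrac{n/2}{n-|K|}$ can be arbitrarily close to $1$, so $X^\star$ is not an $\alpha$-balanced cut of $G-K$ for any fixed $\alpha$, and the FHL guarantee gives you nothing. The paper handles this by a separate case: if \emph{any} maximal clique $K$ of $G$ has $|K|\ge n/4$, simply return $Z=K$ (with $Z\setminus K=\emptyset$), since then $|V(G)\setminus K|\le 3n/4$ and every component of $G-K$ trivially meets the size bound. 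Once this case is dispatched, every maximal clique satisfies $|K|<n/4$, hence $|V(G)\setminus K|>3n/4$ and $n/2\le \tfrac{2}{3}\cdot\tfrac{3}{4}n\le \tfrac{2}{3}|V(G)\setminus K|$; now $X^\star$ is a genuine $\tfrac{2}{3}$-balanced cut of $G-K$ and FHL returns a cut of size $\Oh(k\sqrt{\log k})$ as needed.
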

\begin{proof}
Since $G$ is $C_4$-free, $G$ has at most $n^2$ maximal cliques~\cite[Proposition 2]{Farber89a}.
Furthermore, we can enumerate all maximal cliques
in $G$ with polynomial delay~\cite{TsukiyamaIAS77}.
If there exists a maximal clique $K$ of size at least $n/4$, we can simply return
$Z=K$, as $|V(G) \setminus K| \leq 3n/4$.
Otherwise, for every maximal clique $K$ of $G$,
we invoke the approximation algorithm for balanced vertex cut of~\cite{FeigeHL08}
on the graph $G-K$.
If for some clique $K$ the algorithm returns a cut $Y$ of size $\Oh(k \sqrt{\log k})$, we output $Z := K \cup Y$ and $K$. Clearly, this output satisfies the desired properties.

It remains to argue that if $(G,k)$ is a yes-instance, then the algorithm finds
a desired cut.
Let $X$ be a solution to \ChVD on $G$ of size at most $k$.
By Observation~\ref{obs:central-clique},
the chordal graph $G-X$ contains a maximal clique $K_0$
such that every connected component of $G-(X \cup K_0)$ has at most $(n - |X|) / 2 \leq n/2$
vertices.
Consider the iteration of the algorithm for a maximal clique $K \supseteq K_0$ in~$G$.
%\todo{Maximality implies~$K = K_0$.} MP: I disagree. There are two different maximalities here: in G and in G-X. Bart: You're right. I added ``in G'' to the previous sentence to prevent this confusion for future readers.
Note that $X$ is a balanced cut of $G-K$ of size at most $k$:
every connected component of $G-(X \cup K)$ is of size at most
$$\frac{n}{2} = \frac{2}{3} \cdot \frac{3}{4}n \leq \frac{2}{3} |V(G) \setminus K|.$$
Consequently, for this choice of $K$, the algorithm of~\cite{FeigeHL08} finds
a balanced vertex cut of size $\Oh(k \sqrt{\log k})$. 
This finishes the proof of the lemma.
\end{proof}

By iteratively applying Lemma~\ref{lem:apx:cut}, we obtain the following.
\begin{lemma}\label{lem:apx:cut2}
Given a \ChVD instance $(G,k)$, we can in polynomial time either
correctly conclude that it is a no-instance, or find a partition
$V(G) = A_0 \uplus K_1 \uplus K_2 \uplus \ldots \uplus K_\ell \uplus X_0$
such that $G[A_0]$ is chordal, $G[K_i]$ is complete for every $1 \leq i \leq \ell$,
$\ell = \Oh(k \log n)$, and
$|X| = \Oh(\ell k \sqrt{\log k}) = \Oh(k^2 \sqrt{\log k} \log n)$.
\end{lemma}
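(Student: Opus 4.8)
The plan is to iterate Lemma~\ref{lem:apx:cut} in a divide-and-conquer fashion, maintaining a collection of ``active'' vertex sets that still need to be decomposed, together with a global budget for the cut vertices. First I would set up the recursion: start with the whole graph $G$ as the single active set. At each step, pick an active set $U$ such that $G[U]$ is not chordal (if every active set induces a chordal graph, we stop and are done). Apply Lemma~\ref{lem:apx:cut} to the instance $(G[U], k)$. If it reports a no-instance, then since $G[U]$ is an induced subgraph of $G$ and \ChVD is monotone under induced subgraphs, $(G,k)$ is a no-instance too, and we report that. Otherwise we obtain $Z_U \subseteq U$ with a clique part $K_U \subseteq Z_U$ and $|Z_U \setminus K_U| = \Oh(k\sqrt{\log k})$, such that every connected component of $G[U] - Z_U$ has size at most $\tfrac{3}{4}|U|$. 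We add the vertices of $Z_U \setminus K_U$ to $X_0$, record $K_U$ as one of the cliques $K_i$, and replace $U$ in the active collection by the vertex sets of the connected components of $G[U] - Z_U$.

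The key accounting steps are the recursion depth and the total cost. For the depth: since every component produced from $U$ has size at most $\tfrac34 |U| \le \tfrac34 n$, any descendant of the root after $d$ levels of recursion has size at most $(3/4)^d n$, so after $\Oh(\log n)$ levels every active set is a singleton, hence trivially chordal. (More carefully, one wants the number of ``successful'' cuts, i.e. the number of cliques $\ell$, to be $\Oh(k\log n)$; for this I would use a charging argument in the yes-instance case, analogous to the one sketched in the overview of Section~\ref{s:approximation}: fix a minimum solution $X$ of size $\le k$; each cut charges $\Oh(1)$ to each vertex of $X$ it ``separates off'', and each vertex of $X$ can be charged only $\Oh(\log n)$ times because the active set containing it shrinks by a constant factor each time. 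A cut that does not separate off any vertex of $X$ leaves $G[U]\setminus X$ split into chordal pieces but still may be nonchordal through $X$; the cleanest fix is to note that whenever $G[U]$ is nonchordal, $U\cap X\neq\emptyset$ unless $X$ already hit all holes of $G[U]$, and a hole of $G[U]$ living entirely outside $X$ contradicts $X$ being a solution — so every cut we perform on a nonchordal $G[U]$ does in fact have $U\cap X\neq\emptyset$, and the charge goes through.) This yields $\ell = \Oh(k\log n)$ cliques. The total number of vertices put into $X_0$ is then $\sum_U |Z_U\setminus K_U| = \Oh(\ell\cdot k\sqrt{\log k}) = \Oh(k^2\sqrt{\log k}\,\log n)$, using the standing assumption $\log n \le k\log k$ only if a cleaner final bound is desired.

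Finally I would check that the output is of the claimed form: the vertex sets that survive as active singletons (or, more efficiently, the components that are already chordal when we stop recursing on them) are unioned into $A_0$, which induces a chordal graph since each such piece is an induced subgraph of $G$ that is chordal and distinct pieces are nonadjacent (each cut removed a full separator $Z_U$ between the pieces it produced). The cliques $K_1,\dots,K_\ell$ are exactly the recorded $K_U$'s — each induces a complete subgraph by Lemma~\ref{lem:apx:cut} — and $X_0$ collects all the $Z_U\setminus K_U$'s. The $K_i$'s and $A_0$ are pairwise disjoint because once a vertex enters some $K_U$ or the final $A_0$ it is removed from all active sets. Running time is polynomial since each step calls the polynomial-time algorithm of Lemma~\ref{lem:apx:cut} and there are $\Oh(n)$ active sets created in total (the recursion tree has $\Oh(n)$ leaves as sizes strictly decrease). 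The main obstacle I anticipate is making the charging argument for $\ell = \Oh(k\log n)$ fully rigorous, specifically handling cuts on nonchordal $G[U]$ for which one must argue $U$ meets the fixed solution $X$; the observation that a hole disjoint from $X$ would contradict optimality of $X$ is what makes this work, but it has to be stated carefully since after earlier cuts we are looking at induced subgraphs of $G$, not $G$ itself, and the same fixed $X$ must serve all of them.
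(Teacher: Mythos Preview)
Your proposal is correct and follows essentially the same approach as the paper: iterate Lemma~\ref{lem:apx:cut} on nonchordal connected components, peel off a clique and a small cut each time, and bound the number of iterations $\ell$ by a charging argument against a fixed size-$k$ solution $X$ (each cut on a nonchordal piece $U$ forces $U\cap X\neq\emptyset$, and the piece containing the charged vertex shrinks by a constant factor, so each $v\in X$ is charged $\Oh(\log n)$ times). Your worry at the end is unfounded: a hole in an induced subgraph $G[U]$ is already a hole in $G$, so the fixed solution $X$ hits it regardless of what earlier cuts removed; there is nothing delicate here.
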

\begin{proof}
Consider the following iterative procedure. We start with $H := G$, $\ell := 0$,
and $X_0 := \emptyset$.
As long as $H$ contains a connected component $C$ that is not chordal,
we apply the algorithm of Lemma~\ref{lem:apx:cut} to a \ChVD instance $(C,k)$. 
In case the algorithm returns that $(C,k)$ is a no-instance,
we can return the same answer for
$(G,k)$, as we will maintain the invariant that $H$ is an induced subgraph of $G$.
Otherwise, if a pair $(Z,K)$ is returned, we take $K$ as a next clique $K_i$,
insert $Z \setminus K$ into $X_0$, and delete $Z$ from $H$.
The algorithm terminates when $H$ is a chordal graph; we then put $A_0 := V(H)$.
Clearly, the algorithm outputs a partition
$V(G) = A_0 \uplus K_1 \uplus K_2 \uplus \ldots \uplus K_\ell \uplus X_0$ as desired,
and $|X| = \Oh(\ell k \sqrt{\log k})$. 
It remains to argue that if $(G,k)$ is a yes-instance, then the algorithm always
terminates after $\Oh(k \log n)$ steps, that is, $\ell = \Oh(k \log n)$.

Consider a solution $X$ to \ChVD on $(G,k)$. Whenever the algorithm
applies Lemma~\ref{lem:apx:cut} to a component $C$, we have $X \cap V(C) \neq \emptyset$, since $C$ is not chordal. We charge this step of the algorithm
to an arbitrarily chosen vertex $v \in X \cap V(C)$.
Note that after the algorithm finds $Z \subseteq V(C)$, we have either $v \in Z$ or $v$ lies in a connected component of $C-Z$ of size at most $2|V(C)|/3$. Consequently,
every vertex $v \in X$ is charged at most $\log_{3/2} n$ times, and in a yes-instance
the algorithm always terminates after $k \log_{3/2} n$ steps. (That is,
if it runs longer than expected, we terminate the algorithm and claim that $(G,k)$ is a no-instance.)
\end{proof}

We can now conclude the proof of Theorem~\ref{thm:apx}.
Let $(G,k)$ be a \ChVD instance.
First, if $\log n > k \log k$, the exact
FPT algorithm of Cao and Marx~\cite{CaoM15} with runtime~$2^{\Oh(k \log k)} \cdot n^{\Oh(1)}$ actually runs in polynomial time in~$n$, and we
can just solve the instance. 
%\todo{Same argument shows that if~$\log n > k^\epsilon$ for any positive~$\epsilon > 0$, then FPT alg runs in poly-time. This can improve solution size to~$k^{3 + \epsilon} polylog(k)$. Worth mentioning?} MP: I disagree. I think you get quasipolynomial time then. Bart: Yes, you're right.
Otherwise, we apply the algorithm of Lemma~\ref{lem:apx:cut2},
either concluding that $(G,k)$ is a no-instance, or finding the partition
$V(G) = A_0 \uplus K_1 \uplus K_2 \uplus \ldots \uplus K_\ell \uplus X_0$
with $\ell = \Oh(k \log n) = \Oh(k^2 \log k)$ and $|X_0| = \Oh(k^2 \sqrt{\log k} \log n) = \Oh(k^3 \log^{3/2} k)$.
Then, for $i=1,2,\ldots,\ell$, we construct sets $A_i$ and $X_i$ as follows.
We apply the algorithm of Lemma~\ref{lem:apx:clique}
to a graph $G[A_{i-1} \cup K_i]$ with the clique $K_i$ and the fractional solution
$\LPsol$ restricted to $A_{i-1} \cup K_i$. The algorithm returns
a solution $X_i$ to $\ChVD$ on $G[A_{i-1} \cup K_i]$ of size $\Oh(|\LPsol|^2 \log |\LPsol|) = \Oh(k^2 \log k)$; we put $A_i := (A_{i-1} \cup K_i) \setminus X_i$.

Note that every $G[A_i]$ is chordal, and 
$$V(G) \setminus A_\ell = X_0 \cup X_1 \cup \ldots \cup X_\ell.$$
Thus, we can return a solution $X := X_0 \cup X_1 \cup \ldots \cup X_\ell$
which is of size
$$|X| = |X_0| + \sum_{i=1}^\ell |X_i| = \Oh(k^3 \log^{3/2} k) + \Oh(k^2 \log k) \cdot \Oh(k^2 \log k) = \Oh(k^4 \log^2 k).$$
This concludes the proof of Theorem~\ref{thm:apx}.

\section{Kernelization algorithm}\label{s:kernelization}
The next theorem summarizes the data reduction procedure for the unannotated problem and shows that the annotations can be simulated by small gadgets.

\begin{theorem*}
\chvdkerneltheorem
\end{theorem*}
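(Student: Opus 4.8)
The plan is to assemble the final kernel from three pieces, almost all of whose content is already established: the annotation step of Lemma~\ref{lem:annotate}, the polynomial kernel for \AChVD of Theorem~\ref{thm:AChVD-kernel}, and a simple gadget that strips the annotations from the reduced instance. Concretely, given $(G,k)$ together with $M_0$ such that $G-M_0$ is chordal, first run the algorithm of Lemma~\ref{lem:annotate}. If it reports that $(G,k)$ is a no-instance, output a fixed constant-size no-instance of \ChVD. Otherwise we obtain an equivalent \AChVD instance $(G_1,k_1,M,E^h=\emptyset)$ with $k_1\le k$ and $|M|=\Oh(k|M_0|)$. Apply Theorem~\ref{thm:AChVD-kernel} to it, obtaining an equivalent \AChVD instance $(G_2,k_1,M,F^h)$ with $E^h\subseteq F^h\subseteq E(G_2[M])$ and $|V(G_2)|=\Oh(k_1^{16}|M|^{29})=\Oh\!\big(k^{16}\cdot(k|M_0|)^{29}\big)=\Oh(k^{45}|M_0|^{29})$.

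It remains to eliminate the annotation set $F^h$. By the definition of an \AChVD instance, every pair $\{x,y\}\in F^h$ is an edge of $G_2$. Build $G_3$ from $G_2$ by adding, for each $\{x,y\}\in F^h$, a family of $k_1+1$ pairwise internally vertex-disjoint ``ears'' $x - a_i - b_i - y$ using two fresh degree-two vertices each, so that $\{x,a_i,b_i,y\}$ together with the existing edge $xy$ induces a $C_4$ in $G_3$; the new instance is $(G_3,k_1)$. Since $a_i$ and $b_i$ each have degree two, any hole of $G_3$ that uses a vertex of the $i$-th ear of $\{x,y\}$ must traverse $x,a_i,b_i,y$ consecutively and close through the edge $xy$ (closing through the rest of the graph would turn $xy$ into a chord); hence the only new holes are these $C_4$'s, and $G_3[V(G_2)]=G_2$.

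A short two-way argument then establishes that $(G_3,k_1)$ is equivalent to $(G_2,k_1,M,F^h)$. A size-$\le k_1$ solution $X'$ to \ChVD on $G_3$ cannot meet all $k_1+1$ internally disjoint ears of a forced pair $\{x,y\}$, so it must contain $x$ or $y$; consequently $X:=X'\cap V(G_2)$ satisfies the $F^h$-constraints, and it keeps $G_2=G_3[V(G_2)]$ chordal since $G_3-X'$ is chordal, so $X$ is a solution to the \AChVD instance. Conversely, a solution $X$ to the \AChVD instance contains an endpoint of every forced pair, hence destroys all gadget $C_4$'s, and because $G_2$ is induced in $G_3$ it leaves $G_3-X$ chordal as well, so $X$ solves $(G_3,k_1)$. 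Chaining the three equivalences (Lemma~\ref{lem:annotate}, Theorem~\ref{thm:AChVD-kernel}, the gadget) shows $(G,k)$ and $(G',k'):=(G_3,k_1)$ are equivalent, with $k'=k_1\le k$ (and $k'\le k$ trivially in the no-instance branch).

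For the size and running time: the gadgets add $2(k_1+1)|F^h|=\Oh(k|M|^2)=\Oh(k^3|M_0|^2)$ vertices, which is dominated by $\Oh(k^{45}|M_0|^{29})$, so $|V(G_3)|=\Oh(k^{45}|M_0|^{29})$ as required. Lemma~\ref{lem:annotate}, Theorem~\ref{thm:AChVD-kernel}, and the gadget construction (which processes at most $\binom{|M|}{2}$ forced pairs) all run in polynomial time, so the whole procedure does. There is no real obstacle here beyond verifying that the ear gadget introduces no spurious holes and that both directions of the equivalence go through cleanly; all the substantive work is encapsulated in the two results just invoked, together with the approximation algorithm of Theorem~\ref{thm:apx} used to supply $M_0$ when stating the self-contained kernel for \ChVD.
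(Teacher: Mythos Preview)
Your proof is correct and follows the same three-step assembly as the paper (Lemma~\ref{lem:annotate}, then Theorem~\ref{thm:AChVD-kernel}, then a $C_4$-gadget to strip the annotations). The only variation is in the gadget: the paper attaches a \emph{single} ear $x\!-\!x'\!-\!y'\!-\!y$ per forced pair and argues by replacement that a minimum solution can always be taken disjoint from the new vertices, whereas you attach $k_1+1$ parallel ears and argue by counting; both variants add at most $\Oh(k|M|^2)$ vertices, which is absorbed by the $\Oh(k^{45}|M_0|^{29})$ bound.
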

\begin{proof}
Given a \ChVD instance~$(G,k)$ and a modulator~$M_0$, apply Lemma~\ref{lem:annotate}. If it reports that~$(G,k)$ is a no-instance, output a constant-size no-instance of \ChVD. Otherwise, we obtain an equivalent instance~$(G',k',M,E^h)$ of \AChVD with~$k' \leq k$ and~$|M| \in \Oh(k \cdot |M_0|)$. Apply Theorem~\ref{thm:AChVD-kernel} to~$(G',k',M,E^h)$ to obtain an equivalent instance~$(G'',k',M,F^h)$ of \AChVD with~$\Oh(k^{16}|M|^{29}) \in \Oh(k^{45}|M_0|^{29})$ vertices. To turn this into an instance of the original \ChVD problem, we simulate the annotations. 

By definition of the annotated problem, we have~$xy \in E(G''[M])$ for each pair~$\{x,y\} \in F^h$ and therefore~$x,y \in M$. For each pair~$\{x,y\} \in F^h$, we simulate the annotation by adding two new vertices~$x',y'$ and the edges~$xx', x'y'$, and~$y'y$ to~$G''$, thereby creating a~$C_4$. As all holes passing through~$x'$ or~$y'$ also pass through~$x$ and~$y$, there is always a minimum chordal modulator in the resulting graph that does not contain any of the newly added vertices. Moreover, since we introduce a~$C_4$ containing~$x$ and~$y$ for each annotated pair~$\{x,y\} \in F^h$, such solutions hit at least one vertex of each annotated pair. Letting~$G^*$ denote the graph resulting from adding such $C_4$'s for all annotated pairs, we therefore establish that if~$G^*$ has a chordal modulator of size~$k'$, then~$G''$ has a solution of size~$k'$ that hits all annotated pairs. Conversely, any solution~$X$ to the annotated problem on~$G''$ is a chordal deletion set in~$G^*$, since in~$G^* - X$ the newly added vertices form a pendant path and are therefore not part of any hole. Consequently, the \ChVD instance~$(G^*, k')$ is equivalent to the annotated instance and therefore to the original input~$(G,k)$. Since the definition of \AChVD ensures that we only annotate pairs of vertices from~$M$, we add at most~$|M|^2$ vertices. Hence the asymptotic size bound for~$G^*$ is dominated by the size of~$G''$. The pair~$(G^*, k')$ is given as the output of the procedure.
\end{proof}

Combining the previous theorem with our approximation algorithm in Theorem~\ref{thm:apx}, we obtain a polynomial kernel for \ChVD. The size bound follows from plugging in~$\Oh(k^4 \log^2 k)$ for the modulator size.

\begin{corollary} \label{cor:kernel}
\ChVDlong has a kernel with~$\Oh(k^{161} \log^{58} k)$ vertices.
\end{corollary}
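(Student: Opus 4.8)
The plan is to chain together the approximation algorithm of Theorem~\ref{thm:apx} with the kernelization procedure of Theorem~\ref{thm:chvd:kernel}, using the output of the former as the modulator required by the latter. First I would run the algorithm of Theorem~\ref{thm:apx} on the input instance $(G,k)$. If it reports that $(G,k)$ is a no-instance, the kernelization simply outputs a fixed constant-size no-instance of \ChVD. Otherwise we obtain a set $X \subseteq V(G)$ with $|X| = \Oh(k^4 \log^2 k)$ such that $G - X$ is chordal; this is precisely a modulator of the kind demanded by Theorem~\ref{thm:chvd:kernel}, so I would set $M_0 := X$.

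Next I would feed $(G,k)$ together with $M_0$ into the algorithm of Theorem~\ref{thm:chvd:kernel}, obtaining an equivalent instance $(G',k')$ with $k' \leq k$ and with $\Oh(k^{45} \cdot |M_0|^{29})$ vertices. It then only remains to substitute the bound on $|M_0|$: since $|M_0| = \Oh(k^4 \log^2 k)$, we get $|M_0|^{29} = \Oh(k^{116} \log^{58} k)$, so the vertex count is $\Oh(k^{45} \cdot k^{116} \log^{58} k) = \Oh(k^{161} \log^{58} k)$, as claimed. Both steps run in polynomial time, and equivalence of $(G',k')$ with the original input follows by transitivity of the equivalences supplied by the two theorems.

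There is no genuine obstacle left at this stage: all of the conceptual and technical difficulty is concentrated in the proofs of Theorems~\ref{thm:apx} and~\ref{thm:chvd:kernel}, which are established earlier. The only points requiring any care are bookkeeping ones --- correctly handling the no-instance branch of the approximation algorithm, and carrying out the polynomial arithmetic on the exponents --- and none of this needs further structural insight.
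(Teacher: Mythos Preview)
Your proposal is correct and matches the paper's own approach exactly: the corollary is obtained by feeding the approximate modulator from Theorem~\ref{thm:apx} into Theorem~\ref{thm:chvd:kernel} and substituting $|M_0| = \Oh(k^4 \log^2 k)$ into the bound $\Oh(k^{45}|M_0|^{29})$.
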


\section{Conclusion} \label{s:conclusion}
We presented a polynomial kernel for \ChVDlong based on new graph-theoretic insights and a $\poly(\opt)$-approximation algorithm. Our work raises several questions for future research. The most obvious questions concern the kernel size and the approximation factor; both have room for significant improvements. Since \ChVD is a vertex-deletion problem into an efficiently recognizable nontrivial and hereditary graph class, by the results of Dell and van Melkebeek~\cite{DellM14} there is no kernel for \ChVD that can be encoded in~$\Oh(k^{2-\epsilon})$ bits unless \containment. The bound of~$\Ohtilde(k^{161})$ vertices given by Corollary~\ref{cor:kernel} vertices is very far from this, and we expect the optimal kernel to have a size bound with degree at most ten. As Theorem~\ref{thm:chvd:kernel} shows, a better approximation ratio will directly translate into a better kernel size.

From a purely graph-theoretical perspective, one could investigate whether Erd\H{o}s-P\'{o}sa type packing/covering duality holds for holes in general graphs, rather than for holes in nearly-chordal graphs as treated in Lemma~\ref{lemma:algorithm:packing:vs:covering}. Is there a function~$f \colon \mathbb{N} \to \mathbb{N}$ such that for any~$k$, an arbitrary graph~$G$ either has at least~$k$ vertex-disjoint holes, or has a set of~$f(k)$ vertices that intersects all holes? A final open problem is to determine whether the \textsc{Interval Vertex Deletion} problem admits a polynomial kernel.

% Maybe: mention that, even though the kernel we get is somewhat large because we need a large approximation, using co-nondeterministic extensions of the kernelization lowerbound framework you can show that no kernel lower bound for \ChVD parameterized by~$k$ can be proven in that framework that is higher than the bound we obtain for \ChVD parameterized by the size of a given modulator.

\bibliographystyle{abbrvurl}
\bibliography{references}

\end{document}